\documentclass[fontsize=11pt,a4paper,DIV=12]{scrartcl}

\usepackage[utf8]{inputenc}
\usepackage[T1]{fontenc}

\usepackage{graphicx}
\usepackage{xcolor}
\usepackage{hyperref}
\usepackage{subcaption}
\usepackage{amsmath}
\usepackage{amssymb}
\usepackage{amsthm}
\usepackage[ruled,noend]{algorithm2e}
\usepackage{booktabs}
\usepackage{longtable}

\usepackage{thmtools}
\usepackage{thm-restate}

\newtheorem{theorem}{Theorem}
\newtheorem{corollary}{Corollary}
\newtheorem{lemma}{Lemma}
\newtheorem{proposition}{Proposition}
\newtheorem{question}{Question}

\SetKwComment{Comment}{/* }{ */}

\def\REP#1+#2!{[#1|#2]}
\def\dd{\delta}
\def\Ret{\vskip2pt\textbf{return} }

\def\BB{\mathcal{B}}
\def\MM{\mathcal{M}}
\DeclareMathOperator{\Prob}{Prob}

\DeclareRobustCommand*{\ora}{\overrightarrow}

\def\inst#1{$^{#1}$}

\newcommand\restr[2]{{
  \left.\kern-\nulldelimiterspace 
  #1 
  \littletaller 
  \right|_{#2} 
  }}

\newcommand{\littletaller}{\mathchoice{\vphantom{\big|}}{}{}{}}

\widowpenalty10000
\clubpenalty10000
\interfootnotelinepenalty=10000

\title{Block coupling and rapidly mixing $k$-heights}
\date{}
\author{Stefan Felsner\inst{1} \and Daniel Heldt\inst{2} \and Sandro Roch\inst{1} \and Peter Winkler\inst{3}}

\begin{document}

\maketitle

\begin{center}
\vskip-12mm
{\footnotesize
\inst{1} 
Institut f\"ur Mathematik, 
Technische Universit\"at Berlin, Germany
\medskip

\inst{2} 
helis GmbH, Dortmund, Germany
\medskip

\inst{3} 
Dartmouth College, Hanover, NH, USA
}
\end{center}

\begin{abstract}
  A~\emph{$k$-height} on a graph~$G=(V, E)$ is an
  assignment~\mbox{$V\to\{0, \ldots, k\}$} such that the value on ajacent vertices
  differs by at most $1$. We study the Markov chain on $k$-heights that in
  each step selects a vertex at random, and, if admissible, increases or
  decreases the value at this vertex by one. In the cases \mbox{of~$2$-heights} and
  $3$-heights we show that this Markov chain is rapidly mixing on certain
  families of grid-like graphs and on planar cubic~$3$-connected graphs.

  The result is based on a novel technique called \textit{block
    coupling}, which is derived from the well-established monotone coupling
  approach. This technique may also be effective when analyzing 
  other Markov chains that operate on configurations of spin systems that form a distributive lattice. It is therefore of independent interest.
\end{abstract}

\section{Introduction}

Markov chains are a generic approach to randomly sample an element from a
collection of combinatorial objects. Examples where Markov chains have been
proven useful include linear extensions
\cite{KaKa91,felsnerWernisch97,BuDy99,DBWi04,huber2006}, eulerian orientations
\cite{winkler96}, graph colorings \cite{friezeVigoda07} and chambers in
hyperplane arrangements \cite{brown98}. See \cite{kannan94} or \cite{LPW17} for more
examples. One is usually interested in \textit{rapidly mixing} Markov chains,
i.e., Markov chains which converge fast towards their stationary
distribution. Here we study a natural Markov chain $\MM$ that operates
on the set of so called \textit{$k$-heights} of some fixed graph~$G$. We
identify a condition on a family $\mathcal{G}$ of graphs which implies
that $\MM$ is rapidly mixing on graphs in $\mathcal{G}$.

For a fixed graph $G=(V, E)$ and a fixed integral upper bound
$k\in\mathbb{N}$, a \textit{$k$-height} is an assignment
$\varphi: V\to\{0, \ldots, k\}$ such that
$\lvert \varphi(v) - \varphi(w)\rvert\leq 1$ for every edge $\{v,w\}\in
E$. See Figure \ref{fig:example_2_height} for an example. In the case in which
$G$ is a grid-like planar graph one may consider a $k$-height as height values
of grid points or as a landscape satisfying a certain smoothness
condition. The set of $k$-heights can also be described as feasible
configurations of a certain \textit{spin system} with uniform weights; in
Section~\ref{sec:glauber_dynamics} we discuss this connection.

\begin{figure}[htb]
\centering
\includegraphics{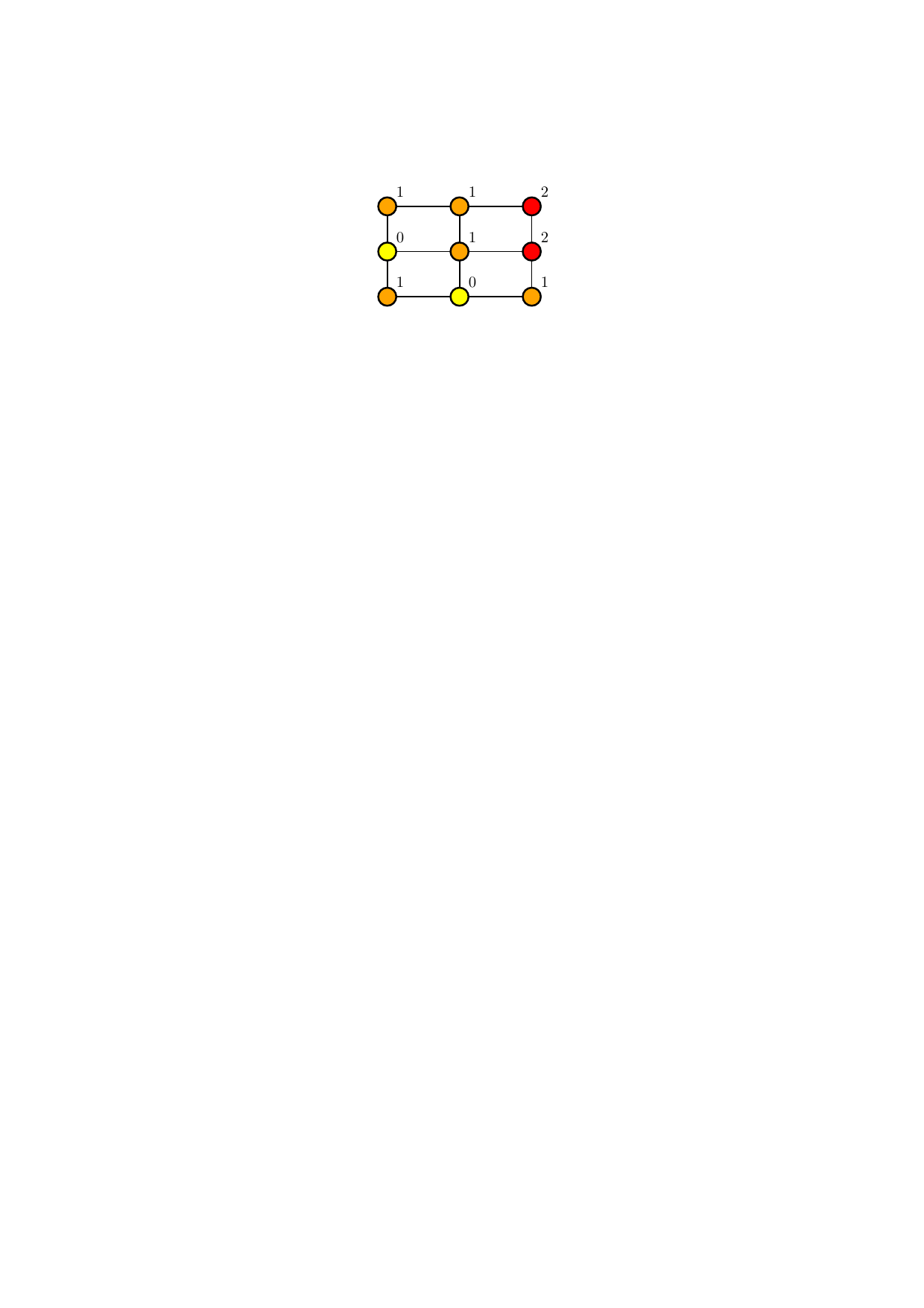}
\caption{Example of a $2$-height. Colors visualize the values as a heat map.}
\label{fig:example_2_height}
\end{figure}

\subsection{Obtaining \texorpdfstring{$k$-heights}{k-heights}
      from \texorpdfstring{$\alpha$-orientations}{alpha-orientations}}

A motivation for studying $k$-heights stems from their connection to
\mbox{\textit{$\alpha$-orientations}} as studied in~\cite{felsner04}. Given an embedded plane graph~$G=(V, E)$ and a
mapping~$\alpha: V\to \mathbb{N}$, an \textit{$\alpha$-orientation} of $G$ is
an orientation of $E$ in which each vertex $v\in V$ has out-degree
$\operatorname{outdeg}(v) = \alpha(v)$. In the example shown in
Figure~\ref{fig:example_alpha_orientation} we have $\alpha(v):=2$ for all
$v\in V$. 

The set of all $\alpha$-orientations of a planar graph $G$ forms a distributive
lattice. There is a unique minimum $\alpha$-orientation
$\overrightarrow{E}_{\text{min}}$ in which there is no bounded face whose
bounding edges form a counterclockwise oriented
cycle. The minimum $\overrightarrow{E}_{\text{min}}$ can be reached from every
$\alpha$-orientation by a sequence of \textit{flips}. A flip\footnote{Here we
  implicitly assume that $\alpha$ and $G$ are such that there are no rigid edges.
  Without this assumption it may be necessary to flip non-facial cycles,
  see~\cite{felsner04}.}
reverts the counterclockwise oriented bounding edges of a face into clockwise
orientation, see the example in Figure~\ref{fig:example_alpha_orientation}.

\begin{figure}[htb]
    \centering
    \begin{subfigure}[b]{.32\textwidth}
	\centering
	\includegraphics[page=1]{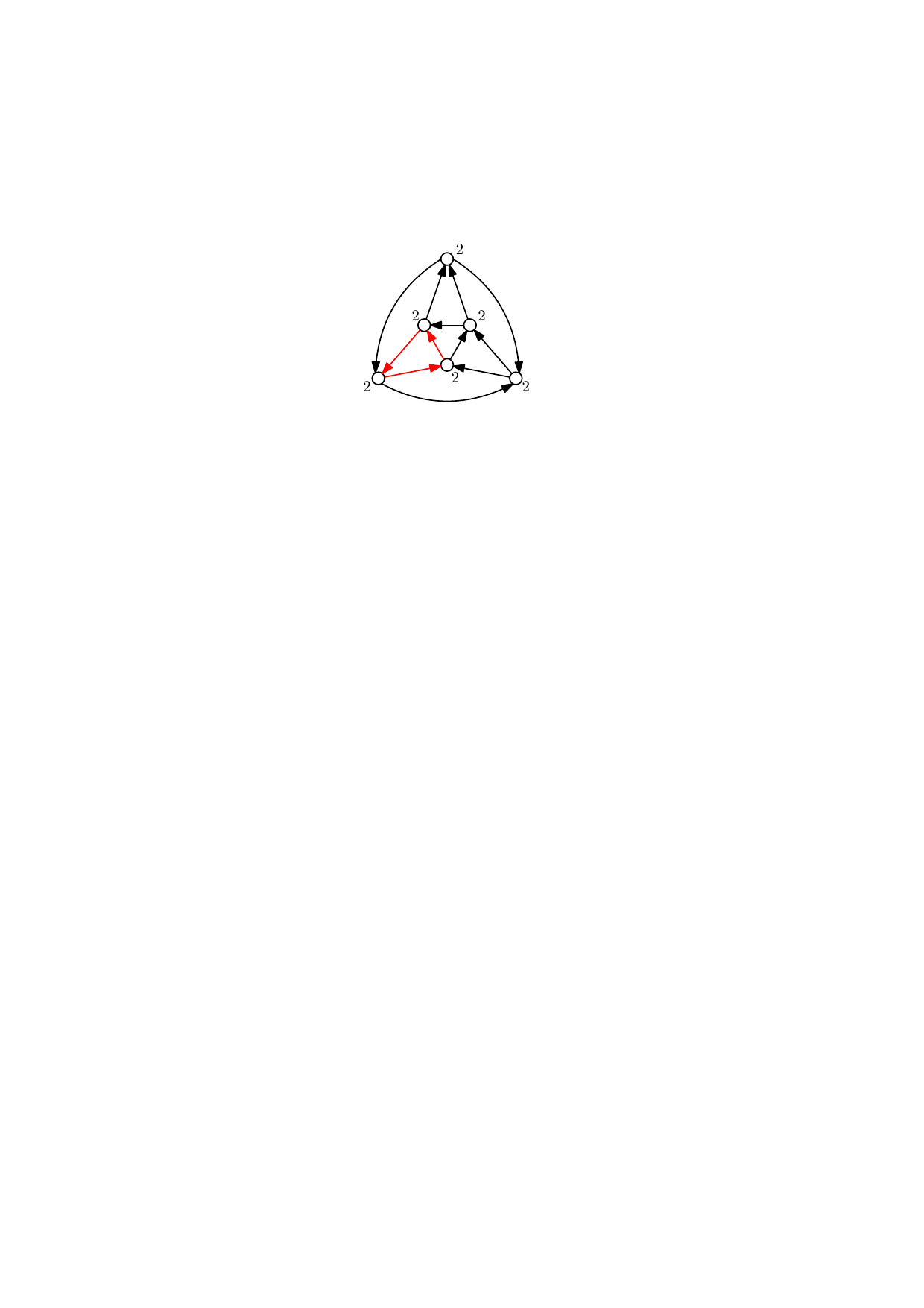}
	\caption{}
	\label{fig:example_alpha_orientation_a}
    \end{subfigure}
    \hfill
    \begin{subfigure}[b]{.32\textwidth}
        \centering
	\includegraphics[page=2]{figures/example_alpha_orientation.pdf}
	\caption{}
	\label{fig:example_alpha_orientation_b}
    \end{subfigure}
    \hfill
    \begin{subfigure}[b]{.32\textwidth}
	\centering
	\includegraphics[page=3]{figures/example_alpha_orientation.pdf}
	\caption{}
	\label{fig:example_alpha_orientation_c}
    \end{subfigure}		
    \caption{(a) Example of an $\alpha$-orientation $\ora{E}_0$. (b)
      the $\alpha$-orientation obtained by flipping the face bounded by
      the red arcs. (c) Minimal
      \mbox{$\alpha$-orientation}~$\ora{E}_{\text{min}}$. Red
      numbers record the number of face flips needed to reach $\ora{E}_0$.}
    \label{fig:example_alpha_orientation}
\end{figure}

The numbers of flips on each bounded face that are necessary to reach
$\overrightarrow{E}_{\text{min}}$ from $\overrightarrow{E}_0$ are independent
of the choice of the flip sequence and uniquely determine
$\overrightarrow{E}_0$. For a sufficiently large $k\in\mathbb{N}$, they form a
$k$-height of the dual graph of~$G$, in which the vertex corresponding to the
unbounded face has value $0$; see
Figure~\ref{fig:example_alpha_orientation_c}.

\subsection{Rapidly mixing Markov chains on
  \texorpdfstring{$k$-heights}{k-heights}}

The state space of the \textit{up/down Markov chain} $\MM$ is the set
of $k$-heights of $G$. A transition of~$\MM$ depends on a vertex
$\Tilde{v}\in V$ and a direction $\triangle\in\{-1, +1\}$; both are chosen uniformly at
random. According to $\triangle$, the value at $\Tilde{v}$ is decremented or
incremented. If the result is again a valid $k$-height, it is accepted as next
state; otherwise, $\MM$ stays at the same state. It is easy to see
that the up/down Markov chain $\MM$ is aperiodic, irreducible and
symmetric, hence it converges towards the uniform distribution on the set of
$k$-heights of $G$.

We could not prove that $\MM$ is rapidly mixing via a direct
application of one of the standard methods such as coupling or canonical paths
(cf. \cite{guruswami2016}). Therefore, we introduce an auxiliary Markov chain
$\MM_\BB$. It depends on some family
$\BB\subset\mathcal{P}(V)$ of \textit{blocks}. A block is a set of
vertices; together they cover $V$, i.e., each vertex $v\in V$ is contained in
at least one block $B\in\BB$. The chain $\MM_\BB$ also operates
on the set of $k$-heights, but in each transition it does not just alter the
value at a single vertex. Instead, it resamples the assgined values on an
entire block at once.

A transition of $\MM_\BB$ can be simulated by a sequence of
transitions of $\MM$. Therefore, the \emph{comparison theorem} of
Randall and Tetali (2000) can be used to transform an upper bound on the
mixing time of $\MM_\BB$ to an upper bound on the mixing
time of~$\MM$. The idea to study a Markov chain $\MM$ by first
studying a boosted chain $\MM_{\BB}$ that performs block moves
rather than single moves is also known in the context of Glauber dynamics
under the term \textit{block dynamics}; see \cite{dobrushinShlosman85, martinelli99, weitz05}.

Our main result is an upper bound on the mixing time of $\MM$ that
depends on a careful choice of~$\BB$. In the statement of the theorem
we let $\partial B$ denote the \textit{boundary} of a block $B\in\BB$,
i.e., the set of vertices outside of $B$ that are adjacent to $B$. The
\textit{block divergence} $E_{B, v}$ measures the influence of an
increment on $v\in\partial B$ when resampling $B$. We will provide a
precise definition of $E_{B, v}$ in Subsection~\ref{ssec:strassen}.

\begin{theorem}\label{theorem:main_result}
  Let $G=(V, E)$ be a finite graph, and $\BB$ be a finite family of
  blocks, such that for every vertex $v\in V$ there is at least one
  $B\in\BB$ with $v\in B$. If there exists $\beta < 1$ such that for
  all $v \in V$
  $$1-\frac{1}{2\lvert\BB\rvert}\left(\#\{ B\in\BB \;\mid\;
    v\in B\}\hspace{0.2cm} - \sum_{B\in\BB\;\mid\;v\in \partial B}
    (E_{B, v}-1)\right)\leq \beta < 1\hspace{0.1cm},$$ then for the mixing
  time $\tau(\varepsilon)$ of the up/down Markov chain $\MM$ on $k$-heights of
  $G$ we have
  $$\tau(\varepsilon) \leq c_{\BB,
    k}\cdot\frac{\left(\left(\log(\frac{1}{\varepsilon})\cdot\lvert
        V\rvert\right)+{\lvert V\rvert}^2
      \cdot\log(k+1)\right)\cdot\log\left(\frac{k\lvert
        V\rvert}{\varepsilon}\right)}{\log(\frac{1}{2\varepsilon})}\hspace{0.1cm},$$
  where 
  $$c_{\BB, k} := \frac{8\cdot
    bmk(k+1)^b}{(1-\beta)\lvert\BB\rvert}\hspace{0.1cm}$$
  with $m := \max_{v\in V} \#\{ B\in\BB \;\mid\; v\in B\}$ and 
  $b := \max_{B\in \BB}\; \lvert B\rvert$.
\end{theorem}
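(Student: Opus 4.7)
The plan is to establish the bound in two distinct stages, exactly as the introduction foreshadows: first I would prove a mixing-time bound for the auxiliary block chain $\MM_\BB$ via a path-coupling/block-coupling argument on the distributive lattice of $k$-heights, and then I would transfer this bound to $\MM$ by invoking the Randall--Tetali comparison theorem, showing that every block transition of $\MM_\BB$ can be simulated by a bounded-length sequence of single-vertex transitions of $\MM$. The constant $c_{\BB,k}=\frac{8\cdot bmk(k+1)^b}{(1-\beta)|\BB|}$ is a combined artifact of these two stages: the $\frac{1}{(1-\beta)|\BB|}$ factor comes from the contraction rate of block coupling, while the $bmk(k+1)^b$ factor reflects the worst-case cost of simulating a block resample by single moves.

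For the block-coupling stage, I would fix a height-difference pseudometric on the lattice of $k$-heights (the natural choice being $d(\varphi,\psi)=\sum_{v}|\varphi(v)-\psi(v)|$) and consider two copies of $\MM_\BB$ started at comparable configurations $\varphi\le\psi$. In each step, both copies choose the same block $B$ uniformly at random; conditional on the boundary values on $\partial B$, the next configurations on $B$ are jointly drawn from the monotone coupling furnished by Strassen's theorem (this is where I need the discussion of $E_{B,v}$ from Section~\ref{ssec:strassen}). For neighbouring configurations differing at a single vertex $v$, the expected change of $d$ under one block step is exactly captured by the quantity
\[
1-\frac{1}{2|\BB|}\!\left(\#\{B\in\BB \mid v\in B\} - \!\!\sum_{B\in\BB\mid v\in\partial B}\!(E_{B,v}-1)\right)\!,
\]
because picking a block containing $v$ guarantees the disagreement at $v$ is erased, while picking a block with $v$ on its boundary propagates the disagreement according to the block divergence. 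The hypothesis that this quantity is at most $\beta<1$ therefore yields a path-coupling contraction; by Bubley--Dyer path coupling and a standard diameter bound (the diameter of the $k$-height lattice is $O(|V|\log(k+1))$ in relevant norms), I obtain a mixing-time bound for $\MM_\BB$ of order $\frac{|\BB|}{1-\beta}\cdot|V|\log(k+1)\cdot\log\tfrac{1}{\varepsilon}$, up to the leading constants.

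For the comparison stage, I would use the Randall--Tetali framework: for each ordered pair $(\varphi,\varphi')$ of configurations connected by a $\MM_\BB$-transition (resampling a block $B$), I must exhibit a canonical $\MM$-path of length at most $b\cdot k$ realizing this transition. A natural choice is to fix an ordering of $B$ and change coordinates one at a time, going up or down by one; the number of intermediate states is bounded by $b\cdot k$ and each single-step transition of $\MM$ along the path has probability at least $\frac{1}{2|V|}$. Plugging into the Randall--Tetali formula for the ratio of Dirichlet forms yields a congestion of order $bmk(k+1)^b\cdot|V|/|\BB|$ (with $(k+1)^b$ being a crude upper bound on the number of possible block-completions sharing a given canonical edge, and $m$ accounting for blocks that pass through a given vertex). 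Combining this congestion with the spectral-gap bound for $\MM_\BB$ and converting spectral-gap back to total-variation mixing time (which introduces an extra $\log\frac{k|V|}{\varepsilon}$ factor through the standard $\frac{1}{\pi_{\min}}$ correction, using $|\text{state space}|\le(k+1)^{|V|}$) produces the claimed bound on $\tau(\varepsilon)$.

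The principal obstacle is the first stage: making the informal description of $E_{B,v}$ as a "block divergence" rigorous and showing that the monotone coupling delivered by Strassen's theorem actually realizes the bound appearing in the hypothesis. This requires care because $\MM_\BB$ resamples a block under the uniform distribution conditioned on the current boundary values, which is a nontrivial conditional distribution on an order ideal of a distributive lattice; bounding the expected increase of $d$ on $B\cup\partial B$ in terms of $E_{B,v}$ exactly (rather than up to constants) is what makes the coefficient in the hypothesis sharp. Once this is in place, the path-coupling contraction and the Randall--Tetali comparison are essentially routine applications of known machinery, combined to produce the stated mixing-time bound.
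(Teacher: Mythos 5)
Your high-level plan matches the paper's own proof almost exactly: a path-coupling/block-coupling argument using Strassen's theorem to get a contraction bound $\beta$ for $\MM_\BB$, then the Randall--Tetali comparison theorem with canonical $\MM$-paths of length at most $bk$ to transfer the bound to $\MM$. The expression you identify for the expected one-step change of $d$ under the block coupling is exactly the right quantity (it is the paper's Lemma~\ref{lemma:one_step_block_coupling_distance}), and the role of the factors $b$, $m$, $k$, $(k+1)^b$ in the comparison stage is also as you describe.

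However, there is one genuine gap you misdiagnose. You treat ``the monotone coupling furnished by Strassen's theorem'' as given and say the principal obstacle is the sharpness of the bound in terms of $E_{B,v}$ --- but $E_{B,v}$ is \emph{defined} as the expectation under $\lambda_{B,X,Y}$, so that part is tautological. The real missing step is verifying the \emph{hypothesis} of Strassen's theorem: that when $X\le Y$, the uniform distribution $\mathcal{U}(\Omega_{B\mid X})$ is stochastically dominated by $\mathcal{U}(\Omega_{B\mid Y})$ as measures on the poset $\Omega_B$. Without this, Strassen gives you nothing. This is a substantive lemma (the paper's Proposition~\ref{prop:stoch-dom}), and proving it requires showing that $\Omega_{B\mid X}$ and $\Omega_{B\mid Y}$ form a downset and an upset, respectively, in a common sublattice of $\Omega_B$ (Lemma~\ref{lemma:downsets_and_upsets}) and then invoking the Ahlswede--Daykin 4~Functions Theorem. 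Your proposal does not identify or supply this ingredient.

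There are also a few quantitative slips that would show up when you try to match the constant $c_{\BB,k}$: the single-step transition probability of $\MM$ is $\frac{1}{4|V|}$ (uniform choice of vertex, of $\pm1$, and the lazy coin flip), not $\frac{1}{2|V|}$; and the diameter of $(\Omega,d)$ is $k|V|$, not $O(|V|\log(k+1))$ --- the $\log(k+1)$ enters through the $\pi_{\min}$ term in Randall--Tetali (since $|\Omega|\le(k+1)^{|V|}$), not through the diameter, so the two logarithmic factors in the final bound are swapped in your sketch. These do not change the asymptotics, but they do matter if one wants to derive the stated $c_{\BB,k}$ exactly.
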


A simplified but weaker version of Theorem \ref{theorem:main_result} is given
by its following corollary:

\begin{corollary}\label{cor:main_result_corollary}
  Let $G=(V, E)$ be a finite graph, and $\BB$ be a finite family of
  blocks such that each vertex $v\in V$ is contained in at least $\check{m}$ blocks
  and in at most $s$ boundaries of blocks, and let
  $E_{\textup{max}} := \max_{B \in \BB, v\in\partial B} E_{B, v}$. If
  there is a~$\beta < 1$ such that
  $$1 - \frac{1}{2\lvert\BB\rvert}\left(\check{m}-s\cdot
    (E_{\textup{max}}-1)\right)\leq\beta\hspace{0.1cm},$$ then the upper bound
  on $\tau(\varepsilon)$ given in Theorem \ref{theorem:main_result} holds.
\end{corollary}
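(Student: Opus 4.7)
The plan is to verify that the hypotheses of the corollary imply, pointwise for each $v \in V$, the hypothesis of Theorem~\ref{theorem:main_result}, and then simply invoke the theorem. Since the conclusion on $\tau(\varepsilon)$ is identical in both statements, this reduction is all that is required.

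Concretely, I would fix an arbitrary vertex $v \in V$ and bound the inner bracket
$$\#\{ B\in\BB \;\mid\; v\in B\} \;-\; \sum_{B\in\BB \;\mid\; v\in \partial B}(E_{B, v}-1)$$
from below by $\check{m} - s\cdot(E_{\textup{max}}-1)$. The first term is handled directly by the assumption $\#\{B : v \in B\} \geq \check{m}$. For the subtracted sum, I would use that the number of blocks whose boundary contains $v$ is at most $s$, and that each summand satisfies $E_{B, v} - 1 \leq E_{\textup{max}} - 1$; combining these yields the upper bound $s\cdot(E_{\textup{max}}-1)$ on the sum, so the whole bracket is at least $\check{m} - s(E_{\textup{max}}-1)$. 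Dividing by $2|\BB|$ and subtracting from $1$ reverses the inequality and gives
$$1-\frac{1}{2|\BB|}\left(\#\{B \;\mid\; v\in B\}\;-\sum_{B\;\mid\;v\in\partial B}(E_{B,v}-1)\right) \leq 1 - \frac{1}{2|\BB|}\bigl(\check{m} - s(E_{\textup{max}}-1)\bigr) \leq \beta,$$
which is precisely the hypothesis of Theorem~\ref{theorem:main_result}. Applying the theorem then yields the claimed bound on $\tau(\varepsilon)$.

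The only subtle point, and the step I would take most care with, is bounding each summand by $E_{\textup{max}} - 1$: this is only a safe replacement (in the direction needed) if every term $E_{B,v} - 1$ is nonnegative, i.e.\ $E_{B,v} \geq 1$. If this fact is not immediate from the definition of the block divergence given in Subsection~\ref{ssec:strassen}, I would verify it there — intuitively it should hold because $E_{B,v}$ measures the expected influence of a unit increment and the identity coupling already accounts for a change of at least $1$ at $v$ itself. Once that monotonicity property of $E_{B,v}$ is in hand, the reduction above is purely algebraic and the corollary follows with essentially no additional work beyond Theorem~\ref{theorem:main_result}.
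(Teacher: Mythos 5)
Your reduction to Theorem~\ref{theorem:main_result} is the intended argument (the paper states the corollary without proof, and a pointwise comparison of the two hypotheses is exactly what one should do), and the subtle point you flag at the end is real --- but your diagnosis of it is slightly off. The delicate step is not replacing each summand $E_{B,v}-1$ by $E_{\textup{max}}-1$ (that is always valid by definition of $E_{\textup{max}}$), but passing from ``a sum of $s_v := \#\{B\in\BB : v \in \partial B\} \leq s$ terms, each $\leq E_{\textup{max}}-1$'' to the bound $s(E_{\textup{max}}-1)$. The first step only gives $\sum_{B : v \in \partial B}(E_{B,v}-1) \leq s_v(E_{\textup{max}}-1)$, and replacing the count $s_v$ by the larger $s$ preserves the direction of the inequality if and only if $E_{\textup{max}}-1 \geq 0$. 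So the condition you actually need is $E_{\textup{max}} \geq 1$, not $E_{B,v} \geq 1$ for every term; that is weaker than what you propose to check, but it is a genuine requirement.

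Your intuition that this should hold ``because the identity coupling already accounts for a change of at least $1$ at $v$ itself'' is mistaken: since $v \in \partial B$ we have $v \notin B$, and the distance $\dd(X',Y')$ in the definition of $E_{B,v}$ in Subsection~\ref{ssec:strassen} is summed only over vertices of $B$, so the unit change at $v$ does not contribute. Indeed the computational tables contain many values below~$1$: e.g.\ $\max E_{B,v} \approx 0.799$ for $k = 2$ on the hexagonal grid (Table~\ref{tbl:block_divergence_hexagonal_grids}). So $E_{\textup{max}} < 1$ can genuinely occur. If in addition some vertex lies in strictly fewer than $s$ boundaries, your chain of inequalities breaks, and the corollary's hypothesis does not imply the theorem's for that vertex. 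In all of the paper's applications this is harmless --- either $E_{\textup{max}} \geq 1$, or (in the hexagonal case with $k=2$) every vertex lies in exactly $s$ boundaries, so $s_v = s$ and no monotonicity in the count is needed. To make your argument airtight as written, you should either add the hypothesis $E_{\textup{max}} \geq 1$, or strengthen ``at most $s$ boundaries'' to ``exactly $s$ boundaries''.
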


Below, we present various applications of Theorem \ref{theorem:main_result} by
showing that the assumptions are satisfied for different families of graphs
and carefully selected families of blocks $\BB$. Due to limited
computational power, so far we were able to find appropriate upper bounds on
$E_{\text{max}}$ only in the cases $k\in\{2, 3\}$. Note that the up/down
Markov chain $\MM$ is trivially rapidly mixing when $k\in\{0, 1\}$.

A bound for the mixing time of the up/down Markov-chain $\MM$
operating on $k$-heights of $G$ which is of the form
    $$\tau(\varepsilon)<
    c_k\cdot\frac{\left(\left(\log(\frac{1}{\varepsilon})\cdot n\right)+n^2
      \cdot\log(k+1)\right)\cdot\log\left(\frac{kn}{\varepsilon}\right)}
    {\log(\frac{1}{2\varepsilon})}\in\mathcal{O}\left(n^2\log n\right),
    $$
can be shown for the following pairs of an $n$ vertex graph $G$ and a $k$:
\begin{itemize}
\item
  Graph $G$ is a toroidal hexagonal grid graph and $k\in\{2, 3\}$ with
  corresponding constants  $c_2 = 1.747648\cdot 10^5$ and $c_3 = 1.052669\cdot 10^7$.
  (This is Theorem~\ref{theorem:rapidly_mixing_toroidal_hexagonal_graph} in
  Subsection~\ref{ssec:mixingtoroidalhexagonalgraph}.)
\item
  Graph $G$ is a toroidal rectangular grid graph and $k\in\{2, 3\}$ with
  corresponding constants $c_2 = 2.844202\cdot 10^{10}$ and
  $c_3 = 1.333706\cdot 10^{13}$.  (This is
  Theorem~\ref{theorem:rapidly_mixing_toroidal_rectangle_graph} in
  Subsection~\ref{section:rectangular_grid_graphs}.)
\item
  Graph $G$ is a simple $2$-connected $3$-regular planar graph and $k=2$ with
  corresponding constants $c_2 = 4.391132\cdot 10^7$. (This is
  Theorem~\ref{theorem:rapidly_mixing_regular_graph} part (1) in
  Subsection~\ref{subsection:regular_graphs}.)

\item Graph~$G$ is a simple $3$-connected $3$-regular planar graph and $k\in\{2, 3\}$ with corresponding constants $c_2 = 2.195097\cdot 10^7$ and $c_3 = 4.852027 \cdot 10^9$.

If~$G$ is the dual graph of a~$4$-connected triangulation, the constant for the case~$k=2$ can be improved to~\mbox{$c_2 = 1.489256\cdot 10^7$}.

(This is
  Theorem~\ref{theorem:rapidly_mixing_regular_graph} parts (2) and (3) in
  Subsection~\ref{subsection:regular_graphs}.)
\end{itemize}

We conjecture that the up/down Markov-chain is
rapidly mixing on the $k$-heights of these graph classes for all $k$.

Our method can be applied to further families of graphs. The crucial step is
to find an appropriate family of blocks $\BB$ and a sharp analysis of
$E_{B, v}$. Moreover, we believe that \textit{block coupling} can be applied
for proving that several other related Markov chains which also operate on a
distributive lattice structure are rapidly mixing. An example could be
generalized $k$-heights which allow a larger difference between the assigned
values of two adjacent vertices.

Preliminary work of this research can be found in a PhD thesis~\cite{heldt16}.

\subsection{Outline}

In Section~\ref{section:preliminaries} we fix standard terminology related to
Markov chains, give formal definitions for both Markov chains~$\MM$
and~$\MM_{\BB}$ and introduce the proof ingredients for
Theorem~\ref{theorem:main_result}. The proof of
Theorem~\ref{theorem:main_result} itself is presented in
Section~\ref{section:proof_main_result} and consists mainly of defining and
analysing a monotone coupling of the Markov
chain~$\MM_{\BB}$. In Section~\ref{section:applications} we show how to apply
Theorem~\ref{theorem:main_result} and
Corollary~\ref{cor:main_result_corollary} in order to prove
Theorem~\ref{theorem:rapidly_mixing_toroidal_hexagonal_graph},
Theorem~\ref{theorem:rapidly_mixing_toroidal_rectangle_graph}
and~Theorem~\ref{theorem:rapidly_mixing_regular_graph}. Finally, in
Section~\ref{sec:glauber_dynamics} we relate $k$-heights to \textit{spin systems}
and discuss the applicability of a recent result by
Blanca~et~al.~\cite{blancaCaputoZongchenParisiStefankovicVigoda2022}
on bounding the mixing time of corresponding~\textit{Glauber dynamics}.

\section{Preliminaries}\label{section:preliminaries}

\subsection{Markov chains, mixing time and couplings}

For a thorough introduction to the theory of discrete-time Markov chains we
refer the reader to the recent book by Levin, Peres \& Wilmer \cite{LPW17} or
to the book by Jerrum \cite[ch. 3-5]{jerrum03}. Throughout this article, all
Markov chains $\mathcal{C} = (X_t)_{t\in\mathbb{N}}$ are discrete-time Markov
chains on some finite state space $\mathcal{X}$. Moreover, they fulfill the
following properties:
\begin{itemize}
\item \textit{Homogeneous}: The transition probabilities
  $$\mathcal{C}(x, y) := \Prob[X_{t+1} = y\;\mid\; X_t = x]$$ are
  constant over time, i.e.~independent of $t$.
\item \textit{Irreducible}: Each state can be reached from any other state in
  a sequence of transitions with non-zero probability.
\item \textit{Aperiodic}: The possible number of steps for returning from a
  state to the same state again are not only multiples of some period
  $m\in\mathbb{N}, m\geq 2$.
\item \textit{Symmetric}: We have $\mathcal{C}(x, y) = \mathcal{C}(y, x)$ for
  all $x, y\in \mathcal{X}$.
\end{itemize}
It is well-known that under these conditions the distribution of $X_t$
converges towards the uniform distribution on $\mathcal{X}$, which we denote
by $\mathcal{U}(\mathcal{X})$. We want to analyse whether this convergence
happens in polynomial time. To make this precise, let
$$\mathcal{C}^t(x, y) := \Prob[X_t = y \;\mid\; X_0 = x]$$ and let
$\mathcal{C}^t(x, \cdot)$ be the distribution of $X_t$ when $\mathcal{C}$
starts in state $X_0 = x$. For any two probability measures $\mu, \mu'$ on
$\mathcal{X}$ the \textit{total variation distance} is defined as
$$\lVert \mu - \mu'\rVert_{TV} := \max_{A\subset \mathcal{X}} \;\lvert \mu(A)
- \mu'(A) \rvert\hspace{0.1cm}.$$ Now,
$$d(t) := \max_{x\in\mathcal{X}} \;\lVert \mathcal{C}^t(x, \cdot) -
\mathcal{U}(\mathcal{X}) \rVert_{TV}$$ measures the worst-case distance of
$\mathcal{C}$ to the uniform distribution after $t$ steps. Hence, the
\textit{mixing time} defined as
$$\tau(\varepsilon) := \min \{t\in\mathbb{N}\;\mid\; d(t) < \varepsilon
\}\hspace{0.1cm},$$ measures the number of time steps required for
$\mathcal{C}$ to be $\varepsilon$-close to the uniform distribution. We say
that $\mathcal{C}$ is \textit{rapidly mixing}, if
$\tau_{\mathcal{C}}(\varepsilon)$ is upper bounded by a polynomial
in~$\log(\lvert\mathcal{X}\rvert)$ and~$\log(\varepsilon^{-1})$.

A common technique to prove that a Markov chain is rapidly mixing is by using
a coupling. A \textit{coupling} of a Markov chain $\mathcal{C}$ is another
Markov chain~$(X_t, Y_t)$ operating on $\mathcal{X}\times\mathcal{X}$ whose
components $(X_t)$ and $(Y_t)$ are copies of~$\mathcal{C}$, i.e., their
transition probabilities are the same as those of~$\mathcal{C}$. The Markov
chains~$(X_t)$ and~$(Y_t)$ are typically not independent though. If there is a
partial order $\leq$ defined on $\mathcal{X}$, then we call the coupling a
\textit{monotone coupling}, if
$$\Prob[X_{t+1} \leq Y_{t+1} \;\mid\; X_t \leq Y_t] = 1.$$ If
$X_0 \leq Y_0$, this implies $X_t \leq Y_t$ for all $t$. We use the term
(monotone) coupling also to refer to the random transition
$(X_t, Y_t)\mapsto (X_{t+1}, Y_{t+1})$ that defines a (monotone) coupling
$(X_t, Y_t)_{t\in\mathbb{N}}$.

One usually aims for a monotone coupling in which in each transition~$X_t$
and~$Y_t$ get closer in expectation with respect to some distance measure,
since then an upper bound on the mixing time of $\mathcal{C}$ is provided by
the classical result in Theorem \ref{theorem:montone_coupling_classical}.

\begin{theorem}[Dyer \& Greenhill, Theorem 2.1 in \cite{dyerGreenhill}]
  \label{theorem:montone_coupling_classical}
  Let $(X_t, Y_t)\mapsto (X_{t+1}, Y_{t+1})$ be a coupling of a Markov chain $\mathcal{C}$ operating on
  a state space $\mathcal{X}$, let
  $d:\mathcal{X}\times\mathcal{X}\to\mathbb{N}_0$ be any integer value metric
  and let
  $$D := \max \{ d(x, y) \;\mid\; (x, y)\in\mathcal{X}\times\mathcal{X}\}.$$
  Suppose there exists $\beta < 1$ such that
  $$\mathbb{E}[d(X_{t+1}, Y_{t+1})] < \beta\cdot d(X_t, Y_t)$$ for all
  $t$. Then the mixing time $\tau_\mathcal{C}(\varepsilon)$ is upper bounded
  by
  $$\tau_\mathcal{C}(\varepsilon) \leq
  \frac{\log\left(\frac{D}{\varepsilon}\right)}{1-\beta}\hspace{0.1cm}.$$
\end{theorem}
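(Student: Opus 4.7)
The plan is to combine iteration of the contraction hypothesis with the standard coupling-to-total-variation translation. Three ingredients are required: (i) a tower argument showing $\mathbb{E}[d(X_t,Y_t)\mid X_0=x,\,Y_0=y]\le\beta^t\, d(x,y)$; (ii) the fact that $d$ is integer-valued, which via Markov's inequality yields $\Prob[X_t\neq Y_t]\le\mathbb{E}[d(X_t,Y_t)]$; and (iii) the coupling inequality, which translates $\Prob[X_t\neq Y_t]$ into a bound on the total variation distance between $\mathcal{C}^t(x,\cdot)$ and $\mathcal{C}^t(y,\cdot)$.

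First I would iterate: conditioning on $(X_t,Y_t)$ and applying the hypothesis gives $\mathbb{E}[d(X_{t+1},Y_{t+1})\mid X_0=x,\,Y_0=y]\le\beta\cdot \mathbb{E}[d(X_t,Y_t)\mid X_0=x,\,Y_0=y]$, so induction on $t$ together with $d(x,y)\le D$ yields $\mathbb{E}[d(X_t,Y_t)\mid X_0=x,\,Y_0=y]\le\beta^t D$. Because $d$ takes values in $\mathbb{N}_0$ and satisfies $d(u,v)\ge 1$ whenever $u\neq v$, Markov's inequality gives $\Prob[X_t\neq Y_t]=\Prob[d(X_t,Y_t)\ge 1]\le\mathbb{E}[d(X_t,Y_t)]\le\beta^t D$.

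To convert this into a bound on $d(t)$, I would invoke the coupling inequality $\lVert \mathcal{C}^t(x,\cdot)-\mathcal{C}^t(y,\cdot)\rVert_{TV}\le\Prob[X_t\neq Y_t]$, valid for any coupling of the two chain trajectories, and then apply a convexity argument: since by stationarity $\mathcal{U}(\mathcal{X})=\sum_y \frac{1}{\lvert\mathcal{X}\rvert}\,\mathcal{C}^t(y,\cdot)$, the triangle inequality for $\lVert\cdot\rVert_{TV}$ yields $\lVert\mathcal{C}^t(x,\cdot)-\mathcal{U}(\mathcal{X})\rVert_{TV}\le\max_y\lVert\mathcal{C}^t(x,\cdot)-\mathcal{C}^t(y,\cdot)\rVert_{TV}$. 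Taking the maximum over $x$ gives $d(t)\le\beta^t D$.

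Finally, solving $\beta^t D\le\varepsilon$ yields $t\ge\log(D/\varepsilon)/\log(1/\beta)$, and the elementary inequality $\log(1/\beta)\ge 1-\beta$ for $\beta\in(0,1)$ (a consequence of $e^x\ge 1+x$) relaxes this into the claimed bound $t\ge\log(D/\varepsilon)/(1-\beta)$. The argument is entirely standard; the only subtlety I anticipate is the convexity step passing from pairwise TV bounds to a bound on distance from stationarity.
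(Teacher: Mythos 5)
The paper does not prove this theorem; it is quoted verbatim from Dyer \& Greenhill (Theorem~2.1 in their paper) and used as a black box, so there is no in-paper proof to compare against. Your argument is the standard proof of that lemma, and it is correct: iterate the one-step contraction to get $\mathbb{E}[d(X_t,Y_t)]\le\beta^t D$, use that $d$ is integer-valued (so $X_t\neq Y_t$ iff $d(X_t,Y_t)\ge 1$) together with Markov's inequality to bound $\Prob[X_t\neq Y_t]$, pass to total variation distance by the coupling inequality, average over $y$ with stationary weights (the convexity step you flag) to bound the distance from stationarity, and finally relax $\log(1/\beta)$ to $1-\beta$. The convexity step is indeed the only place that requires care, and you handle it correctly; equivalently one can note $d(t)\le\bar d(t):=\max_{x,y}\lVert\mathcal{C}^t(x,\cdot)-\mathcal{C}^t(y,\cdot)\rVert_{TV}$, which is the form usually invoked. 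One cosmetic nitpick: the paper's mixing-time definition uses a strict inequality $d(t)<\varepsilon$, so formally one should take $t$ slightly larger than $\log(D/\varepsilon)/(1-\beta)$ or observe that the stated inequality $\mathbb{E}[d(X_{t+1},Y_{t+1})]<\beta\, d(X_t,Y_t)$ is strict; this does not affect the substance.
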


For the Markov chain $\MM$ on $k$-heights discussed in
the introduction we have no monotone coupling for which
Theorem~\ref{theorem:montone_coupling_classical} can directly be applied. This is the
reason for introducing a boosted Markov chain~$\MM_{\BB}$
on~$k$-heights, in which a transition is typically changing the values on
a larger set of vertices. For finding a monotone coupling, we use the
\textit{path coupling} technique introduced by Bubley and Dyer
\cite{bubleyDyer1997}. That is we define a monotone coupling
\mbox{$(X_t, Y_t)\to (X_{t+1}, Y_{t+1})$} on pairs of $k$-heights that differ by one on
a single vertex. Using the following theorem this coupling
can be extended to arbitrary pairs of~$k$-heights.

\begin{theorem}[Dyer \& Greenhill, Theorem 2.2 in \cite{dyerGreenhill}]
    \label{theorem:path_coupling}
    Suppose $\mathcal{C}$ is a Markov chain operating on $\mathcal{X}$. Let
    $\dd$ be an integer valued metric defined on $\mathcal{X}\times\mathcal{X}$
    which takes values in $\{0, \ldots, D\}$, and let $S$ be a subset of
    $\mathcal{X}\times\mathcal{X}$ such that for all
    $(x, y)\in\mathcal{X}\times\mathcal{X}$ there exists a path
    $$\gamma_{x, y}: x = x_0, x_1, \ldots, x_r= y$$ with
    $(x_i, x_{i+1})\in S$, that is a shortest path, i.e.,
    $$\sum_{l=0}^{r-1} \dd(x_l, x_{l+1}) = \dd(x, y).$$ Let
    $(x, y)\mapsto (x', y')$ be a coupling of $\mathcal{C}$ that is defined
    for all $(x, y)\in S$. For any~\mbox{$(x, y)\in\mathcal{X}\times\mathcal{X}$},
    apply this coupling along the path $\gamma_{x, y}$ to obtain a new path
    $x' = x_0', \ldots, x_r' = y'$. Then, \mbox{$(x, y) \mapsto (x', y')$} defines a
    coupling of $\mathcal{C}$ on all tuples
    $(x, y)\in\mathcal{X}\times\mathcal{X}$. Moreover, if there exists~\mbox{$\beta < 1$} so that $$\mathbb{E}[\dd(x', y')] < \beta\cdot \dd(x, y)$$ for all
    $(x, y)\in S$, then the same inequality holds for all
    \mbox{$(x, y)\in\mathcal{X}\times\mathcal{X}$} in the extended coupling.
\end{theorem}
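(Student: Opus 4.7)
The plan is to build the extended coupling by chaining together the prescribed couplings on $S$ along a shortest path between $x$ and $y$, and then combine the triangle inequality for $\dd$ with the per-pair contraction hypothesis on $S$ to lift the contraction to arbitrary pairs.

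For the construction, I would fix $(x, y) \in \mathcal{X} \times \mathcal{X}$ and a shortest path $x = x_0, x_1, \ldots, x_r = y$ with $(x_i, x_{i+1}) \in S$ as guaranteed by the hypothesis. Define a joint distribution on $(X_0', X_1', \ldots, X_r') \in \mathcal{X}^{r+1}$ inductively: take $X_0' \sim \mathcal{C}(x_0, \cdot)$, and, given $X_i'$, sample $X_{i+1}'$ from the conditional distribution of the second coordinate given that the first coordinate equals $X_i'$ under the $S$-coupling of $(x_i, x_{i+1})$. Then set $(x', y') := (X_0', X_r')$. To show this yields a coupling of $\mathcal{C}$, I would verify by induction on $i$ that $X_i' \sim \mathcal{C}(x_i, \cdot)$: the base case holds by construction, and the inductive step follows because the transition kernel from $X_i'$ to $X_{i+1}'$ is exactly the conditional law prescribed by the $S$-coupling of $(x_i, x_{i+1})$, whose pushforward on the second coordinate when the first marginal is $\mathcal{C}(x_i, \cdot)$ is $\mathcal{C}(x_{i+1}, \cdot)$.

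For the contraction bound, the triangle inequality for $\dd$ gives
$$\dd(X_0', X_r') \leq \sum_{i=0}^{r-1} \dd(X_i', X_{i+1}').$$
Taking expectations and invoking the per-pair contraction on $S$, I obtain
$$\mathbb{E}[\dd(x', y')] \leq \sum_{i=0}^{r-1} \mathbb{E}[\dd(X_i', X_{i+1}')] < \beta \sum_{i=0}^{r-1} \dd(x_i, x_{i+1}) = \beta \cdot \dd(x, y),$$
where the final equality is precisely the shortest-path property of $\gamma_{x,y}$. The strictness of the middle inequality is preserved whenever at least one summand has $\dd(x_i, x_{i+1}) > 0$, which holds for any $x \neq y$.

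The main obstacle is conceptual rather than technical: the joint distribution on all $r+1$ coordinates must be assembled so that the endpoint marginals remain $\mathcal{C}(x,\cdot)$ and $\mathcal{C}(y,\cdot)$ after projection onto $(X_0', X_r')$, which precludes naively sampling each consecutive pair $(X_i', X_{i+1}')$ independently from the corresponding $S$-coupling. The inductive chaining described above resolves this by reusing the previously sampled $X_i'$ when generating $X_{i+1}'$; once the marginals have been verified inductively, the contraction step reduces to a routine application of linearity of expectation and the triangle inequality.
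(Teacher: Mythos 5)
The paper cites this as Theorem~2.2 of Dyer and Greenhill without reproducing a proof, so there is no in-paper proof to compare against; your proof is the standard argument for path coupling and it is correct. You chain the prescribed $S$-couplings along the shortest path via the Markovian conditional kernel, verify by induction that each $X_i'$ has marginal $\mathcal{C}(x_i,\cdot)$ (so that each consecutive pair $(X_i',X_{i+1}')$ is in fact distributed as the $S$-coupling of $(x_i,x_{i+1})$, which is the step that makes the per-pair hypothesis applicable), and then combine the triangle inequality with the shortest-path identity. One small point worth flagging: as stated with strict inequalities, the conclusion $\mathbb{E}[\dd(x',y')] < \beta\,\dd(x,y)$ cannot hold when $x=y$, since then both sides are zero; this is an artifact of the paper replacing the non-strict ``$\leq$'' in the original Dyer--Greenhill statement by ``$<$'', and your remark that strictness is preserved whenever $x\neq y$ correctly isolates the only case where the inequality can be strict. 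Otherwise the argument is complete, and your closing observation about why one cannot sample the consecutive pairs independently is exactly the right conceptual justification for the chained construction.
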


For some families of graphs and choices of blocks this theorem allows to
conclude that $\MM_{\BB}$ is rapidly mixing. A transition
of $\MM_{\BB}$ can be simulated by a sequence of transitions of
the original Markov chain $\MM$. Using the \textit{comparison
  technique} that is manifested in the following theorem we can push a
mixing result from~$\MM_{\BB}$ to the original chain 
$\MM$.

\begin{theorem}[Randall \& Tetali, Theorem 3 in \cite{randallTetali00}]
    \label{theorem:comparison_technique}
    Let $\mathcal{C}$ and $\widetilde{\mathcal{C}}$ be two reversible Markov
    chains on the same state space $\mathcal{X}$ and having the same
    stationary distribution $\pi$. Let $E(\mathcal{C})$ be the set of transitions
    of~$\mathcal{C}$ and $E(\widetilde{\mathcal{C}})$ be the set of
    transitions of $\widetilde{\mathcal{C}}$.

    Suppose that for each transition $(x, y)\in E(\widetilde{\mathcal{C}})$
    there is a path $\gamma_{x, y}: x=x_0, \ldots, x_k = y$ of
    transitions $(x_i, x_{i+1})\in E(\mathcal{C})$. For a transition
    $(u, v)\in E(\mathcal{C})$ let
    $$\Gamma(u, v) := \left\{ (x, y)\in E(\widetilde{\mathcal{C}}) \;\mid\;
      (u,v) \in \gamma_{x, y} \right\}\hspace{0.1cm},$$ and let
    $$A := \max_{(u, v)\in E(\mathcal{C})}\left\{ \frac{1}{\pi(u)\mathcal{C}(u, v)}
    \sum_{(x, y)\in\Gamma(u, v)}\lvert\gamma_{x,y}\rvert\hspace{0.1cm}\pi(x)\;\widetilde{\mathcal{C}}(x,y)\hspace{0.1cm}\right\},
    $$
    where $\lvert\gamma_{x, y}\rvert$ denotes the length of $\gamma_{x, y}$
    and
    $\mathcal{C}(u, v) := \Prob_{\mathcal{C}}[X_{t+1} = v\mid X_t = u]$
    is the probability of the transition $(u, v)$ in $\mathcal{C}$. Then the
    mixing time $\tau_{\mathcal{C}}$ of $\mathcal{C}$ can be bounded in terms
    of the mixing time $\tau_{\widetilde{\mathcal{C}}}$ as by
    $$\tau_{\mathcal{C}}(\varepsilon) \leq
    \frac{4\log\left({1}/{(\varepsilon\cdot\pi_{\text{min}})}\right)}
           {\log\left({1}/{(2\varepsilon)}\right)}\cdot
           A\cdot\tau_{\widetilde{\mathcal{C}}}(\varepsilon),
    $$
    where $\pi_{\text{min}} := \min \{ \pi(x)\;\mid\; x\in\mathcal{X}\}$.
\end{theorem}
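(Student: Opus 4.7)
The plan is to prove this through the classical Dirichlet form / spectral gap comparison framework, since canonical-path style arguments between two reversible chains sharing a stationary distribution are essentially exercises about quadratic forms. Concretely, for a reversible chain $\mathcal{C}$ with stationary distribution $\pi$, recall the Dirichlet form
$$\mathcal{E}_{\mathcal{C}}(f, f) = \tfrac{1}{2}\sum_{u, v\in\mathcal{X}} \pi(u)\,\mathcal{C}(u, v)\,\bigl(f(u)-f(v)\bigr)^2$$
and the variational characterization $\gamma_{\mathcal{C}} = \inf_f \mathcal{E}_{\mathcal{C}}(f,f)/\operatorname{Var}_{\pi}(f)$ of the spectral gap, where the infimum ranges over non-constant $f$. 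The analogous quantities exist for $\widetilde{\mathcal{C}}$.

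The first step is a quadratic-form comparison $\mathcal{E}_{\widetilde{\mathcal{C}}}(f, f) \leq A \cdot \mathcal{E}_{\mathcal{C}}(f, f)$ for every $f$. I would do this by writing each edge of $\widetilde{\mathcal{C}}$ as a telescoping sum along its canonical path: for $(x,y) \in E(\widetilde{\mathcal{C}})$ with $\gamma_{x,y} \colon x = x_0, \ldots, x_k = y$,
$$f(x) - f(y) = \sum_{i=0}^{k-1}\bigl(f(x_i) - f(x_{i+1})\bigr).$$
Applying Cauchy--Schwarz produces $(f(x)-f(y))^2 \leq |\gamma_{x,y}| \sum_{i}(f(x_i)-f(x_{i+1}))^2$. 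Plugging this into $\mathcal{E}_{\widetilde{\mathcal{C}}}(f,f)$ and swapping the order of summation, an edge $(u,v)$ of $\mathcal{C}$ gets weight exactly $\sum_{(x,y)\in\Gamma(u,v)} |\gamma_{x,y}| \, \pi(x)\,\widetilde{\mathcal{C}}(x,y)$. Dividing by $\pi(u)\mathcal{C}(u,v)$ and taking the worst case over $(u,v)$ yields precisely the congestion parameter $A$ defined in the statement, and the desired inequality follows. Translating back through the variational characterization gives $\gamma_{\mathcal{C}} \geq \gamma_{\widetilde{\mathcal{C}}}/A$.

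The second step converts the spectral gap comparison into a mixing-time comparison. I would use two standard inequalities that relate mixing time and spectral gap for reversible chains: an upper bound $\tau_{\mathcal{C}}(\varepsilon) \leq \gamma_{\mathcal{C}}^{-1} \log\bigl(1/(\varepsilon\,\pi_{\min})\bigr)$ (proven by controlling the $L^2$-distance, then using $\|\cdot\|_{TV} \leq \tfrac{1}{2}\|\cdot\|_{L^2(\pi)}$), and a reverse bound $\gamma_{\widetilde{\mathcal{C}}} \geq \log\bigl(1/(2\varepsilon)\bigr)/\bigl(4\,\tau_{\widetilde{\mathcal{C}}}(\varepsilon)\bigr)$ (derived by iterating the contraction of total variation distance and expressing $d(t)$ in terms of $d(\tau(\varepsilon))$). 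Chaining these with $\gamma_{\mathcal{C}} \geq \gamma_{\widetilde{\mathcal{C}}}/A$ delivers exactly the claimed bound, with the factor $4$ and the $\log(1/(2\varepsilon))$ in the denominator arising from the reverse inequality.

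The main obstacle is calibrating the numerical constants correctly: the bookkeeping in step one (specifically, convincing oneself that the sum $\sum_{(x,y)\in\Gamma(u,v)} |\gamma_{x,y}|\,\pi(x)\,\widetilde{\mathcal{C}}(x,y)$ that emerges after swapping sums is exactly the congestion parameter $A \cdot \pi(u)\,\mathcal{C}(u,v)$, with the symmetry of counting $(u,v)$ and $(v,u)$ properly accounted for) and step two (where the precise factor $4$ depends on which version of the gap-to-mixing conversion is used). Neither step is deep, but both have subtle factors that one only gets right by doing the calculation carefully; this is exactly why the result is routinely quoted as a black box from Randall--Tetali rather than re-derived.
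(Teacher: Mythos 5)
The paper does not prove this statement; it cites it verbatim from Randall and Tetali (2000, Theorem~3) as a black box. Your outline is the standard derivation and, as far as I can tell, the one Randall and Tetali give: a Diaconis--Saloff-Coste style Dirichlet-form comparison via canonical paths plus Cauchy--Schwarz to get $\mathcal{E}_{\widetilde{\mathcal{C}}}(f,f)\leq A\,\mathcal{E}_{\mathcal{C}}(f,f)$ and hence a factor-$A$ spectral-gap comparison, followed by the two standard gap/mixing conversions. One technical point your outline elides, and which is where the factor $4$ actually enters, is that the Dirichlet-form comparison only bounds the gap $1-\lambda_2$, whereas the mixing-time upper bound needs control of $\lambda_\star=\max(\lambda_2,|\lambda_n|)$; one must either invoke laziness (as both chains in this paper have, via the $p\leq\tfrac12$ step) or absorb the negative-eigenvalue part into the constant, and the slack in going from $\tau\geq(t_{\mathrm{rel}}-1)\log(1/(2\varepsilon))$ to $t_{\mathrm{rel}}\leq 4\tau/\log(1/(2\varepsilon))$ is what produces the~$4$. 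You correctly flagged this as the delicate bookkeeping step, so the proposal is sound as a sketch.
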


\subsection{The up/down Markov chain}

For a graph $G=(V, E)$ let $\Omega_G^k$ denote the set of $k$-heights of
$G$. The \textit{up/down Markov chain} $\MM(G, k)$ operates
on $\Omega_G^k$. It starts with any $k$-height~$X_0\in\Omega_G^k$ and its
transitions are given by Algorithm \ref{alg:principal_chain_transition}. We
usually consider $G$ and $k$ as fixed parameters, which is why we write just
$\Omega$ and $\MM$ for simplicity.

\begin{algorithm}
\DontPrintSemicolon
\caption{Transition of up/down Markov chain $\MM$: $X_t \to X_{t+1}$}
\label{alg:principal_chain_transition}
Sample $v\in V$, and $\triangle\in\{-1, 1\}$, and $p\in [0, 1]$ uniformly at random
\medskip

$\varphi(v) := \begin{cases}
X_t(v)+\triangle & v = \Tilde{v}\\
X_t(v) & v \neq \Tilde{v}
\end{cases}$\;
\smallskip
\eIf{$\varphi$ \textsf{\upshape is a valid $k$-height} \textbf{\upshape and} $p \leq \frac{1}{2}$}
{
    $X_{t+1}\gets \varphi$\;
}{
    $X_{t+1}\gets X_t$\;
  }
\Ret $X_{t+1}$
\smallskip
\end{algorithm}

Performing a transition only if $p\leq \frac{1}{2}$ is known as making the chain \emph{lazy}, it ensures aperiodicity of the Markov chain.

For two $k$-heights $X, Y\in\Omega$ we write $X\leq Y$ if $X(v)\leq Y(v)$ for
all $v\in V$. This makes $\Omega$ a poset. Furthermore, equipped with the
operations
$$
 (X\land Y)(v) := \min \{X(v), Y(v)\} \textrm{ and } (X\lor Y)(v) := \max \{X(v), Y(v)\}
$$
the set $\Omega$ becomes a distributive lattice. The chain $\MM$ can be seen as
a random walk on the diagram of $\Omega$. For $X, Y\in \Omega$ we introduce
the distance between $X$ and $Y$ defined as
$$\dd(X, Y) = \sum_{v\in V} \lvert X(v) - Y(v)\rvert .$$

\begin{lemma}\label{lemma:path_lengths}
  Let $X, Y\in\Omega$. Then $\dd(X, Y)$ is the smallest number of transitions
  of $\MM$ to get from state $X$ to state $Y$.
\end{lemma}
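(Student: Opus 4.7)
The plan is to prove the two inequalities separately. For the lower bound, observe that a single transition of $\MM$ either keeps the state fixed or changes the value at exactly one vertex by exactly $\pm 1$. Hence if $(X_0, X_1, \ldots, X_T)$ is any trajectory with $X_0 = X$ and $X_T = Y$, then by the triangle inequality applied step by step, $\dd(X,Y) \leq \sum_{t=0}^{T-1} \dd(X_t, X_{t+1}) \leq T$, which already gives $T \geq \dd(X,Y)$.

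The substantive direction is constructing an explicit path from $X$ to $Y$ of length exactly $\dd(X,Y)$, using the distributive lattice structure. The idea is to route via the join $Z := X \vee Y$: first increment single vertices one at a time to move from $X$ up to $Z$, then decrement one vertex at a time to move from $Z$ down to $Y$. The total number of moves is
\[
\sum_{v\in V}(Z(v)-X(v)) + \sum_{v\in V}(Z(v)-Y(v)) = \sum_{v\in V}\bigl(2\max\{X(v),Y(v)\} - X(v) - Y(v)\bigr) = \dd(X,Y),
\]
which is exactly what we need, provided every intermediate assignment is a valid $k$-height.

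The main obstacle is verifying that the greedy incremental moves never get stuck, i.e., that at every intermediate state there is an admissible vertex to change. For the ``upward'' half, I would argue as follows: let $X'$ be any $k$-height with $X \leq X' \leq Z$ and $X' \neq Z$; pick a vertex $v$ with $X'(v) < Z(v)$ minimising $X'(v)$ among all such vertices. I claim that every neighbour $w$ of $v$ satisfies $X'(w) \geq X'(v)$, so incrementing $v$ produces another valid $k$-height (the only possible violation would be $X'(w) = X'(v)-1$, which gives a gap of $2$ after the increment). Suppose for contradiction $X'(w) = X'(v)-1$. By the minimality of $X'(v)$, the strictly smaller value $X'(w)$ must already equal its target, so $X'(w) = Z(w) = X'(v)-1$. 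But $Z$ is a valid $k$-height, so $Z(v) \leq Z(w)+1 = X'(v)$, contradicting $X'(v) < Z(v)$.

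The downward half is symmetric: from any $X'$ with $Y \leq X' \leq Z$ and $X' \neq Y$, pick $v$ with $X'(v) > Y(v)$ maximising $X'(v)$; a symmetric argument (using $Y$ instead of $Z$ as the valid reference height) shows that decrementing $v$ again yields a valid $k$-height. Iterating these two phases produces a sequence of exactly $\dd(X,Z) + \dd(Z,Y) = \dd(X,Y)$ $\MM$-transitions joining $X$ to $Y$, completing the proof.
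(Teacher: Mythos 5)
Your proof is correct, and it takes a genuinely different route from the paper's. The lower bound is the same one-line observation in both. For the constructive upper bound, the paper does not route through the join $Z = X \vee Y$: it builds a path directly from $X$ to $Y$ by a single greedy phase in which each move decreases $\dd(\cdot, Y)$ by exactly~$1$. Concretely, assuming (by symmetry) that $S := \{v : X(v) < Y(v)\}$ is nonempty, it picks $v_0 \in S$ with $X(v_0)$ minimal; if incrementing at $v_0$ is blocked by a neighbour $v_1$ with $X(v_1) < X(v_0)$, then the chain $X(v_1) \leq X(v_0)-1 < Y(v_0)-1 \leq Y(v_1)$ shows $v_1 \in S$ with a strictly smaller $X$-value, contradicting the choice of $v_0$. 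Your two-phase argument (climb from $X$ to $Z$, then descend from $Z$ to $Y$) is equally valid, and the extremality argument you use in each phase is structurally the same as the paper's. The trade-off: your version leans explicitly on the lattice structure, in particular on $X \vee Y \in \Omega$, which the paper asserts but does not verify (it is easy but nonzero work), and it also needs the identity $2\max\{a,b\} - a - b = |a - b|$; the paper's single-phase version avoids both and is a bit shorter. On the other hand, your approach has the small aesthetic advantage that each phase is monotone, which makes the intermediate states easy to visualise and dovetails nicely with the monotone-coupling theme of the rest of the paper.
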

\begin{proof}
  In each step of the Markov chain $\MM=(X_t)$, the value $d(X_t, Y)$
  changes by at most one. Hence, $\MM$ cannot reach $Y$ from $X$ in
  less than $\dd(X, Y)$ steps.
    
  Suppose $X\neq Y$. We claim that there is a $X'\in\Omega$, such that $(X, X')$
  is a transition of~$\MM$ and~$\dd(X', Y) < \dd(X, Y)$.
  By symmetry, we can assume that $S:=\{ v \mid X(v) < Y(v)\}$ is nonempty and
  choose $v_0\in S$ with $X(v_0)$ being minimal. 
  If increasing the value of $X$ at $v_0$ results in
  a valid $k$-height we have found $X'$. Otherwise there must be a vertex $v_1$ adjacent
  to~$v_0$ with~$X(v_1)<X(v_0)$. Because of this and because $Y$ is a $k$-height, 
  \[
  X(v_1) \leq X(v_0) - 1 < Y(v_0) - 1 \leq Y(v_1)\;.
  \] 
  Hence, $v_1\in S$ and $X(v_1) < X(v_0)$. This contradicts the choice of $v_0$.
\end{proof}

For the Markov chain $\MM$ a natural monotone coupling is given by
using the same random vertex $v$, random $p$, and offset $\triangle$ for both
$X_t$ and $Y_t$. Unfortunately, this coupling does not satisfy
$\mathbb{E}[\dd(X_{t+1}, Y_{t+1})] < \dd(X_t, Y_t)$. For an example consider a
2-path $a$ --- $b$ --- $c$ as graph and $k=3$, if
$X_t(a)=Y_t(a)=X_t(c)=Y_t(c)=1$, $X_t(b)=0$, and $Y_t(b)=2$, then
$\dd(X_t, Y_t) =2$ and
$\mathbb{E}[\dd(X_{t+1}, Y_{t+1})] = \frac{1}{2} 2 + \frac{1}{2}( \frac{4}{6}3
+ \frac{2}{6}1) = \frac{13}{6} > 2$.

\subsection{The block Markov chain}\label{subsection:preliminaries_blockchain}

In this section $\BB$ will always be a family of \textit{blocks} of a
graph $G$, that is $\BB$ is a (multi)set of subsets of the vertices of
$G$ which forms a cover, i.e., for each vertex $v\in V$ there is a
$B\in\BB$ with $v\in B$.

Typically a family of blocks consists of well connected subsets of the graph.
For instance, if~$G$ is a $a\times b$ grid, then
$\BB$ could be the family of all $4\times 4$ subgrids.

The \textit{boundary} of a block $B\in\BB$ is the set $\partial B :=
\left\{ v\in V\setminus B\;\mid\; \{v, w\}\in E\text{ for some }w\in B\right\}$.
With $\Omega_B$ we denote the set of $k$-heights of the subgraph of $G$ induced
by $B$, i.e.,
$$
  \Omega_B := \left\{\varphi: B\to \{0, \ldots, k\} \;\mid\;\varphi\textit{ }k\text{-height w.r.t. }G[B]\right\}.
$$ 

For~$X\in\Omega$ and any~$\varphi: B\to \{0, \ldots, k\}$, we
define~$\REP X + \varphi!: V\to\{0, \ldots, k\}$ as the assignment which maps
a $v\in B$ to $\varphi(v)$ and a $v\in V\setminus B$ to $X(v)$.

With $\Omega_{B\vert X}$ we denote the set of \textit{admissible fillings} of
$B$ in $X$, this set consists of all $k$-heights in $\Omega_B$ which extend $X$, i.e., 
$$
\Omega_{B\vert X} := \left\{ \varphi\in\Omega_B\;\mid\; \REP X+\varphi!\in\Omega \right\}.
$$
Note that if two~$k$-heights~$X, X'\in\Omega_B$ agree on~$\partial B$,
i.e.,~$X(v)=X'(v)$ for all~$v\in\partial B$, then,
\mbox{$\Omega_{B\vert X} = \Omega_{B\vert X'}$}. This allows us to
use the notation $\Omega_{B\vert X}$ also in the case where the $k$-height $X$ is only
defined on $\partial B$, we call such a $X\in\Omega_{\partial B}$ a
\textit{boundary constraint}. A boundary constraint~$X\in\Omega_{\partial B}$
is called \textit{extensible}, if~$\Omega_{B\vert X}\neq\emptyset$.

For a fixed family of blocks $\BB$ the \textit{block Markov chain}
$\MM_{\BB}$ operates on the set of $k$-heights $\Omega$.
A transition depends on a
block $B\in\BB$ and an admissible filling
$\varphi\in\Omega_{B\vert X_t}$ chosen at random:

\begin{algorithm}
\DontPrintSemicolon
\caption{Transition of block Markov chain $\MM_{\BB}$: $X_t \to X_{t+1}$}
\label{alg:block_chain_transition}
Sample $B\in \BB$, and $\varphi\in\Omega_{B\vert X_t}$, and $p\in [0, 1]$ uniformly at random\;
\medskip

\uIf{$p \leq \frac{1}{2}$}
{
    $X_{t+1} \gets \REP X_t + \varphi!$\;
}
\Else{
    $X_{t+1} \gets X_t$\;
}
\Ret $X_{t+1}$
\smallskip
\end{algorithm}

Chain~$\MM_{\BB}$ can be seen as a boosted version of the
up/down chain~$\MM$, as in each transition the values of an entire
block are updated. Assuming that all blocks of~$\BB$ are of constant
small size, one can implement a computer simulation
of~$\MM_{\BB}$ efficiently by preprocessing all
sets~$\Omega_{B\lvert X}$. Hence, the mixing behaviour
of~$\MM_{\BB}$ is a problem of independent interest. For us, however,
$\MM_{\BB}$ will mainly serve as an auxiliary tool for
proving that the up/down chain $\MM$ is rapidly mixing.

The existence of a monotone coupling of $\MM_{\BB}$ is
non-trivial, it will be the main part of our proof in Section
\ref{section:proof_main_result}. The fact that monotone couplings exist for
$\MM$ as well as for $\MM_{\BB}$ enables us to
directly apply \textit{coupling from the past} introduced by Propp and
Wilson~(see \cite{LPW17}) for uniform sampling from $\Omega$ and for empirically
estimating the mixing times $\tau_{\MM}(\varepsilon)$ and
$\tau_{\MM_{\BB}}(\varepsilon)$.

\subsection{A discrete version of a theorem by Strassen
                   and the block divergence}
\label{ssec:strassen}

Let $P$ be a finite partially ordered set and $\mu_1, \mu_2: P\to [0, 1]$
probability distributions on the elements of $P$. A set $U\subset P$ is an
\textit{upset of $P$} if $x\in U$ and~$x\leq y$ implies
$y\in U$. We say $\mu_1$ is \textit{stochastically dominated} by~$\mu_2$, if
$\mu_1(U)\leq \mu_2(U)$ for all upsets $U\subset P$.

The following theorem is a discrete application of a theorem by Strassen
\cite[Theorem 11]{strassen65}. Its application to Markov chains is also
covered in a very accessible way in \cite{lindvall99}. In the problem set
\cite{lalley07} a purely discrete proof using a MinFlow-MaxCut argument is
suggested.

\begin{theorem}\label{theorem:discrete_strassen}
  Let $\mu_1$ and $\mu_2$ be probability distributions on a finite partially
  ordered set $P$ such that $\mu_1$ is stochastically dominated by
  $\mu_2$. Then there exists a probability distribution $\lambda$ on
  $P\times P$ with the following properties:
    \begin{itemize}
    \item[1)] $\lambda$ is a joint distribution of $\mu_1$ and $\mu_2$,
      i.e., \begin{align*}
  \forall x\in P: \sum_{y\in P} \lambda(x, y) &= \mu_1(x)\text{,\hspace{0.1cm} and}\\
  \forall y\in P: \sum_{x\in P} \lambda(x, y) &= \mu_2(y)\hspace{0.5mm}.
        \end{align*}
        \item[2)] If $\lambda(x, y) > 0$, then $x \leq y$ in $P$.
    \end{itemize}
\end{theorem}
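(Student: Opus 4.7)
The plan is to derive the theorem from the max-flow / min-cut theorem applied to a bipartite auxiliary network that directly encodes the desired coupling. I would construct a network with source $s$, sink $t$, and two disjoint copies $P_1, P_2$ of $P$: add an edge $s \to x$ of capacity $\mu_1(x)$ for each $x \in P_1$, an edge $y \to t$ of capacity $\mu_2(y)$ for each $y \in P_2$, and an edge $x \to y$ of infinite capacity whenever $x \leq y$ in $P$.

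The total capacity leaving $s$ equals $1$, so the maximum flow is at most $1$. To establish equality, I would analyse the minimum cuts. Any finite-capacity cut has the form $C = \{s\} \cup A \cup B$ with $A \subset P_1$ and $B \subset P_2$, and must contain the upset $U := \{y \in P \mid y \geq x \text{ for some } x \in A\}$ on the right, since otherwise an infinite-capacity edge $x \to y$ with $x \in A$ and $y \notin B$ would cross the cut. The cut capacity equals $(1 - \mu_1(A)) + \mu_2(B)$, minimised over admissible $B$ at $B = U$. Using $A \subset U$ together with the stochastic domination hypothesis yields
$$\mu_1(A) \;\leq\; \mu_1(U) \;\leq\; \mu_2(U),$$
so every cut has capacity at least $1$. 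By max-flow / min-cut, a flow $f$ of value $1$ exists. Setting $\lambda(x, y) := f(x \to y)$ for $x \leq y$ in $P$ and $\lambda(x, y) := 0$ otherwise produces a non-negative function summing to $1$; flow conservation at the vertices of $P_1$ and $P_2$ yields the two marginal identities in~(1), while property~(2) holds by construction, since only edges with $x \leq y$ are present.

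The main obstacle is to identify the right candidates for minimum cuts, namely those whose right-hand part is an upset of $P$. Once this is in place, the stochastic domination hypothesis is exactly what is needed to bound the cut capacity from below, and no further subtleties arise. An alternative, more hands-on route would proceed by induction on $\lvert P \rvert$: pick a maximal element $y_0$ in the support of $\mu_2$, transfer as much mass as possible to a maximal $x_0 \leq y_0$ in the support of $\mu_1$, and verify that stochastic domination persists for the reduced distributions; the flow argument, however, seems cleaner and more transparent.
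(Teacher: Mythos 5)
The paper does not actually prove Theorem~\ref{theorem:discrete_strassen}; it cites Strassen's original result, Lindvall's exposition, and a problem set of Lalley in which, as the paper notes, ``a purely discrete proof using a MinFlow--MaxCut argument is suggested.'' Your proof is exactly this suggested flow argument, and it is correct: the finite-capacity cuts are precisely those whose right part $B \subseteq P_2$ contains the upset $U$ generated by $A = C \cap P_1$, the cut capacity $1 - \mu_1(A) + \mu_2(B)$ is minimized at $B = U$, the chain $\mu_1(A) \le \mu_1(U) \le \mu_2(U)$ uses $A \subseteq U$ together with stochastic domination, and flow conservation on the saturated source and sink edges gives the two marginal identities, while property~(2) holds because the only middle edges present are those with $x \le y$.
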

We will make use of Theorem \ref{theorem:discrete_strassen} later when
constructing a monotone coupling~$(X_t, Y_t)$ of~$\MM_{\BB}$.
For the rapid convergence between $X_t$ and $Y_t$, the \textit{block
  divergence}, as introduced in the following, plays a key role.

We call a pair $(X, Y)\in\Omega\times\Omega$ of $k$-heights a \textit{cover
pair}, if $X\leq Y$ and~$\dd(X, Y)=1$, i.e., $X$ and~$Y$ differ at a
single vertex $v$ where \mbox{$Y(v) = X(v) + 1$}. Let $B\in\BB$ be
some block. If $v\in \partial B$, then the sets of admissible
fillings~$\Omega_{X\vert B}$ and~$\Omega_{Y\vert B}$ may differ. Let
$\mathcal{U}(\Omega_{B\vert X})$ and $\mathcal{U}(\Omega_{B \vert Y})$ denote
the uniform distributions on~$\Omega_{B\vert X}$ and~$\Omega_{B\vert Y}$,
respectively. We can view $\mathcal{U}(\Omega_{B\vert X})$
and~$\mathcal{U}(\Omega_{B\vert Y})$ as distributions on
the partially ordered set $\Omega_B$. Later we will see
that~$\mathcal{U}(\Omega_{B\vert X})$ is stochastically dominated by
$\mathcal{U}(\Omega_{B\vert Y})$. Hence,
Theorem~\ref{theorem:discrete_strassen} provides a distribution $\lambda_{B, X, Y}$
on $\Omega_B \times \Omega_B$, which in fact is a distribution
on~$\Omega_{B\vert X}\times \Omega_{B\vert Y}$, as other elements of
$\Omega_B \times \Omega_B$ have zero probability.

When constructing the monotone coupling of $\MM_{\BB}$ we aim
for a rapid convergence of $X_t$ and~$Y_t$. For this it will turn out to be
crucial that when~$(X', Y')$ is drawn from $\lambda_{B, X, Y}$, then the
distance~\mbox{$\dd(X', Y') := \sum_{v\in B} Y'(v) - X'(v)$} is small in
expectation. We call this quantity
$$
  E_{B, v} := \max\left\{ \mathbb{E}_{\lambda_{B, X, Y}}[\dd(X', Y')]
   \;\middle|\;\begin{aligned}
                &(X, Y)\in\Omega\times\Omega\text{ cover pair},\\
                &Y(v) = X(v) + 1
               \end{aligned}\right\}
$$
the \textit{block divergence} for block $B\in\BB$ and boundary vertex
$v\in\partial B$.

The distribution $\lambda_{B, X, Y}$ only depends on the values that $X$ and
$Y$ take on $\partial B$; so for computing $E_{B, v}$, we only need to
maximize over
pairs~\mbox{$(X, Y)\in \Omega_{\partial B}\times\Omega_{\partial B}$} of
extensible boundary constraints that are cover relations on $\partial B$. But
how can we compute $\mathbb{E}_{\lambda_{B, X, Y}}[d(X', Y')]$?
Lemma~\ref{lemma:linearity_expectation} gives an answer.

For an admissible filling $\varphi\in\Omega_B$, let
$w(\varphi) := \sum_{v\in V} \varphi(v)$ be its \textit{weight}.

\begin{lemma}\label{lemma:linearity_expectation}
  Let $(X, Y)\in\Omega\times\Omega$ be a cover relation and $B\in\BB$
  some block. Let~$\varphi_1 \sim \mathcal{U}(\Omega_{B\vert X})$ and
  $\varphi_2 \sim \mathcal{U}(\Omega_{B\vert Y})$ be random variables drawn uniformly
  from the admissible fillings of $B$ with respect to $X$ and $Y$, and $k$. Then it
  holds:
  $$\mathbb{E}_{\lambda_{B, X, Y}}[\dd(X', Y')] = \mathbb{E}[w(\varphi_2)] -
  \mathbb{E}[w(\varphi_1)]\hspace{0.1cm}$$
\end{lemma}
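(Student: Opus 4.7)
The plan is to unfold both sides of the claimed equality using the two defining properties of the coupling distribution~$\lambda_{B, X, Y}$ supplied by Theorem~\ref{theorem:discrete_strassen}. The heart of the argument is that because $\lambda_{B,X,Y}$ is a \emph{monotone} coupling, the quantity~$\dd(X', Y')$ is no longer a sum of absolute differences but simply a difference of weights, which then splits by linearity of expectation.

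More concretely, first I would note that by property~(2) of Theorem~\ref{theorem:discrete_strassen}, every pair~$(X', Y')$ in the support of $\lambda_{B, X, Y}$ satisfies $X'\leq Y'$ pointwise on~$B$. Consequently, the absolute value in the definition of $\dd$ disappears and one may write
\[
  \dd(X', Y') \;=\; \sum_{v\in B}\bigl(Y'(v)-X'(v)\bigr) \;=\; w(Y')-w(X')\hspace{0.1cm},
\]
so linearity of expectation yields
\[
  \mathbb{E}_{\lambda_{B, X, Y}}[\dd(X', Y')] \;=\; \mathbb{E}_{\lambda_{B, X, Y}}[w(Y')] - \mathbb{E}_{\lambda_{B, X, Y}}[w(X')]\hspace{0.1cm}.
\]

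Next, I would invoke property~(1) of Theorem~\ref{theorem:discrete_strassen}, which says that the marginals of $\lambda_{B, X, Y}$ coincide with $\mathcal{U}(\Omega_{B\vert X})$ and $\mathcal{U}(\Omega_{B\vert Y})$. Since $w(\cdot)$ depends only on the first (respectively second) coordinate, the two expectations above reduce to the marginal expectations
\[
  \mathbb{E}_{\lambda_{B, X, Y}}[w(X')] \;=\; \mathbb{E}[w(\varphi_1)]\quad\text{and}\quad
  \mathbb{E}_{\lambda_{B, X, Y}}[w(Y')] \;=\; \mathbb{E}[w(\varphi_2)]\hspace{0.1cm},
\]
and combining the two displays gives the claim.

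There is no real obstacle here: the lemma is essentially a bookkeeping identity. The only thing one should be careful about is the implicit assumption that $\mathcal{U}(\Omega_{B\vert X})$ is stochastically dominated by $\mathcal{U}(\Omega_{B\vert Y})$, so that $\lambda_{B, X, Y}$ exists at all; this is announced in the paragraph preceding the lemma and will be shown separately, so it may be taken for granted within this proof.
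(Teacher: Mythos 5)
Your proposal is correct and follows essentially the same line of reasoning as the paper's proof: expand~$\dd(X', Y')$ into a sum over~$v\in B$ using the monotonicity of the coupling (so that the difference is always nonnegative), apply linearity of expectation, and then pass to the marginal distributions guaranteed by Theorem~\ref{theorem:discrete_strassen}. The only cosmetic difference is that you first collapse~$\dd(X', Y')$ to~$w(Y')-w(X')$ before taking expectations while the paper works vertex-by-vertex and aggregates at the end; the content is identical.
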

\begin{proof}
  In the calculation we use that $(X',Y')$ is a pair drawn from the
  distribution $\lambda_{B, X, Y}$ with marginals $\Omega_{B\vert X}$ and
  $\Omega_{B\vert Y}$ provided by Theorem \ref{theorem:discrete_strassen}. In
  particular $X' \leq Y'$, i.e., $X'(v) \leq Y'(v)$ for all $v$.
\begin{align*}
  \mathbb{E}_{\lambda_{B, X, Y}}[d(X', Y')] &=
      \sum_{v\in B} \mathbb{E}_{\lambda_{B, X, Y}}[Y'(v)] -
                   \mathbb{E}_{\lambda_{B, X, Y}}[X'(v)] \\
   &= \sum_{v\in B} \mathbb{E}_{\mathcal{U}(\Omega_{B\vert Y})}[Y'(v)] -
                   \mathbb{E}_{\mathcal{U}(\Omega_{B\vert X})}[X'(v)] \\
   &= \mathbb{E}[w(\varphi_2)] - \mathbb{E}[w(\varphi_1)].
\end{align*}
\vskip-1.3em
\end{proof}

\section{Mixing time of up/down and block Markov chain}
\label{section:proof_main_result}

\subsection{Stochastic dominance in block Markov chain}

To apply Theorem~\ref{theorem:discrete_strassen} in our context we
need to verify stochastic dominance between $\mathcal{U}(\Omega_{B\vert X})$ 
and $\mathcal{U}(\Omega_{B\vert Y})$. This will be done
in Proposition~\ref{prop:stoch-dom}. In the proof we make use of the
Ahlswede--Daykin \textit{4~Functions Theorem}.

\begin{lemma}[4 Functions Theorem]
  \label{lemma:four_functions}
  Let $D$ be a distributive lattice and
  $f_1, f_2, f_3, f_4: D\to\mathbb{R}_{\geq 0}$,
  such that for all $a, b\in D$:
$$f_1(a) f_2(b) \leq f_3(a \lor b) f_4(a\land b).$$
Then for all $A, B\subset D$:
$$f_1(A)f_2(B)\leq f_3(A\lor B)f_4(A\land B),$$
where $f_i(A) = \sum_{a\in A}f_i(a)$,
$A\lor B = \{ a\lor b\;\mid\; a\in A, b\in B\}$
and~$A\land B = \{ a\land b\;\mid\; a\in A, b\in B\}$.
\end{lemma}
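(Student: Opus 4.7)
The plan is to prove the theorem by induction on $|A| + |B|$. The base case $|A| = |B| = 1$ is precisely the pointwise hypothesis. For the inductive step I would assume without loss of generality $|A| \geq 2$, pick any $a \in A$, and write $A = A' \cup \{a\}$ with $|A'| = |A| - 1$. The inductive hypothesis applied to the pairs $(A', B)$ and $(\{a\}, B)$ yields
\[
f_1(A') f_2(B) \leq f_3(A' \vee B) f_4(A' \wedge B), \qquad f_1(a) f_2(B) \leq f_3(\{a\} \vee B) f_4(\{a\} \wedge B).
\]
Summing these and using $f_1(A) = f_1(A') + f_1(a)$, it would suffice to prove the auxiliary inequality
\[
f_3(A' \vee B) f_4(A' \wedge B) + f_3(\{a\} \vee B) f_4(\{a\} \wedge B) \leq f_3(A \vee B) f_4(A \wedge B).
\]

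The main obstacle is exactly this auxiliary step, and it is here that the distributive lattice structure of $D$ is essential. The sets $A' \vee B$ and $\{a\} \vee B$ may overlap, so $f_3(A \vee B)$ is in general strictly smaller than $f_3(A' \vee B) + f_3(\{a\} \vee B)$, and analogously for the meets. Via inclusion--exclusion the target reduces to controlling $f_3$ and $f_4$ on the intersection sets $(A' \vee B) \cap (\{a\} \vee B)$ and $(A' \wedge B) \cap (\{a\} \wedge B)$, which I would attempt to bound by a further appeal to the pointwise hypothesis together with distributivity (e.g., every element of the first intersection is of the form $a' \vee b = a \vee b'$, and distributivity forces $a \wedge a' \wedge \dotsb$ identities that feed back into the pointwise bound).

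As a cleaner alternative that sidesteps most of the intersection bookkeeping, I would first reduce to the case $D = \{0, 1\}^n$ via Birkhoff's representation theorem, which embeds any finite distributive lattice into a product of two-element chains via a map that preserves both $\vee$ and $\wedge$; extending $f_1, \ldots, f_4$ by zero outside the image preserves the pointwise hypothesis, since any pair involving an element outside the image contributes zero to the left-hand side. One then inducts on $n$. The base case $n \leq 1$ is a finite case analysis; the only non-trivial subcase is $A = B = \{0, 1\}$, where the identity $f_1(0) f_2(0) \cdot f_1(1) f_2(1) = f_1(0) f_2(1) \cdot f_1(1) f_2(0)$ shows that the four pointwise bounds are not independent and together yield the claim. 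For the inductive step on $n$, splitting $D = D' \times \{0, 1\}$ and decomposing $A = A_0 \sqcup A_1$, $B = B_0 \sqcup B_1$ by the last coordinate, the target expands into a finite sum of products of values $f_i^{(\epsilon)}(\cdot)$ on subsets of $D'$ obtained from $A_{\epsilon'}, B_{\epsilon''}$ by joins and meets in $D'$, and each term is controllable by the inductive hypothesis on the smaller lattice $D'$.
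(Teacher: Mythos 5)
The paper does not actually prove Lemma~\ref{lemma:four_functions}: it is the classical Ahlswede--Daykin Four Functions Theorem, and the authors simply refer to the original article~\cite{ahlswedeDaykin78} and to Alon and Spencer~\cite{alonSpencer16}. So there is no in-paper proof to compare against; your attempt has to stand or fall against the standard proof from the literature, which is essentially the second route you sketch.

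Your first route (induction on $|A|+|B|$) is not a proof: the overlap obstacle you identify for $A'\vee B$ and $\{a\}\vee B$ is real and you do not overcome it. Your second route is the right shape --- Birkhoff embedding into $\{0,1\}^n$, induction on $n$ --- and the reduction-by-zero-extension remark is correct. But there are two genuine gaps. First, the base case $n=1$, $A=B=\{0,1\}$ is the heart of the theorem and you do not actually carry it out. Writing $\alpha=f_1(0)$, $\beta=f_1(1)$, $\gamma=f_2(0)$, $\delta=f_2(1)$, $a=f_3(0)$, $b=f_3(1)$, $c=f_4(0)$, $d=f_4(1)$, the pointwise bounds give $\alpha\gamma\le ac$, $\alpha\delta\le bc$, $\beta\gamma\le bc$, $\beta\delta\le bd$, and summing them term by term only yields $\le ac+2bc+bd$, not the desired $\le ac+ad+bc+bd$. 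The missing step is the quadratic argument (assume $bc>0$; the degenerate case is easy): $(bc-\alpha\delta)(bc-\beta\gamma)\ge 0$ together with $\alpha\delta\cdot\beta\gamma=\alpha\gamma\cdot\beta\delta\le acbd$ gives $bc(\alpha\delta+\beta\gamma)\le b^2c^2+abcd$, hence $\alpha\delta+\beta\gamma\le bc+ad$. The identity you cite is the right ingredient, but on its own it does not "yield the claim." Second, the inductive step as sketched does not go through: when you split $A=A_0\sqcup A_1$ and $B=B_0\sqcup B_1$ by the last coordinate, the set $A\vee B$ does \emph{not} decompose as a disjoint union indexed by $(\epsilon',\epsilon'')$ --- its $1$-slice is $(A_0\vee B_1)\cup(A_1\vee B_0)\cup(A_1\vee B_1)$, and these pieces overlap. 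Since you need a \emph{lower} bound on $f_3(A\vee B)$, this overlap works against you, and "each term is controllable by the inductive hypothesis" does not resolve it. The standard fix, which you omit, is to first reduce to $A=B=D$ by replacing $f_1,f_2,f_3,f_4$ with $f_1\chi_A$, $f_2\chi_B$, $f_3\chi_{A\vee B}$, $f_4\chi_{A\wedge B}$ (one checks the pointwise hypothesis is preserved), and then to define $g_i(x):=f_i(x,0)+f_i(x,1)$ on $D'=\{0,1\}^{n-1}$; the hypothesis for the $g_i$ is verified slice-by-slice via the $n=1$ case, and the inductive hypothesis applied to the $g_i$ finishes the argument. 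Without that reduction, the last-coordinate decomposition does not close.
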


The original proof of the 4 Functions Theorem can be found in
Ahlswede and Daykin~\cite{ahlswedeDaykin78},
another source is \textit{The Probabilistic Method} by Alon and 
Spencer~\cite{alonSpencer16}.

\begin{lemma}\label{lemma:downsets_and_upsets}
  Let $X, Y\in\Omega$, $X\leq Y$ be $k$-heights of $G=(V, E)$, and let
  $B\subset V$ be a block. Let~$D$ be the smallest distributive sublattice of
  $\Omega_B$ containing $\Omega_{B\vert X}\cup\Omega_{B\vert
    Y}$. Then~$\Omega_{B\vert X}$ forms a downset and~$\Omega_{B\vert Y}$
  forms an upset in~$D$.
\end{lemma}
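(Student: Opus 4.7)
The plan is to translate the extension conditions into explicit per-edge inequalities on the boundary, and then track how pointwise $\land$ and $\lor$ interact with these inequalities. Concretely, I would first argue that $\varphi\in\Omega_{B\vert X}$ iff $\varphi\in\Omega_B$ and $|\varphi(v)-X(w)|\le 1$ for every edge $\{v,w\}$ with $v\in B$ and $w\in\partial B$, and analogously for $\Omega_{B\vert Y}$. The standard fact that the pointwise minimum and maximum of $1$-Lipschitz maps are $1$-Lipschitz, combined with the fact that the two-sided bounds $X(w)-1\le\varphi(v)\le X(w)+1$ are preserved under $\min$ and $\max$, then yields that $\Omega_{B\vert X}$ and $\Omega_{B\vert Y}$ are each sublattices of $\Omega_B$.

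The key observation is the cross-closure. Since $X\le Y$, the upper boundary constraint coming from $X$ is the tighter one and the lower boundary constraint coming from $Y$ is the tighter one. So if $\varphi_1\in\Omega_{B\vert X}$ and $\varphi_2\in\Omega_{B\vert Y}$, then for any edge $\{v,w\}$ with $v\in B$, $w\in\partial B$:
\begin{align*}
(\varphi_1\land\varphi_2)(v) &\le \varphi_1(v)\le X(w)+1, \\
(\varphi_1\land\varphi_2)(v) &\ge \min\{X(w)-1,\,Y(w)-1\}=X(w)-1,
\end{align*}
so $\varphi_1\land\varphi_2\in\Omega_{B\vert X}$. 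The symmetric computation gives $\varphi_1\lor\varphi_2\in\Omega_{B\vert Y}$. A straightforward induction on the number of $\land,\lor$ operations used to build an element of $D$ from generators in $\Omega_{B\vert X}\cup\Omega_{B\vert Y}$ then yields the clean description $D=\Omega_{B\vert X}\cup\Omega_{B\vert Y}$.

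With $D$ identified, the downset property is immediate: if $\psi\in D$ and $\psi\le\varphi$ for some $\varphi\in\Omega_{B\vert X}$, then either $\psi\in\Omega_{B\vert X}$ already, or $\psi\in\Omega_{B\vert Y}$, in which case the upper boundary bound $\psi(v)\le\varphi(v)\le X(w)+1$ comes for free from $\psi\le\varphi$, and the lower bound $\psi(v)\ge Y(w)-1\ge X(w)-1$ comes from $\psi\in\Omega_{B\vert Y}$; hence $\psi\in\Omega_{B\vert X}$. The upset property for $\Omega_{B\vert Y}$ is completely symmetric.

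The main obstacle is really just step two — recognising that the hypothesis $X\le Y$ is precisely what makes the cross-lattice operations collapse $D$ to the union $\Omega_{B\vert X}\cup\Omega_{B\vert Y}$; everything else is routine inequality-chasing on the boundary.
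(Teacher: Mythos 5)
Your proof is correct, but it takes a genuinely different route from the paper's. The paper never characterizes $D$ at all: it argues by contradiction that if $g\in D$, $g\le h$, $h\in\Omega_{B\vert X}$ but $g\notin\Omega_{B\vert X}$, then there is an edge $\{v,v'\}$ with $v\in B$, $v'\in\partial B$ for which $g(v)<X(v')-1$; since every generator $f\in\Omega_{B\vert X}\cup\Omega_{B\vert Y}$ satisfies $f(v)\ge X(v')-1$ (using $Y\ge X$ for the $\Omega_{B\vert Y}$ generators), the minimum of $D$ also satisfies this bound, contradicting $g\in D$. You instead prove the stronger structural statement $D=\Omega_{B\vert X}\cup\Omega_{B\vert Y}$ by first showing each of $\Omega_{B\vert X}$ and $\Omega_{B\vert Y}$ is a sublattice of $\Omega_B$ and then establishing the cross-closure $\varphi_1\land\varphi_2\in\Omega_{B\vert X}$, $\varphi_1\lor\varphi_2\in\Omega_{B\vert Y}$ for $\varphi_1\in\Omega_{B\vert X}$, $\varphi_2\in\Omega_{B\vert Y}$ — all of which depends on $X\le Y$ exactly where you say it does. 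After that the downset and upset claims are routine inequality checks. Your route is longer but yields a cleaner and more informative picture of $D$, which could be reused elsewhere; the paper's route is shorter and only establishes what is strictly needed, by exploiting the one inequality about $\min D$ rather than the full lattice structure.
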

\begin{proof}
  By symmetry, it suffices to show that $\Omega_{B\vert X}$ is a downset in
  $D$. So let~$g, h\in D$, with~$g\leq h$ and $h\in\Omega_{B\vert X}$. We have to
  show $g\in\Omega_{B\vert X}$.

  Suppose $g\notin\Omega_{B\vert X}$, then since $g\in \Omega_B$, we must
  have~\mbox{$\lvert g(v) - X(v') \rvert > 1$} for two adjacent vertices
  $v\in B$ and \mbox{$v'\in\partial B$}.  With~$h\in\Omega_{B\vert X}$ we
  conclude~\mbox{$g(v) \leq h(v) \leq X(v') + 1$} so that
    $$g(v) < X(v') - 1\hspace{0.05cm}.$$

    For all $f\in\Omega_{B\vert X}$ we have $f(v) \geq X(v') - 1$ by definition.
    Also for all $f\in\Omega_{B\vert Y}$ we have~$f(v) \geq Y(v') - 1$ and using
    $Y \geq X$ we get $f(v) \geq X(v') - 1$ as well. Therefore,
    $$(\min D)(v) = \min \{ f(v) \;\mid\; f\in \Omega_{B\vert X}\cup\Omega_{B\vert Y} \}
    \geq X(v') - 1 > g(v).$$
    This is a contradiction to $g\in D$.
\end{proof}

\begin{proposition}\label{prop:stoch-dom}
  Let $X, Y\in\Omega$, with $X\leq Y$ be $k$-heights of $G=(V, E)$, and let
  $B\subset V$ be a block. Then on $\Omega_B$,
  $\mathcal{U}(\Omega_{B\vert X})$ is stochastically dominated by
  $\mathcal{U}(\Omega_{B\vert Y})$.
\end{proposition}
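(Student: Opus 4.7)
The plan is to establish, for every upset $U\subseteq \Omega_B$, the inequality
$$\lvert\Omega_{B\vert X}\cap U\rvert\cdot\lvert\Omega_{B\vert Y}\rvert \leq \lvert\Omega_{B\vert X}\rvert\cdot\lvert\Omega_{B\vert Y}\cap U\rvert,$$
which is equivalent to $\mathcal{U}(\Omega_{B\vert X})(U) \leq \mathcal{U}(\Omega_{B\vert Y})(U)$, the defining condition of stochastic dominance. I would derive this from a single application of the 4~Functions Theorem (Lemma~\ref{lemma:four_functions}) on the distributive lattice $\Omega_B$, with the four functions taken to be the indicators
$$f_1 = \mathbf{1}_{\Omega_{B\vert X}\cap U},\quad f_2 = \mathbf{1}_{\Omega_{B\vert Y}},\quad f_3 = \mathbf{1}_{\Omega_{B\vert Y}\cap U},\quad f_4 = \mathbf{1}_{\Omega_{B\vert X}}.$$
Applying the theorem with both subsets equal to $\Omega_B$, and using $\Omega_B\vee\Omega_B = \Omega_B\wedge\Omega_B = \Omega_B$, then yields exactly the desired inequality after summing the indicator values.

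The actual work lies in verifying the pointwise hypothesis $f_1(a)f_2(b)\leq f_3(a\vee b)f_4(a\wedge b)$ for all $a, b\in\Omega_B$. This is trivial unless $a\in\Omega_{B\vert X}\cap U$ and $b\in\Omega_{B\vert Y}$, so it suffices to treat that case. Both $a$ and $b$ lie in the sublattice $D$ supplied by Lemma~\ref{lemma:downsets_and_upsets}, hence $a\vee b$ and $a\wedge b$ lie in $D$ as well. Since $\Omega_{B\vert Y}$ is an upset of $D$ containing $b$, the relation $a\vee b\geq b$ forces $a\vee b\in\Omega_{B\vert Y}$; symmetrically, $\Omega_{B\vert X}$ being a downset of $D$ containing $a$, together with $a\wedge b\leq a$, yields $a\wedge b\in\Omega_{B\vert X}$. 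Finally, because $U$ is an upset of $\Omega_B$ and $a\in U$, we obtain $a\vee b\in U$. Thus $f_3(a\vee b) = f_4(a\wedge b) = 1$, and the pointwise inequality holds.

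The delicate step, and the one in which the hypothesis $X\leq Y$ enters, is the closure conclusion $a\vee b\in\Omega_{B\vert Y}$: in general $\Omega_{B\vert Y}$ is not an upset of the full lattice $\Omega_B$, so $a\vee b\geq b$ alone is insufficient. Lemma~\ref{lemma:downsets_and_upsets} is precisely what remedies this, since within the smaller lattice $D$ the joins and meets of elements from $\Omega_{B\vert X}$ and $\Omega_{B\vert Y}$ cannot escape to configurations that violate either boundary constraint. Once that lemma is invoked, the rest of the argument is a clean invocation of Ahlswede--Daykin.
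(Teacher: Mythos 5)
Your proof is correct and follows essentially the same route as the paper: the same four indicator functions, the same invocation of the Ahlswede--Daykin 4~Functions Theorem, and the same reliance on Lemma~\ref{lemma:downsets_and_upsets} to verify the pointwise hypothesis. The one small deviation is that you apply the 4~Functions Theorem directly on the lattice $\Omega_B$ (with the hypothesis trivially satisfied outside $D$, and $D$ only entering to establish $a\vee b\in\Omega_{B\vert Y}$ and $a\wedge b\in\Omega_{B\vert X}$), whereas the paper first restricts attention to $D$, views the two measures as supported on $D$, and works with upsets of $D$ throughout; your version sidesteps that preliminary reduction and is, if anything, slightly more direct with respect to the definition of stochastic dominance on $\Omega_B$ as stated in the proposition.
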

\begin{proof}
  Let $D$ be the smallest distributive sublattice of $\Omega_B$ containing
  $\Omega_{B\vert X}\cup\Omega_{B\vert Y}$, and consider
  $\mathcal{U}(\Omega_{B\vert X})$ and $\mathcal{U}(\Omega_{B\vert Y})$ as
  distributions on $D$ with zero probability outside of $\Omega_{B\vert X}$
  and~$\Omega_{B\vert Y}$, respectively. In particular,~$\Omega_{B\vert X}$
  and~$\Omega_{B\vert Y}$ have zero probability on $\Omega_B\setminus
  D$. Therefore, it suffices to show
  that~$\Omega_{B\vert X}(U) \leq \Omega_{B\vert Y}(U)$ for any upset~$U$
  in~$D$. This is equivalent~to
  $$0 \leq \mathcal{U}(\Omega_{B\vert Y})(U) - \mathcal{U}(\Omega_{B\vert
    X})(U) = \frac{\lvert U\cap \Omega_{B\vert Y}\rvert}{\lvert\Omega_{B\vert
      Y}\rvert}-\frac{\lvert U\cap \Omega_{B\vert
      X}\rvert}{\lvert\Omega_{B\vert X}\rvert}\hspace{0.05cm}.$$ Define 
  four functions $f_1, f_2, f_3, f_4: D\to\mathbb{R}_{\geq 0}$ as
  \vskip-8mm\vbox{}
  
  \begin{center}
  \parbox{.3\linewidth}{
    \begin{align*}
      f_1(h) &:= \chi_{U\cap\Omega_{B\vert X}}(h)\\
      f_3(h) &:= \chi_{U\cap\Omega_{B\vert Y}}(h)
    \end{align*}}           
  \parbox{.3\linewidth}{
    \begin{align*}
       f_2(h) &:= \chi_{\Omega_{B\vert Y}}(h)\\
       f_4(h) &:= \chi_{\Omega_{B\vert X}}(h)
    \end{align*}}
  \end{center}
  for all $h\in D$, where $\chi_S$ denotes the characteristic function of the
  set $S$. By Lemma~\ref{lemma:downsets_and_upsets}, $\Omega_{B\vert X}$ forms
  a downset and $\Omega_{B\vert Y}$ forms an upset in $D$. Aiming for an
  application of Lemma~\ref{lemma:four_functions}, we have to verify
  that $$f_1(h) f_2(g) \leq f_3(h \lor g) f_4(h\land g)$$ for all $h, g\in
  D$. If $f_1(h) f_2(g) = 0$, this holds trivially, so we assume
  \mbox{$f_1(h) = f_2(g) = 1$}. This implies $h\in U\cap \Omega_{B\vert X}$
  and~\mbox{$g\in\Omega_{B\vert Y}$}. Because~\mbox{$h\lor g\geq g$}
  and~$\Omega_{B\vert Y}$ is an upset,~\mbox{$h\lor g \in \Omega_{B\vert Y}$}, and
  because~$h\lor g\geq h$ and~$U$ is an upset,~\mbox{$h\lor g\in U$}. Hence,
  $h\lor g\in U\cap\Omega_{B\vert Y}$ and~\mbox{$f_3(h\lor g) = 1$}. Moreover,
  $h\land g \leq h\in\Omega_{B\vert X}$
  implies~$h\land g\in \Omega_{B\vert X}$, because~$\Omega_{B\vert X}$ is a
  downset. Hence, we also get~\mbox{$f_4(h\land g) = 1$}. Therefore,
  $f_3(h\lor g)f_4(h\land g) = 1$ and the assumption of
  Lemma~\ref{lemma:four_functions} holds. Lemma~\ref{lemma:four_functions}
  applied on $A=B=D$ yields

\begin{align*}
  0 &\leq f_3(D)f_4(D) - f_1(D)f_2(D)
      = \lvert U\cap\Omega_{B\vert Y}\rvert\cdot
    \lvert\Omega_{B\vert X}\rvert-\lvert U\cap\Omega_{B\vert X}\rvert\cdot
    \lvert\Omega_{B\vert Y}\rvert,
\end{align*}
division by $\lvert\Omega_{B\vert X}\rvert\cdot\lvert\Omega_{B\vert Y}\rvert$
yields the inequality we need.
\end{proof}

\subsection{Block coupling}

We are now ready to define the block coupling, which is a monotone coupling
$(X_t, Y_t)$ of the block Markov chain $\MM_{\BB}$.
Recall that $X_t \leq Y_t$ is a cover relation if $d(X_t, Y_t)=1$. In
this situation, we will randomly select a block $B\in\BB$ and
compute the joint distribution of~$\mathcal{U}(\Omega_{B\vert X_t})$ and $\mathcal{U}(\Omega_{B\vert Y_t})$ as described
in Theorem \ref{theorem:discrete_strassen}. A sample from this joint distribution
yields $k$-heights $X_{t+1}$, $Y_{t+1}$ which also satisfy $X_{t+1} \leq Y_{t+1}$. Hence, 
we have a monotone coupling $(X_t, Y_t)\mapsto (X_{t+1}, Y_{t+1})$ on the set
$$
S := \{ (X, Y)\in\Omega\times\Omega\;\mid\; (X, Y)\text{ is a cover relation}\}
$$
By Theorem \ref{theorem:path_coupling} the coupling defined on
$S\times S$ can be extended to a coupling defined on $\Omega\times\Omega$.
From the construction of the extended coupling it follows that the
monotonicity of the coupling on $S\times S$ is inherited by the extended
coupling. The transition  $(X_t, Y_t)\mapsto (X_{t+1}, Y_{t+1})$ is detailed
in Algorithm~\ref{alg:block_coupling}.

\begin{algorithm}[htb]
  \DontPrintSemicolon
\caption{Monotone coupling $(X_t \leq Y_t)\mapsto (X_{t+1} \leq Y_{t+1})$ of $\MM_{\BB}$}
\label{alg:block_coupling}
Sample $p\in [0, 1]$ uniform at random\;
\medskip

\uIf{$p \leq \frac{1}{2}$}
  {
    \uIf{$d(X_t, Y_t) \leq 1$}
    {
    Sample $B\in \BB$ uniformly at random\;
    \uIf{$X_t(v) = Y_t(v)$ \textup{for all} $v\in\partial B$}
    {
      Sample $\varphi$ from $\mathcal{U}(\Omega_{B\vert X_t})$ \Comment*[r]{Note that $\mathcal{U}(\Omega_{B\vert X_t})=\mathcal{U}(\Omega_{B\vert Y_t})$}
      $(X_{t+1}, Y_{t+1}) \gets (\REP X_t + \varphi!,\REP Y_t + \varphi!)$\;
    }
    \Else{
      Apply Strassen's Theorem on $\mathcal{U}(\Omega_{B\vert X_t})$
      and~$\mathcal{U}(\Omega_{B\vert Y_t})$\\
      Sample $(\varphi_X, \varphi_Y)$ from the
      joint distribution $\lambda_{B, X_t, Y_t}$ on
      $\Omega_{B\vert X_t}\times\Omega_{B\vert Y_t}$\;
        $(X_{t+1}, Y_{t+1}) \gets (\REP X_t + \varphi_X!,\REP Y_t + \varphi_Y!)$\;
    }
   }
   \Else{
     Define $(X_{t+1}, Y_{t+1})$ using path coupling (Theorem~\ref{theorem:path_coupling})\;}
   }
\Else{
  $(X_{t+1}, Y_{t+1}) \gets (X_t, Y_t)$\;}
\Ret $(X_{t+1}, Y_{t+1})$
\smallskip
\end{algorithm}

The intermediate step of first defining a coupling on cover relations before
extending it via path coupling is for the sake of the analysis of the mixing
time. In fact, as we have stochastic dominance
between~$\mathcal{U}(\Omega_{B\vert X_t})$
and~$\mathcal{U}(\Omega_{B\vert Y_t})$ for any~$X_t\leq Y_t$, not just for
cover relations, so we could use Strassen's Theorem
(Theorem~\ref{theorem:discrete_strassen}) directly to define a monotone
coupling on $\Omega\times\Omega$. However, it is easier to
analyse~$\mathbb{E}[d(X_{t+1}, Y_{t+1})]$ when~$X_t\leq Y_t$ is a cover
relation and to rely on path coupling (Theorem~\ref{theorem:path_coupling})
for the general case.

\begin{lemma}\label{lemma:one_step_block_coupling_distance}
  If $\beta$ is chosen such that
  $$
  1-\frac{1}{2\lvert\BB\rvert}\left(\#\{ B\in\BB
    \mid v\in B\}\hspace{0.2cm} -
    \sum_{B\in\BB\hspace{0.1cm}\mid\hspace{0.1cm}v\in \partial B}
    (E_{B, v}-1)\right)\leq \beta\hspace{0.1cm}
  $$
  for all vertices $v \in V$, and $X\leq Y$ is a cover relation in
  $\Omega\times\Omega$, and $(X,Y)\mapsto (X', Y')$ is a transition of the
  block coupling (Algorithm \ref{alg:block_coupling}), then the expected
  distance of $X'$ and $Y'$ satisfies
  $$\mathbb{E}[d(X', Y')] \leq\beta.$$
\end{lemma}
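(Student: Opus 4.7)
The plan is to condition on the random choices in Algorithm~\ref{alg:block_coupling} and perform a case analysis based on the unique vertex $v^*$ at which $X$ and $Y$ disagree.

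Since $X \leq Y$ is a cover relation with $d(X, Y) = 1$, there is a unique $v^* \in V$ with $Y(v^*) = X(v^*) + 1$, and $X(u) = Y(u)$ for all $u \neq v^*$. With probability $\tfrac12$ the coin gives $p > \tfrac12$, no transition is made, and $d(X', Y') = 1$. With the remaining probability $\tfrac12$ a block $B \in \BB$ is drawn uniformly at random and the coupling of Algorithm~\ref{alg:block_coupling} is executed (we are in the path-coupling branch with $d(X_t, Y_t) \leq 1$). I would then partition the blocks according to the position of $v^*$ relative to $B$ and $\partial B$.

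If $v^* \in B$, then $X$ and $Y$ agree on $\partial B$ (they agree everywhere except at $v^* \in B$), so Algorithm~\ref{alg:block_coupling} samples a common filling $\varphi$ for both chains, yielding $X' = Y'$ on all of $V$ and hence $d(X', Y') = 0$. If $v^* \notin B \cup \partial B$, then again $X$ and $Y$ agree on $\partial B$, the same filling is used, and $X', Y'$ differ only at $v^*$, so $d(X', Y') = 1$. The interesting case is $v^* \in \partial B$: here $X$ and $Y$ disagree on $\partial B$, Strassen's theorem applies by Proposition~\ref{prop:stoch-dom}, and $(\varphi_X, \varphi_Y) \sim \lambda_{B, X, Y}$ is jointly sampled with $\varphi_X \leq \varphi_Y$. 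Outside $B$ the configurations are unchanged, contributing exactly the unit disagreement at $v^*$; inside $B$ the expected discrepancy $\mathbb{E}_\lambda\bigl[\sum_{u \in B}(\varphi_Y(u) - \varphi_X(u))\bigr]$ is controlled, via Lemma~\ref{lemma:linearity_expectation} and the definition of the block divergence, so that the expected total distance after the move is at most $E_{B, v^*}$.

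Averaging the three sub-cases over the uniformly chosen block and combining with the lazy step, I would obtain
\[
\mathbb{E}[d(X', Y')] \leq \tfrac12 + \tfrac{1}{2\lvert\BB\rvert}\Bigl( \sum_{B\,:\, v^* \in \partial B} E_{B, v^*} \;+\; \lvert\BB\rvert - \#\{B : v^*\in B\} - \#\{B : v^* \in \partial B\}\Bigr),
\]
which, after regrouping by adding and subtracting $\#\{B : v^* \in \partial B\}$ in the parenthesis, simplifies to
\[
1 - \tfrac{1}{2\lvert\BB\rvert}\Bigl(\#\{B : v^* \in B\} - \sum_{B\,:\, v^* \in \partial B}(E_{B, v^*} - 1)\Bigr) \;\leq\; \beta
\]
by the hypothesis applied at $v = v^*$. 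The main obstacle is the boundary case $v^* \in \partial B$: one must carefully reconcile the fixed unit disagreement at $v^* \notin B$ with the expected resampling discrepancy inside $B$ and ensure the combined bound produces precisely the $E_{B, v^*} - 1$ term appearing in the statement; this in turn rests on verifying the stochastic-dominance hypothesis of Strassen's theorem via Proposition~\ref{prop:stoch-dom} and on tracking which contributions to the distance are summed over $B$ alone versus over $B \cup \{v^*\}$.
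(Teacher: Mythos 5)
Your proposal follows the paper's own proof essentially step for step: the same conditioning on the lazy coin, the same three-case split on the position of $v^*$ relative to $B$ and $\partial B$, the same appeal to Proposition~\ref{prop:stoch-dom} and Strassen's theorem in the boundary case, and the same algebraic regrouping from $\sum E_{B,v^*}$ to $\sum(E_{B,v^*}-1)$.

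The ``obstacle'' you flag at the end is worth spelling out, because it is exactly the point the paper states most tersely. In Case II ($v^*\in\partial B$) the paper simply asserts $\mathbb{E}[d(X',Y')]\leq E_{B,v}$ ``by definition of the block divergence.'' But $E_{B,v}$ as defined in Subsection~\ref{ssec:strassen} (and as used in Lemma~\ref{lemma:linearity_expectation} and in the computations) is the maximum of $\mathbb{E}_{\lambda_{B,X,Y}}\bigl[\sum_{u\in B}\,(\varphi_Y(u)-\varphi_X(u))\bigr]$, a sum over $B$ only, whereas $d(X',Y')$ is measured on all of $V$ and still carries the unit disagreement at $v^*\notin B$. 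Read literally, the Case~II conditional expectation is therefore $1 + \mathbb{E}_\lambda\bigl[\sum_{u\in B}(\varphi_Y(u)-\varphi_X(u))\bigr]\leq 1+E_{B,v}$, and propagating that extra $1$ through the averaging gives a bound of the form $1-\tfrac{1}{2\lvert\BB\rvert}\bigl(\#\{B: v\in B\}-\sum_{B: v\in\partial B}E_{B,v}\bigr)$ rather than the $(E_{B,v}-1)$ version appearing in the statement. So the tension you identify is not a defect of your reconstruction: it is inherited from the paper's Case~II, and neither your write-up nor the paper's resolves it explicitly. Apart from that unresolved accounting, your proposal is the paper's proof.
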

\begin{proof}
  Let $Y(v) = X(v) + 1$, hence, $Y(w) = X(w)$ for all $w\neq v$. Either the coupling
  is inactive due to $p > \frac{1}{2}$ or some block
  $B\in\BB$ is chosen at random.
  The probabilities in the following three cases are conditioned on $p \leq \frac{1}{2}$.

  \textit{Case I:} $v\in B$. Then, $X$ and $Y$ are equal on $V\setminus B$,
  in particular on $\partial B$. So the same admissible filling
  $\varphi\in\Omega_B$ is chosen for both $X$ and $Y$.
  Then~$X'= \REP X + \varphi!$ and $Y'= \REP Y + \varphi!$ are identical
  and~$d(X', Y') = 0$. This case happens with
  probability~$\#\{ B\in\BB \;\mid\; v\in B
  \}/\lvert\BB\rvert$.

  \textit{Case II:} $v\in\partial B$. For each block $B\in\BB$ with
  $v\in\partial B$, the probability for this to happen
  is~${1}/{\lvert \BB\rvert}$, and by definition of the
  block divergence $\mathbb{E}[d(X', Y')] \leq E_{B, v}$.

  \textit{Case III:} $v\notin (B\cup \partial B)$. As in case I, the same
  admissible filling $\varphi$ is sampled uniformly from~\mbox{$\Omega_{B\vert X}=\Omega_{B\vert Y}$}, but $Y'(v) = X'(v) + 1$ is being
  preserved, so~$d(X', Y') = 1$.
    
Putting all cases together gives
\begin{align*}
  \mathbb{E}[d(X', Y')]&\leq \frac{1}{2} + \frac{1}{2}\Bigg[
     \frac{\#\{ B\in\BB \mid v\in B \}}{\lvert\BB\rvert}\cdot 0 +
     \frac{1}{\lvert\BB\rvert}%
       \sum_{B\in\BB\hspace{0.1cm}\mid\hspace{0.1cm}v\in \partial B} E_{B, v}\\
  & + \left(1-\frac{\#\{ B\in\BB \mid v\in B \}}{\lvert\BB\rvert}
    -\frac{\#\{B\in\BB \mid v\in\partial B\}}{\lvert\BB\rvert}\right)
    \Bigg]\\
 & = 1-\frac{1}{2\lvert\BB\rvert}
   \left(\#\{ B\in\BB \hspace{0.1cm}\mid\hspace{0.1cm} v\in B\}
   \hspace{0.2cm} -
   \sum_{B\in\BB\hspace{0.1cm}\mid\hspace{0.1cm}v\in \partial B} (E_{B, v}-1)\right)\\
 &\leq \beta
\end{align*}
\vskip-5mm
\end{proof}

\begin{proof}[Proof of Theorem \ref{theorem:main_result}]
  We bound the mixing time of $\MM_{\BB}$ using block
  coupling. Lemma~\ref{lemma:one_step_block_coupling_distance} together with
  the assumptions in the theorem imply
  $$\mathbb{E}[d(X_{t+1}, Y_{t+1})] \leq \beta\cdot\mathbb{E}[d(X_t, Y_t)]$$
  when $(X_t, Y_t)$ is a cover relation. For a pair $(X,Y)\in\Omega\times\Omega$
  with $X\leq Y$ one can find a path $X = x_0, \ldots, x_r = Y$, where each
  $(x_i, x_{i+1})$ is a cover relation. Therefore, path coupling
  (Theorem~\ref{theorem:path_coupling}) yields a coupling of
  $\MM_{\BB}$ on all such pairs with the property
  $$\mathbb{E}[d(X_{t+1}, Y_{t+1})] \leq \beta\cdot\mathbb{E}[d(X_t,
  Y_t)]\hspace{0.1cm}.$$ 
  With Theorem~\ref{theorem:montone_coupling_classical} we then get
  $$\tau_{\MM_{\BB}}(\varepsilon)\leq
  \frac{\log\left(\frac{D}{\varepsilon}\right)}{1-\beta}\hspace{0.1cm}.
  $$
  To bound the mixing time of the up/down Markov chain
  $\MM$, we want to use the comparison technique from
  Theorem~\ref{theorem:comparison_technique}. Hence, we need a bound
  on the value of the $A$ in the theorem.

  Let $b := \max \{ \lvert B\rvert \mid B\in\BB\}$ and let
  $X \mapsto Y$ be a transition of the block Markov chain, i.e., 
  $(X, Y)\in E(\MM_{\BB})$. There is a block~$B$
  so that the~$k$-heights~$X$ and~$Y$ differ only on~$B$. By
  Lemma~\ref{lemma:path_lengths}, there is a path~$\gamma_{X, Y}$ from~$X$
  to~$Y$ of
  length~$\lvert \gamma_{X, Y}\rvert = d(X, Y) \leq k\cdot \lvert B\rvert \leq
  k\cdot b$ consisting of transitions
  in~$E(\MM)$. Choosing~$\gamma_{X, Y}$ as paths of length~$d(X, Y)$
  guarantees that along~$\gamma_{X, Y}$ only values of vertices in~$B$ are
  changed.

  For~$X, Y\in \Omega$ and $B\in\BB$,
  let~$\MM_{\BB}(X, Y\vert B)$ denote the transition
  probability of~$\MM_{\BB}$ for moving from state~$X$ to
  state~$Y$ given that in the first line of
  Algorithm~\ref{alg:block_chain_transition} block~$B$ is chosen and
  $p\leq\frac{1}{2}$. For $X\neq Y$ by the law of total probability we have
\begin{align}\label{equation_proof_main_result_I}
\MM_{\BB}(X, Y)=
  \frac{1}{2\lvert\BB\rvert}\sum_{B\in\BB}
  \MM_{\BB}(X, Y\vert B)
  \hspace{0.1cm}.
\end{align}

Now, fix some $(Q, R)\in E(\MM)$ which is not a loop and let~$v$ be the vertex on
which~$Q$ and~$R$ differ. The probability of this transition
is $$\MM(Q, R) = \frac{1}{4\lvert V\rvert}\hspace{0.1cm}.$$ As in the
statement of Theorem~\ref{theorem:comparison_technique}, consider the
set
$$\Gamma(Q, R) := \left\{ (X, Y)\in E(\MM_{\BB})
  \hspace{0.1cm}\vert\hspace{0.1cm} (Q,R) \in \gamma_{X, Y} \right\}$$ of
transitions in $\MM_{\BB}$ whose path in $\MM$ uses $(Q, R)$.
Let~$(X, Y)\in\Gamma(Q, R)$ and let~$B$ be a block for which
$\MM_{\BB}(X, Y\vert B) > 0$. Then the~$k$-heights~$X$ and~$Y$ can only differ
on vertices in~$B$. Moreover, since~$\gamma_{X, Y}$ uses~$(Q, R)$, we must
have~$X(v)\neq Y(v)$ and therefore $v\in B$. For an arbitrary block $B\in\BB$
observe
\begin{align}\label{equation_proof_main_result_II}
  \sum_{(X,Y)\in\Gamma(Q, R)}\MM_{\BB}(X, Y\vert B)\leq
  \sum_{X', Y'\in\Omega_B}\frac{1}{\lvert\Omega_B\rvert}=\lvert\Omega_B\rvert\leq (k+1)^b,
\end{align}
and the left hand side of (\ref{equation_proof_main_result_II}) equals zero if
$v\notin B$. By (\ref{equation_proof_main_result_I}) and
(\ref{equation_proof_main_result_II}) we get
\begin{align}
    \sum_{(X,Y)\in\Gamma(Q, R)}\MM_{\BB}(X, Y)&\leq\frac{1}{2\lvert\BB\rvert}\sum_{B\in\BB, v\in B}\sum_{(X, Y)\in\Gamma(Q,R)}\MM_{\BB}(X, Y\vert B)\\
    &\leq \frac{1}{2\lvert\BB\rvert}\#\{ B\in\BB\;\mid\;v\in B\}(k+1)^b \;.
\end{align}
Now consider
\begin{align*}
  A_{Q,R} &:=\frac{1}{\pi(Q)\MM(Q, R)}\sum_{(X, Y)\in\Gamma(Q, R)}\lvert\gamma_{X,Y}\rvert\cdot\pi(X)\cdot\MM_{\BB}(X, Y)\;.
\end{align*}
Since $\pi(Q)=\pi(X)$, and $\MM(Q, R) = 1 / (4 \lvert V\rvert )$ and
$\lvert\gamma_{X,Y}\rvert\leq kb$ we get
\begin{align*}
  A_{Q,R}  &\leq 4kb\lvert V\rvert \sum_{(X, Y)\in\Gamma(Q, R)}\MM_{\BB}(X, Y)\\
    &\leq \frac{2bk(k+1)^b\cdot\lvert V\rvert}{\lvert\BB\rvert}\cdot\#\{ B\in\BB\;\mid\;v\in B\}
\end{align*}
By using
$\pi_{\text{min}} = \frac{1}{\lvert\Omega\rvert} \geq \frac{1}{(k+1)^{\lvert
    V\rvert}}$ and $A := \max_{(Q, R)\in E(\MM)} A_{Q,R}$ in
Theorem~\ref{theorem:comparison_technique}, as well as 
\mbox{$m=\max \#\{ B\in\BB\;\mid\;v\in B\}$} and $D=\max d(X,Y) \leq k\lvert V\rvert$
we obtain the result:
\begin{align*}
  \tau_{\MM}(\varepsilon) &
  \leq \frac{4\log\left(\frac{1}{\varepsilon\cdot\pi_{\text{min}}}\right)}
  {\log\left(\frac{1}{2\varepsilon}\right)}\cdot A\cdot\tau_{\MM_{\BB}}(\varepsilon)\\
  &\leq \frac{4\log\left(\frac{(k+1)^{\lvert V\rvert}}{\varepsilon}\right)}
    {\log\left(\frac{1}{2\varepsilon}\right)}\cdot\frac{2bk(k+1)^b\cdot
    \lvert V\rvert}{\lvert\BB\rvert}\cdot
    \#\{ B\in\BB\;\mid\;v\in B\}\cdot
    \frac{\log\left(\frac{D}{\varepsilon}\right)}{1-\beta}\\
  &\leq \underbrace{\frac{8 bmk(k+1)^b}{(1-\beta)\lvert\BB\rvert}}_{c_{\BB, k}}\cdot 
    \frac{\left(\left(\log(\frac{1}{\varepsilon})\cdot
    \lvert V\rvert\right)+{\lvert V\rvert}^2 \cdot
    \log(k+1)\right)\cdot
    \log\left(\frac{k\lvert V\rvert}{\varepsilon}\right)}{\log(\frac{1}{2\varepsilon})}
\end{align*}
\end{proof}

\section{Applications}\label{section:applications}

In this section we will present how Theorem~\ref{theorem:main_result} and
Corollary~\ref{cor:main_result_corollary} can be applied on grid like graphs
and $3$-regular graphs. The results are based on computations of the block
divergence. The relevant data for $3$-regular graphs (Subsection~\ref{subsection:regular_graphs}) can be found in Table~\ref{tbl:block_divergence_type1} and Table~\ref{tbl:block_divergence_type2} in~Appendix~\ref{appendix:block_divergence_data}. The code used for all computations is available in~\cite{supplemental_data}.

\subsection{\texorpdfstring{$k$-heights}{k-heights} on toroidal rectangular grid graphs}\label{section:rectangular_grid_graphs}

For fixed~$g, h\in\mathbb{N}$ we consider the toroidal rectangular grid
graph~$G \sim (\mathbb{Z}/g\mathbb{Z})\times(\mathbb{Z}/h\mathbb{Z})$, i.e.,
the vertices are integer points~$(x, y)$ with horizontal
edges~$\left\{(x, y), (x+1, y)\right\}$ and vertical
edges~$\left\{(x, y), (x, y+1)\right\}$, where we take the~$x$- and
the~$y$-coordinate modulo~$g$ and modulo~$h$, respectively; see
Figure~\ref{fig:rectangular_grid}.

\begin{figure}[htb]
\centering
\includegraphics{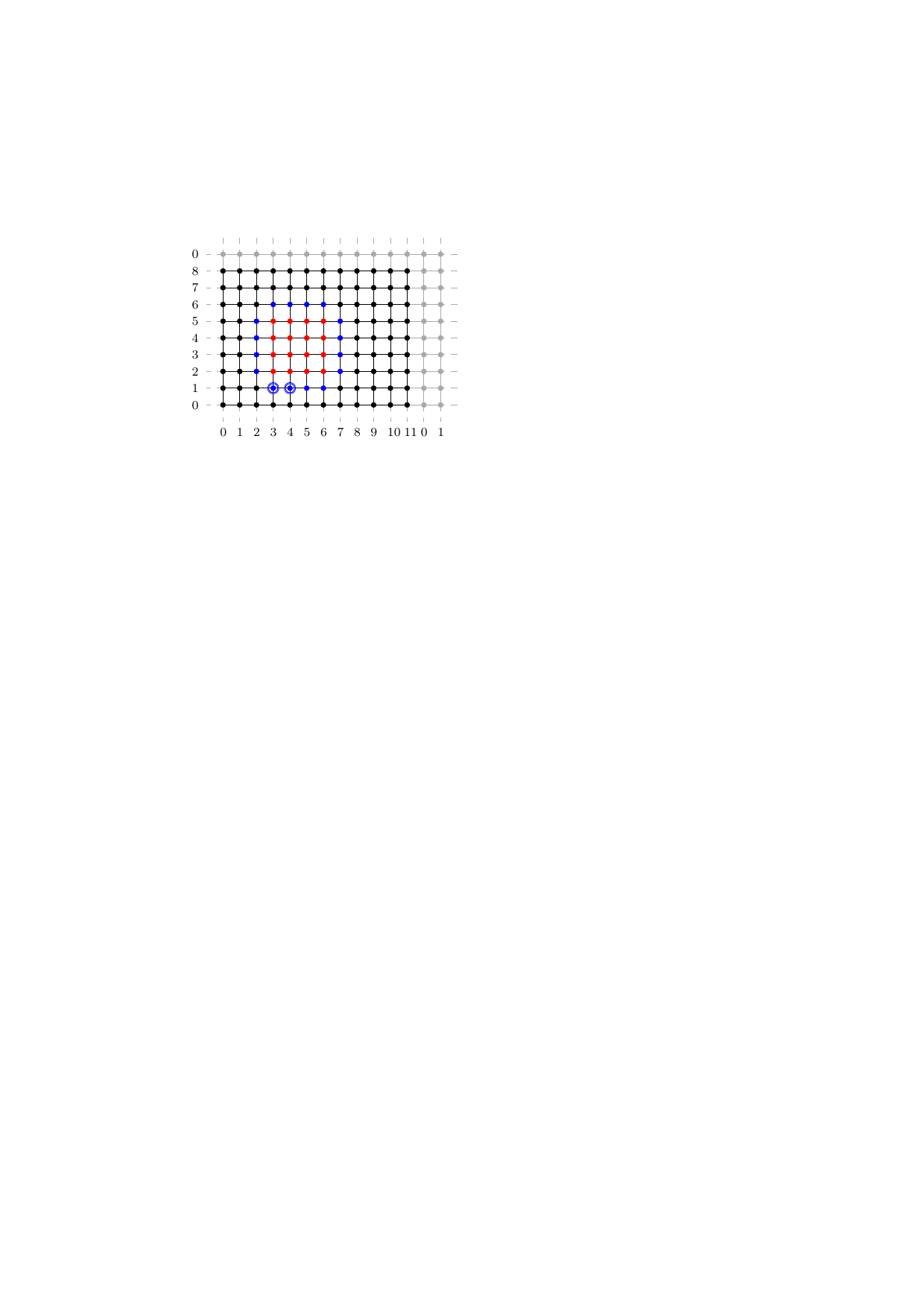}
\caption{Toroidal rectangular grid graph of size $12\times 9$.}
\label{fig:rectangular_grid}
\end{figure}

For showing that the up/down Markov chain on~$2$-heights or~$3$-heights of~$G$
is rapidly mixing, we use the family~$\BB$ consisting of all contiguous~$4\times 4$ blocks. In Figure~\ref{fig:rectangular_grid},
such a block is highlighted in red. Assuming that~$g, h$ are sufficiently
large, there are exactly~$\lvert\BB\rvert = g\cdot h$ such blocks, each vertex
is contained in exactly~$16$ blocks and forms part of~$16$ block
boundaries. Aiming for an application of
Corollary~\ref{cor:main_result_corollary}, we have to bound the block
divergence. We do this on the basis of of massive computations.

The boundary $\partial B$ consists of four paths of length~$4$ (blue vertices
in Figure~\ref{fig:rectangular_grid}). In every boundary constraint
$X\in\Omega_{\partial B}$, the values of two successive vertices on one of
these paths can differ by at most~$1$. Further, in an extensible boundary
constraint, the values of the last vertex of one path and the first vertex of
the next path differ by at most~$2$, because otherwise there is no possible
value for the vertex in the corner of $B$.

If we consider the boundary as a chain of~$16$ transitions, we can compute the
number of the extensible boundary constraints as
\[
  \operatorname{tr}(Q_kP_k^3Q_kP_k^3Q_kP_k^3Q_kP_k^3) =
  \operatorname{tr}\left((Q_kP_k^3)^4\right),\] where $P_k$ and $Q_k$ are both
matrices of size $k\times k$ with \[ (P_k)_{i,j} :=
\begin{cases}
   1 & \text{if }\lvert i - j \rvert \leq 1 \\
   0 & \text{otherwise}
\end{cases}\text{ \hspace{1.1cm} }
 (Q_k)_{i,j} := \begin{cases}
        1 & \text{if }\lvert i - j \rvert \leq 2 \\
        0 & \text{otherwise}
\end{cases}.
\]

If $k=2$ this gives~$2.825.761$ extensible boundary constraints;
if~\mbox{$k=3$} we get~$15.784.802$ extensible boundary constraints. For any
block~\mbox{$B\in\BB$} and~\mbox{$v\in\partial B$}, when we want to
compute~$E_{B, v}$, Lemma~\ref{lemma:linearity_expectation} tells us that this
is the ma\-xi\-mum of~$\mathbb{E}[w(u)] - \mathbb{E}[w(\ell)]$
with random admissible fillings~\mbox{$\ell \sim \mathcal{U}(\Omega_{B\vert X})$}
and~\mbox{$u \sim \mathcal{U}(\Omega_{B\vert Y})$} for two $k$-heights $X, Y$ that
differ in a single vertex~$v\in\partial B$,~\mbox{$X(v) = Y(v)-1$}.  For
symmetry reasons, it is sufficient for the maximization to compute~$E_{B,v}$
for the two vertices~$v\in\partial B$ which are encircled in
Figure~\ref{fig:rectangular_grid}.

Given a boundary constraint $X\in\Omega_{\partial B}$ we compute
$\mathbb{E}[w(\ell)]$ for $\ell \sim \mathcal{U}(\Omega_{B\vert X})$ with a
dynamic programming approach. For each row of the block we consider all the
fillings consistent with the boundary conditions as vectors.  By going from
row to row we compute for each vector the number and total weight of all consistent assignments of vectors to previous rows. This then allows to compute the total
weight~$\sum_{\ell\in\Omega_{B\vert X}} w(\ell)$ of all admissible fillings
and their number~$\lvert\Omega_{B\vert X}\rvert$. The value of
$\mathbb{E}[w(\ell)]$ is the quotient of the two numbers.

We do this computation for each boundary constraint~$X\in\Omega_{\partial B}$ and store the result. Next, we iterate over all cover relations $X, Y\in\Omega_{\partial B}$ (up to symmetry) in order to compute~\mbox{$\max \{ E_{B, v} : v\in\partial B \}$}. The results are shown in Table~\ref{tbl:block_divergence_rectangular_grids}. In the case $k=4$, we interrupted the execution but had already found a cover relation that gives the lower bound in the table.

\begin{table}[ht]
    \centering
    \begin{tabular}{ c|c } 
        $k$ & $\max \{ E_{B, v} : v\in\partial B \}$ \\
        \hline
        $2$ & $\approx 1.225092$ \\ 
        $3$ & $\approx 1.752678$ \\
        $4$ & $> 2.27$
    \end{tabular}
    \caption{Block divergence for blocks of size $4\times 4$ in rectangular grid.}
    \label{tbl:block_divergence_rectangular_grids}
\end{table}

\begin{theorem}
  \label{theorem:rapidly_mixing_toroidal_rectangle_graph}
  Let $G=(V, E)$ be a toroidal rectangular grid
  graph,~\mbox{$n = \lvert V\rvert$}. For~$k\in\{2, 3\}$ the up/down
  Markov-chain $\MM$ operating on $k$-heights of $G$ is rapidly
  mixing. More precisely, the mixing time is upper bounded by
  $$\tau(\varepsilon)<
  c_k\cdot\frac{\left(\left(\log(\frac{1}{\varepsilon})\cdot n\right)+n^2
      \cdot\log(k+1)\right)\cdot\log\left(\frac{kn}{\varepsilon}\right)}
  {\log(\frac{1}{2\varepsilon})}\in\mathcal{O}\left(n^2\log n\right),
  $$
  where $c_2 = 2.844202\cdot 10^{10}$ and $c_3 = 1.333706\cdot 10^{13}$.
\end{theorem}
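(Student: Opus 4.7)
The plan is to obtain Theorem~\ref{theorem:rapidly_mixing_toroidal_rectangle_graph} as a direct application of Corollary~\ref{cor:main_result_corollary}, using the family $\BB$ of all contiguous $4\times 4$ blocks of the toroidal grid~$G$ together with the block divergence values listed in Table~\ref{tbl:block_divergence_rectangular_grids}. Since the corollary's hypothesis requires only numerical bounds on $\check m$, $s$, and $E_{\max}$, the proof reduces to verifying these parameters and then plugging them in.

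First I would describe $\BB$ precisely and exploit the toroidal symmetry to count: for $g,h\geq 8$ (the edge cases are trivial to treat separately, or to absorb into the constants) there are exactly $\lvert\BB\rvert = gh = n$ such~$4\times 4$ blocks, every vertex lies in exactly~$16$ of them, and every vertex lies on the boundary of exactly~$16$ of them. Hence $\check m = m = 16$, $s=16$, and $b = 16$. Next I would read off from Table~\ref{tbl:block_divergence_rectangular_grids} the values $E_{\max} \approx 1.225092$ for $k=2$ and $E_{\max} \approx 1.752678$ for $k=3$; these supersede the hypothesis of Corollary~\ref{cor:main_result_corollary} since for both~$k\in\{2,3\}$ the quantity $\check m - s(E_{\max}-1) = 16 - 16(E_{\max}-1)$ is strictly positive.

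Then I would set $\beta$ to attain equality in the corollary's hypothesis, namely
\[
1-\beta \;=\; \frac{\check m - s(E_{\max}-1)}{2\lvert\BB\rvert} \;=\; \frac{16 - 16(E_{\max}-1)}{2n},
\]
and observe that the factor $n$ cancels with $\lvert\BB\rvert = n$ in the denominator of $c_{\BB,k} = \tfrac{8\,bmk(k+1)^b}{(1-\beta)\lvert\BB\rvert}$, so that $c_{\BB,k}$ is a constant independent of~$n$:
\[
c_{\BB,k} \;=\; \frac{16\,bmk(k+1)^b}{\check m - s(E_{\max}-1)} \;=\; \frac{16\cdot 16\cdot 16\cdot k(k+1)^{16}}{16-16(E_{\max}-1)}.
\]
Evaluating numerically for $k=2$ (with $(k+1)^{16}=3^{16}$) yields $c_2 \leq 2.844202\cdot 10^{10}$, and for $k=3$ (with $(k+1)^{16}=4^{16}$) yields $c_3 \leq 1.333706\cdot 10^{13}$, matching the constants in the theorem statement.

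The only genuinely nontrivial step is the determination of $E_{\max}$, which was carried out in the preceding computational discussion via the dynamic-programming enumeration of admissible fillings over all extensible boundary cover relations; once Table~\ref{tbl:block_divergence_rectangular_grids} is accepted, the rest of the proof is an arithmetic substitution into Corollary~\ref{cor:main_result_corollary}. I would therefore expect the main obstacle to be purely verificational, namely confirming that the symmetry reductions used to cut down the cover-relation enumeration genuinely cover all cases; this is addressed in the paragraph preceding Table~\ref{tbl:block_divergence_rectangular_grids}, where the boundary is decomposed into four length-$4$ paths and only two boundary vertices~$v\in\partial B$ need to be examined up to symmetry.
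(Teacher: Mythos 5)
Your proposal matches the paper's proof essentially exactly: it instantiates Corollary~\ref{cor:main_result_corollary} with the family of all contiguous $4\times 4$ blocks, reads off $\check m=s=m=b=16$, uses the computed bounds $E_{\max}<1.225093$ (resp.\ $<1.752678$) from Table~\ref{tbl:block_divergence_rectangular_grids}, and then evaluates $c_{\BB,k}=\frac{16\,bmk(k+1)^b}{\check m-s(E_{\max}-1)}$ to obtain the stated constants. The paper phrases the arithmetic slightly differently (carrying $\beta_k$ and $|\BB|$ until the end rather than cancelling up front), but the content is identical.
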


\begin{proof}
In the notation of Corollary~\ref{cor:main_result_corollary}, for the family
$\BB$ of all contiguous blocks of size $4\times 4$ we have $\check{m}=s=16$. We
obtain
\[
  1-\frac{1}{2\lvert\BB\rvert}\left(\check{m}-s\cdot (E_\text{max} - 1)\right) =:
  \beta_k < 1
\]
if and only if $E_\text{max} < 2$. This holds for $k\in\{2, 3\}$ as seen by
the computational results in
Table~\ref{tbl:block_divergence_rectangular_grids}. Hence,
Corollary~\ref{cor:main_result_corollary} implies that the up/down Markov
chain is rapidly mixing in these two cases and that the mixing times are upper
bounded as in Theorem~\ref{theorem:main_result}.

In the case $k=3$, we know $E_\text{max} < 1.752678$, from which we get
\[
  \beta_3 = 1-\frac{1}{2\lvert\BB\rvert}\left(\check{m}-s\cdot (E_\text{max} - 1)\right)
      < 1 - \frac{1.978576}{\lvert\BB\rvert},
\]
and in the notation as in Theorem~\ref{theorem:main_result}, using $b=m=16$,
we obtain
\[ c_{\BB, 3} = \frac{8\cdot bmk(k+1)^b}{(1-\beta_3)\lvert\BB\rvert}
     < \frac{8\cdot bmk(k+1)^b}{1.978568} < 1.333706\cdot 10^{13}.
\]

In the case $k=2$, we do the same calculation based on
$E_\text{max} < 1.225093$ and obtain $c_{\BB, 2} < 2.844202\cdot 10^{10}$.
\end{proof}

For $k=4$ our computation shows $E_\text{max} > 2$
(Table~\ref{tbl:block_divergence_rectangular_grids}), which means that in
Corollary~\ref{cor:main_result_corollary} we cannot find such a $\beta_4 <
1$. We believe that the up/down Markov chain is rapidly mixing independently
of $k$. We tried to use larger blocks for~$\BB$ but ran out of computational
power.

\subsection{\texorpdfstring{$k$-heights}{k-heights}
  on toroidal hexagonal grid graphs}
\label{ssec:mixingtoroidalhexagonalgraph}

For fixed~$g,h\in\mathbb{N}$ we consider the triangular grid graph
on~$(\mathbb{Z}/g\mathbb{Z})\times (\mathbb{Z}/h\mathbb{Z})$, i.e., the
vertices are points~$(x, y)\in\mathbb{Z}\times\mathbb{Z}$ with horizontal
edges~$\left\{(x, y), (x+1, y)\right\}$, vertical
edges~$\left\{(x, y), (x, y+1)\right\}$ and diagonal
edges~$\left\{(x, y), (x+1, y+1)\right\}$, and we consider the~$x$-
and~$y$-coordinates modulo~$g$ and modulo~$h$; respectively. Of the so
obtained plane graph we take the dual graph~$G$, whose vertices correspond to
the~$2\cdot g\cdot h$ triangular faces and edges in~$G$ correspond to pairs of
triangles that share a side; see Figure~\ref{fig:hexagonal_grid}. The
graph~$G$ is a hexagonal grid.

\begin{figure}[htb]
  \centering
  \begin{subfigure}[c]{.3\textwidth}
      \centering
      \includegraphics[page=2]{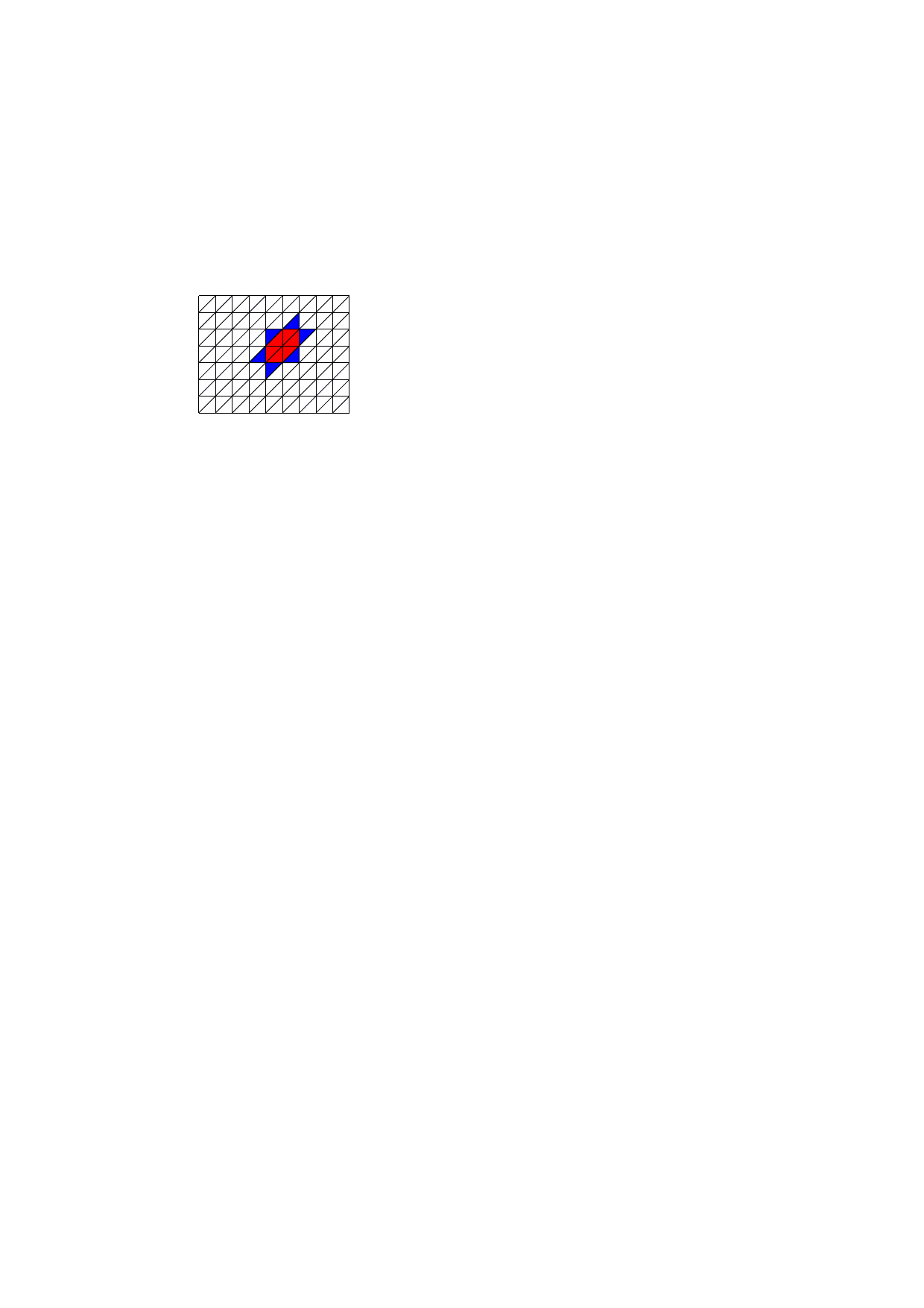}
      \caption{}
      \label{fig:hexagonal_grid:primal_graph}
  \end{subfigure}
  \begin{subfigure}[c]{.3\textwidth}
      \centering
      \includegraphics[page=1]{figures/hexagonal_grid.pdf}
      \caption{}
      \label{fig:hexagonal_grid:dual_graph}
  \end{subfigure}
  \caption{Hexagonal grid (a) or dual (b) of toroidal triangle grid of size $10\times 8$. Red vertices or faces form a block; blue vertices or faces are part of the boundary~$\partial B$.}
  \label{fig:hexagonal_grid}
\end{figure}

Every point~$(x, y)\in\mathbb{Z}\times\mathbb{Z}$ defines a block $B_{x,y}$
consisting of~$6$ vertices that correspond to the triangular faces 
incident to the vertex $(x, y)$; one such block is indicated in red in
Figure~\ref{fig:hexagonal_grid}. We use the family
$\BB := \{ B_{x, y} : x, y\in\mathbb{Z}\}$, which for sufficiently large $g,h$
has $g\cdot h$ blocks, each vertex is contained in~$3$ blocks and~$3$ block
boundaries. Using the matrix~$P_k$ defined
in~Section~\ref{section:rectangular_grid_graphs}, we can compute the number of
$k$-heights on a block as~$\lvert\Omega_B\rvert = \operatorname{tr}(P_k^6)$.

For every block~$B$, the boundary~$\partial B$ consists of~$6$ isolated
vertices. Therefore, there are exactly~\mbox{$\lvert\Omega_{\partial B}\rvert = (k+1)^6$}
legal boundary constraints. If~$k\geq 4$, some of them are not extensible, hence we
could skip them in our computation. We iterated over all
boundary constraints~$\Omega_{\partial B}$ and fillings $\Omega_B$, as their
cardinalities are small; see
Table~\ref{tbl:block_divergence_hexagonal_grids}. The computation detects non
extensible boundary constraints and simply ignores them. For symmetry reasons,
when computing~$\mathbb{E}[w(u)] - \mathbb{E}[w(\ell)]$
where~$\ell \sim \mathcal{U}(\Omega_{B\vert X})$
and~$u \sim \mathcal{U}(\Omega_{B\vert Y})$, we only need to do this for cover
relations $(X, Y)$ on~$\partial B$ that differ in a fixed
vertex~$v\in\partial B$.

\begin{table}[ht]
\centering
 \begin{tabular}{ c|c|c|c } 
    $k$ & $\lvert\Omega_B\rvert$ & $\lvert\Omega_{\partial B}\rvert$ &
         $\max \{ E_{B, v} : v\in\partial B \}$ \\
    \hline
    $2$ & 199 & 729 & $\approx$ 0.798658\\ 
    $3$ & 340 & 4096 & $\approx$ 1.831905 \\
    $4$ & 481 & 15625 & $\approx$ 2.892857 \\
    $5$ & 622 & 46656 & 3.0 \\
    $6$ & 763 & 117649 & 3.0
 \end{tabular}
\caption{Block divergence for blocks of~$6$ hexagonal shaped vertices.}
\label{tbl:block_divergence_hexagonal_grids}
\end{table}

\begin{theorem}
    \label{theorem:rapidly_mixing_toroidal_hexagonal_graph}
    Let $G=(V, E)$ be a toroidal hexagonal grid graph, $n = \lvert
    V\rvert$. For~$k\in\{2, 3\}$ the up/down Markov-chain $\MM$
    operating on $k$-heights of $G$ is rapidly mixing. More precisely, the
    mixing time is upper bounded by
    $$\tau(\varepsilon)<
    c_k\cdot\frac{\left(\left(\log(\frac{1}{\varepsilon})\cdot n\right)+n^2
      \cdot\log(k+1)\right)\cdot\log\left(\frac{kn}{\varepsilon}\right)}
    {\log(\frac{1}{2\varepsilon})}\in\mathcal{O}\left(n^2\log n\right),
    $$
    where $c_2 = 1.165099\cdot 10^{5}$ and $c_3 = 7.017788\cdot 10^{6}$.
\end{theorem}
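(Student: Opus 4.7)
The plan is to apply Corollary~\ref{cor:main_result_corollary} with the family $\BB = \{B_{x,y}\}$ of hexagonal blocks described in the paragraph preceding the theorem. The combinatorial setup of $\BB$ directly gives the two parameters appearing in the corollary: each vertex lies in exactly $\check{m} = 3$ blocks and in exactly $s = 3$ block boundaries, so that
$$\check{m} - s\bigl(E_{\max}-1\bigr) = 3(2-E_{\max}),$$
which is strictly positive exactly when $E_{\max} < 2$. Table~\ref{tbl:block_divergence_hexagonal_grids} provides the upper bounds $E_{\max} \leq 0.798658$ for $k=2$ and $E_{\max} \leq 1.831905$ for $k=3$, so in both cases the hypothesis of Corollary~\ref{cor:main_result_corollary} is met with some $\beta_k < 1$.

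Second, I would make the slack $1-\beta_k$ explicit in order to track the constant $c_{\BB,k}$. Since $|\BB| = gh$, the corollary gives
$$1-\beta_k \;\geq\; \frac{\check{m}-s(E_{\max}-1)}{2|\BB|} \;=\; \frac{3(2-E_{\max})}{2|\BB|},$$
so the denominator $(1-\beta_k)|\BB|$ appearing in $c_{\BB,k}$ is at least the constant $\tfrac{3}{2}(2-E_{\max})$, independent of $g$ and $h$. Plugging in the tabulated bounds on $E_{\max}$ yields $(1-\beta_2)|\BB| \geq \tfrac{3}{2}\cdot 1.201342$ and $(1-\beta_3)|\BB| \geq \tfrac{3}{2}\cdot 0.168095$.

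Third, I would substitute into the closed form $c_{\BB,k} = \tfrac{8\,bmk(k+1)^b}{(1-\beta_k)|\BB|}$ with the block parameters $b=6$ and $m=3$. For $k=2$ this evaluates to
$$c_{\BB,2} \;\leq\; \frac{8\cdot 6\cdot 3\cdot 2\cdot 3^6}{\tfrac{3}{2}\cdot 1.201342} \;\leq\; 1.165099 \cdot 10^{5},$$
and for $k=3$ it evaluates to
$$c_{\BB,3} \;\leq\; \frac{8\cdot 6\cdot 3\cdot 3\cdot 4^6}{\tfrac{3}{2}\cdot 0.168095} \;\leq\; 7.017788 \cdot 10^{6},$$
matching the constants in the statement. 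Corollary~\ref{cor:main_result_corollary} then immediately yields the claimed mixing-time bound.

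The only nonroutine ingredient is the bound on $E_{\max}$ itself; once it is known, the rest is arithmetic. Computationally, the main obstacle is the enumeration outlined before the theorem: iterate over every boundary constraint $X \in \Omega_{\partial B}$ (of which there are at most $(k+1)^6$), use Lemma~\ref{lemma:linearity_expectation} to compute $\mathbb{E}[w(\ell)]$ for $\ell \sim \mathcal{U}(\Omega_{B\vert X})$ by summing over the small set $\Omega_B$ (of size $\operatorname{tr}(P_k^6)$), and then maximize $\mathbb{E}[w(u)] - \mathbb{E}[w(\ell)]$ over cover pairs $X \leq Y$ on $\partial B$. Because the six boundary vertices are isolated in $\partial B$ and the vertex-stabilizer acts transitively on them, it suffices to fix a single $v \in \partial B$ as the unique vertex on which $X$ and $Y$ differ; this is the enumeration whose outputs are recorded in Table~\ref{tbl:block_divergence_hexagonal_grids} and feed into the computation above.
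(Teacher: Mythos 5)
Your proposal is correct and follows essentially the same route as the paper's proof: apply Corollary~\ref{cor:main_result_corollary} with $\check m = s = 3$, observe $\check m - s(E_{\max}-1)=3(2-E_{\max})>0$ iff $E_{\max}<2$, read $E_{\max}$ from Table~\ref{tbl:block_divergence_hexagonal_grids}, and plug $b=6$, $m=3$ into $c_{\BB,k}=8bmk(k+1)^b/\bigl((1-\beta_k)|\BB|\bigr)$. The only cosmetic difference is that you use the tabulated value of $E_{\max}$ directly while the paper, since those entries are marked ``$\approx$'', conservatively rounds them up in the last printed digit before computing the constants; the resulting bounds on $c_2$ and $c_3$ agree either way.
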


\begin{proof}
  In the notation of Corollary~\ref{cor:main_result_corollary}, for the family
  of blocks $\BB$ that we have described above we have $\check{m}=s=3$. We obtain
  \[ 1-\frac{1}{2\lvert\BB\rvert}\left(\check{m}-s\cdot (E_\text{max} - 1)\right)
    =: \beta_k < 1
  \]
  if and only if $E_\text{max} < 2$. This holds for $k\in\{2, 3\}$ as seen by
  the computational results in~Table~\ref{tbl:block_divergence_rectangular_grids}. Hence,
  Corollary~\ref{cor:main_result_corollary} implies that the up/down Markov
  chain is rapidly mixing in these two cases and that the mixing times are
  upper bounded as in Theorem~\ref{theorem:main_result}.

  In the case $k=3$, we know $E_\text{max} < 1.831906$, from which we get
  \[ \beta_3 = 1-\frac{1}{2\lvert\BB\rvert}
                                \left(\check{m}-s\cdot (E_\text{max} - 1)\right)
    < 1 - \frac{0.252141}{\lvert\BB\rvert},
  \]
  and in the notation as in Theorem~\ref{theorem:main_result}, using $b=6$ and $m=3$ we obtain
  \[
    c_{\BB, 3} = \frac{8\cdot bmk(k+1)^b}{(1-\beta_3)\lvert\BB\rvert}
     < \frac{8\cdot bmk(k+1)^b}{0.252141} < 7.017788\cdot 10^{6}.
  \]
  In the case $k=2$, we do the same calculation based on
  $E_\text{max} < 0.798659$ and obtain the bound \mbox{$c_{\BB, 2} < 1.165099\cdot 10^{5}$}.
\end{proof}

Again, for $k=4$ our computation shows $E_\text{max} > 2$
(Table~\ref{tbl:block_divergence_hexagonal_grids}), which means that in
Corollary~\ref{cor:main_result_corollary} we cannot find such a $\beta_4 < 1$.

\subsection{\texorpdfstring{$k$-heights}{k-heights}
  on planar \texorpdfstring{$3$-regular}{3-regular} graphs}
\label{subsection:regular_graphs}

\begin{figure}[htb]
\centering
\includegraphics{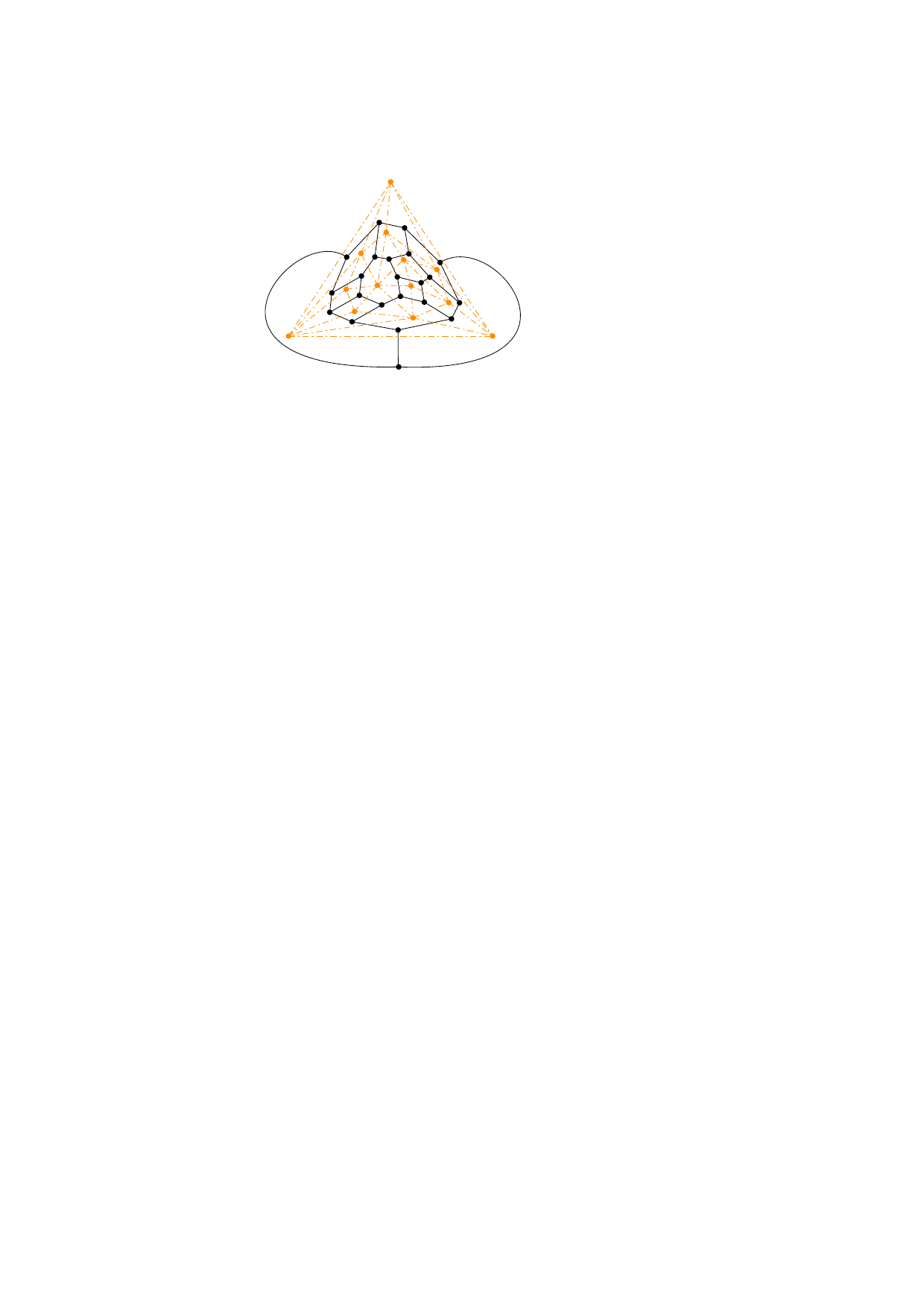}
\caption{Example of a dual graph (black, solid) of a $4$-connected triangulation (orange, dashed)}
\label{fig:dual_4_con_triang}
\end{figure}

\begin{theorem}
\label{theorem:rapidly_mixing_regular_graph}
Let $G=(V, E)$ be a simple $2$-connected $3$-regular planar graph, and
let~\mbox{$n = \lvert V\rvert$}.

\begin{itemize}
    \item[(1)] The mixing time~$\tau(\varepsilon)$ of the up/down Markov chain $\MM$ operating on $2$-heights is upper bounded by \[
        \tau(\varepsilon)< c_k\cdot\frac{\left(\left(\log(\frac{1}{\varepsilon})\cdot n\right)+n^2 \cdot\log(k+1)\right)\cdot\log\left(\frac{kn}{\varepsilon}\right)}{\log(\frac{1}{2\varepsilon})}\in\mathcal{O}\left(n^2\log n\right)\;,
    \] where~$c_2 = 4.391132\cdot 10^7$.
    \item[(2)] If~$G$ is even $3$-connected, then, for $k\in\{2, 3\}$, the mixing time~$\tau(\varepsilon)$ of the up/down Markov chain $\MM$ operating on $k$-heights is upper bounded by the same expression as in (1) with constants $c_2 = 2.195097\cdot 10^7$ and $c_3 = 4.852027 \cdot 10^9$.
    \item[(3)] If~$G$ is the dual graph of a $4$-connected triangulation, then, for $k\in\{2, 3\}$, the mixing time~$\tau(\varepsilon)$ of the up/down Markov chain $\MM$ operating on $k$-heights is upper bounded by the same expression as in (1) with constants $c_2 = 1.489256\cdot 10^7$ and $c_3 = 4.852027\cdot 10^9$.
\end{itemize}
\end{theorem}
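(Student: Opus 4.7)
The strategy mirrors the grid cases of
Theorem~\ref{theorem:rapidly_mixing_toroidal_hexagonal_graph} and
Theorem~\ref{theorem:rapidly_mixing_toroidal_rectangle_graph}: I would pick a
family $\BB$ of blocks, bound the block divergences $E_{B,v}$ by exhaustive
enumeration, and invoke Corollary~\ref{cor:main_result_corollary} (or, where
the corollary is too loose, Theorem~\ref{theorem:main_result} directly).
Since $G$ is $3$-regular and planar, each vertex lies in exactly three
faces and touches only a bounded number of further faces, so the natural
building blocks are the faces of $G$. Concretely, let $\BB$ consist of
vertex sets of faces, supplemented if necessary by a second family of
slightly enlarged blocks that handles those local configurations where the
face family alone is not sharp enough; the two appendix tables
Table~\ref{tbl:block_divergence_type1} and
Table~\ref{tbl:block_divergence_type2} record the resulting block
divergences for the block types that can arise.

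\textbf{Executing the three parts.}
The three parts differ in which face sizes and local configurations $G$
may realise, and the same block template has to be re-analysed in each
setting. In part~(1), $G$ is only $2$-connected, so digons (parallel-edge
faces) are possible; the boundary behaviour of such small blocks is awkward
and the computation yields $E_{\max}<2$ only at $k=2$, forcing the
restriction to $2$-heights. In part~(2), $3$-connectivity forces every
face to have size at least~$3$, and the enumeration gives $E_{\max}<2$ for
both $k=2$ and $k=3$. In part~(3), $G$ is the dual of a $4$-connected
triangulation, so every face of $G$ has size at least~$4$ (the face sizes
equal the primal vertex degrees, which are at least $4$ by
$4$-connectivity) and the absence of separating triangles in the primal
further restricts the boundary configurations of the smallest blocks,
sharpening the estimate on $E_{B,v}$ and giving the improved constant for
$k=2$. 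Once $E_{\max}$, $\check m = 3$, $s$, $b=\max_{B\in\BB}|B|$ and
$m=\max_v \#\{B\in\BB : v\in B\}$ are read off the tables, substituting
them into
\[
  c_{\BB,k} \;=\; \frac{8\,bmk(k+1)^b}{(1-\beta)\,|\BB|}
\]
and applying Corollary~\ref{cor:main_result_corollary} yields the stated
numerical constants.

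\textbf{Main obstacle.}
The substantive work is the computation of $E_{\max}$. Unlike in the grid
cases, blocks now come in different shapes, so a separate enumeration is
required for every block type that the graph class can realise. For each
type one enumerates all extensible boundary constraints on $\partial B$,
uses Lemma~\ref{lemma:linearity_expectation} together with a
dynamic-programming computation of $\mathbb{E}[w(\varphi)]$ over
$\Omega_{B\vert X}$, and maximises over cover pairs of boundary
constraints. The difficulty is to design the block family so that
$\check m - s(E_{\max}-1)$ stays strictly positive: small blocks tend to
have small $E_{B,v}$ but provide few boundary incidences, whereas larger
blocks can push $E_{\max}$ above~$2$. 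This tension, together with the
exponential growth of the enumeration in $(k+1)^{|B|}$, is precisely what
pins down the range of $k$ and the connectivity assumptions in the three
parts of the theorem.
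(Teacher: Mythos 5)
Your high-level plan (face-based blocks, enumerate block divergences, plug into the general theorem) is the right shape, but the execution contains a gap that would make the argument fail outright. You propose to verify $E_{\max}<2$ and invoke Corollary~\ref{cor:main_result_corollary}, but that inequality is simply false here. Looking at the appendix data, already for $k=2$ and the $3$-connected case one has $E_{B,v}\approx 2.367$ (Case~$1_{10}[1,3,7]$), and for $k=3$ the divergences climb past $5$. So $\check m - s\,(E_{\max}-1)$ is a large negative number, and the corollary cannot produce a $\beta<1$ in any of the three parts. The claim in your plan that the computations give $E_{\max}<2$ in parts (1) and (2) is incorrect, and the explanation you give for why part (1) is restricted to $k=2$ (that $E_{\max}<2$ fails for $k=3$) is therefore wrong too.

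The paper's proof bypasses this in two ways you do not mention. First, the block family is engineered so that each vertex is contained in exactly $\check m=24$ blocks (each face of size $\le 10$ yields \emph{eight identical copies} of its block; each face of size $>10$ yields one block per window of $8$ consecutive boundary vertices), not $\check m=3$ as you write. This multiplicity is what makes the positive term $\#\{B : v\in B\}=24$ large enough to survive. Second, and more importantly, the paper does not bound the sum $\sum_{B:v\in\partial B}(E_{B,v}-1)$ by $s\cdot(E_{\max}-1)$; instead it introduces the normalized quantity
\[
E^* \;:=\; \max_{B:v\in\partial B}\;\frac{E_{B,v}-1}{\#\{w\in B : \{v,w\}\in E\}}\,,
\]
and rewrites the sum as $E^*$ times a count of (block, neighbor-of-$v$) incidences, which is bounded because $v$ has only three neighbors and each neighbor lies in a bounded number of blocks. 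The connectivity assumptions then enter by \emph{excluding} the high-$E^*$ cases: $3$-connectivity rules out blocks where $v$ has two non-consecutive or three neighbors (they would give $2$-separators), and duals of $4$-connected triangulations additionally rule out two consecutive neighbors (separating triangles in the primal). That is the actual mechanism that sharpens the constants across the three parts, not a reduction of $E_{\max}$ below $2$. To repair your proposal you would need to abandon the corollary, adopt the multiplicity-$8$ block family, and carry out the per-neighbor weighted estimate with $E^*$ as in Lemmas~\ref{lemma:block_divergence_2connected}--\ref{lemma:block_divergence_dual_of_4con_triang}.
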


For a fixed plane embedding of~$G$ we construct a family~$\BB$ of blocks of
the following two types. For every face~$F$ of degree~$d \leq 10$ we consider
the block consisting of all~$d$ boundary vertices and call it a block of
type~$1$, or, more specifically, a block of type~$1_d$; see
Figure~\ref{fig:block_two_types:type_1}. In~$\BB$ we include $8$ identical
copies of each such block. For every face~$F$ of degree~$d > 10$, we consider
all sets of~$8$ successive vertices on the boundary of~$F$ and call them
blocks of type~$2$. We include each of these~$d$ blocks (a single time)
in~$\BB$; see Figure~\ref{fig:block_two_types:type_2} for an example.

\begin{figure}[htb]
  \centering
  \begin{subfigure}[b]{.49\textwidth}
      \centering
      \includegraphics[page=1]{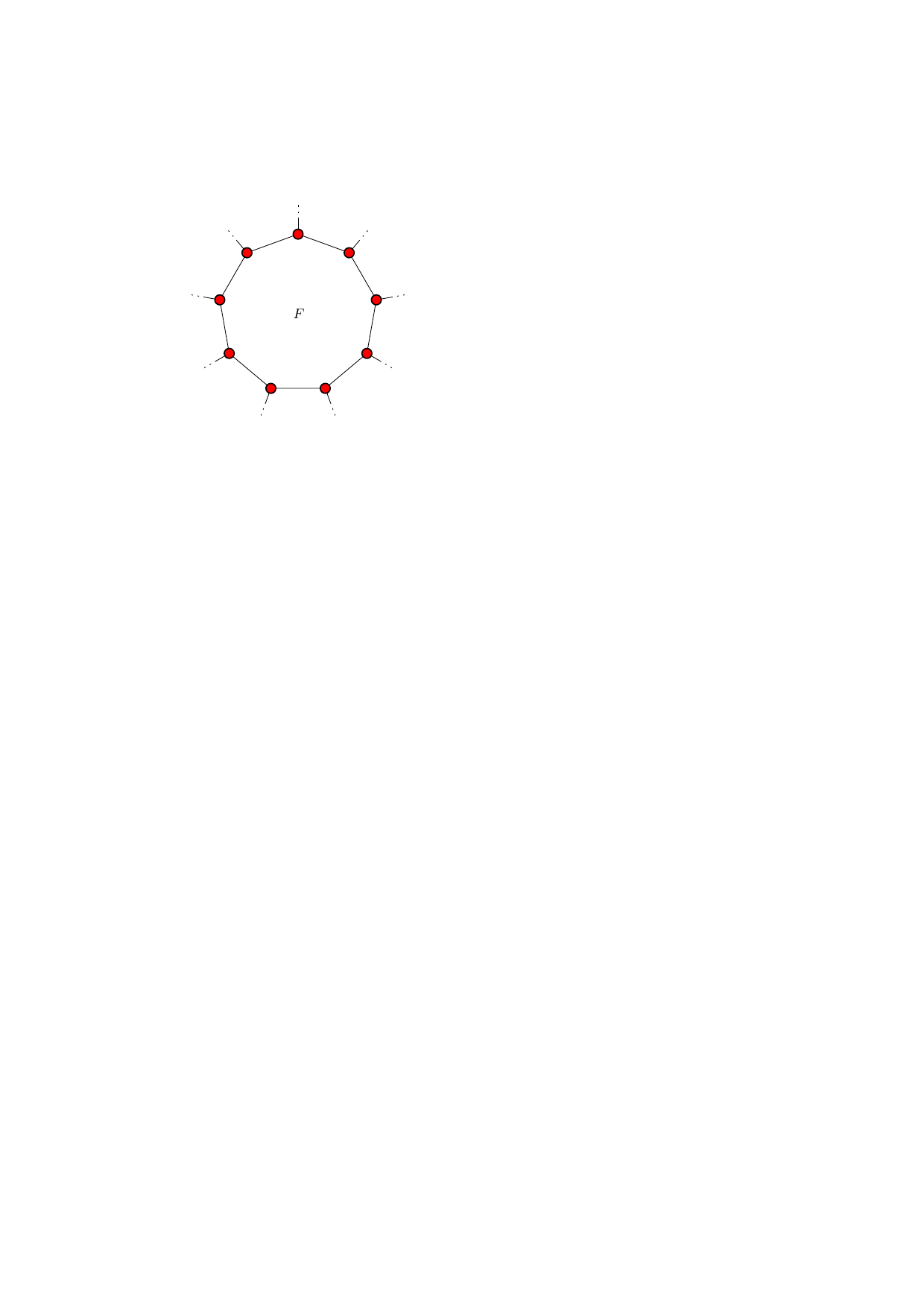}
      \caption{}
      \label{fig:block_two_types:type_1}
  \end{subfigure}
  \hfill
  \begin{subfigure}[b]{.49\textwidth}
      \centering
      \includegraphics[page=2]{figures/block_two_types.pdf}
      \caption{}
      \label{fig:block_two_types:type_2}
  \end{subfigure}
  \caption{Red vertices form a block of type $1_9$ in (a)
    and a block of type $2$ in~(b). Blue vertices are part
    of the boundary~$\partial B$.}
  \label{fig:block_two_types}
\end{figure}

As a first consequence of this construction, each vertex~$v$ is contained in
exactly~$24$ blocks in~$\BB$: As $\operatorname{deg}(v)=3$ and~$G$ is
$2$-connected, vertex $v$ belongs to~$3$ distinct faces, each of which
contributes~$8$ blocks containing $v$.

When computing the block divergence $E_{B, v}$ for some $v\in\partial B$, we
distinguish different cases depending on the type of~$B$ and the adjacency
relations between~$B$ and~$v$. If~$B$ is of type~$1_d$, we label the vertices in
clockwise order around~$F$ as~$1, \ldots, d$. In fact, due to symmetry, it
will not be relevant at which vertex the numbering starts. If~$B$ is of
type~$2$, we label its vertices in clockwise order as~$1, \ldots, 8$.

Note that~$v\in\partial B$ can be adjacent to one, two or three vertices
of~$B$. The cases in the following case distinction will be tagged by the type
of~$B$ followed by the labels of the neighbors of~$v$ in square brackets. For
instance, Case~$1_5[1]$ describes the scenario in which~$v$ is adjacent to a
single vertex of a block of type~$1_5$. Due to symmetry, the cases 
$1_5[x]$ with $x\in \{1,\ldots,5\}$ are all equivalent
to Case~$1_5[1]$; they result in the same block divergence~$E_{B, v}$. As
another example, Figure~\ref{fig:block_divergence_cases} shows Case~$2[2,5]$,
i.e.,~$v$ is adjacent to vertices~$2$ and~$5$ of a block of type~$2$. This
case is symmetrical to Case~$2[4,7]$, both result in the same block
divergence~$E_{B, v}$.

\begin{figure}[htb]
\centering
\includegraphics{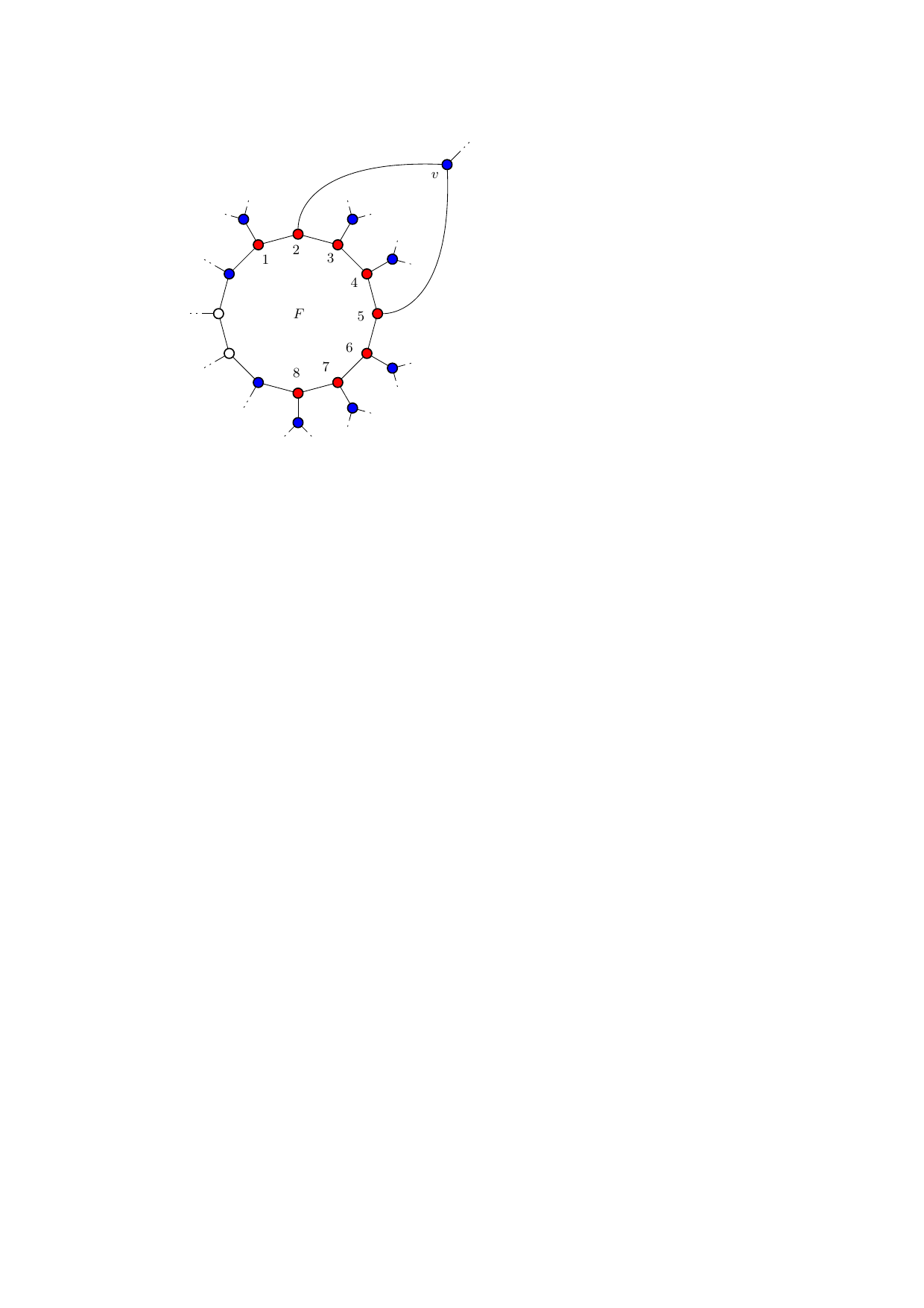}
\caption{Case~$2[2,5]$: Vertex~$v$ is adjacent to vertices~$2$ and~$5$ from a
  block~$B$ which is of type~$2$. The block divergence $E_{B, v}$ is the same
  as in Case~$2[4,7]$.}
\label{fig:block_divergence_cases}
\end{figure}

In a block~$B$ of type~$2$, vertices~$1$ and~$8$ have two neighbors
in~$\partial B$. Note that when~$v$ is one of these two neighbors, our
tag will not tell whether $v$ belongs to the boundary of~$F$. This
information would not affect~$E_{B,v}$.

Further, note that for a block~$B$ of any type, so far we did not consider the
cases in which some boundary vertex other than~$v$ is adjacent to multiple
vertices in~$B$. For instance, in the example in
Figure~\ref{fig:block_divergence_cases}, vertices~$3$ and~$4$ or some of the
vertices~$1$,~$6$,~$7$,~$8$ could be adjacent to a common boundary
vertex. However, these cases are already covered in our computation
of~$E_{B, v}$, where we consider all boundary constraints
$\Omega_{\partial B}$, including those where some of the boundary vertices are
assigned the same value. Hence, these further cases can only have a smaller
block divergence. 

For $k=2$ and $k=3$, we used a computer program to compute $E_{B, v}$ for all
cases described above; see Table~\ref{tbl:block_divergence_type1} for cases
where~$B$ is of type~$1$ and Table~\ref{tbl:block_divergence_type2} for cases
where~$B$ is of type~$2$. Whenever multiple cases result in the same block
divergence $E_{B, v}$ for symmetry reasons, the tables contain only one of them.

It turns out that in this setting Corollary~\ref{cor:main_result_corollary} is not strong enough for proving that~$\MM$ is rapidly mixing: Each vertex is contained in exactly~$\check{m}=24$ blocks and in at most~$s = 30$ boundaries of blocks; and even in the case~$k=2$ we have~$E_{\textup{max}} \approx 2.367241$ (with Case~$1_{10}[1,3,7]$ being the extremal case; see Table~\ref{tbl:block_divergence_type1} and Table~\ref{tbl:block_divergence_type2}), which gives $m-s(E_{\text{max}} - 1) < -17.0172 < 0$. Therefore, instead of applying Corollary~\ref{cor:main_result_corollary}, we will apply Theorem~\ref{theorem:main_result} directly.

\begin{lemma}\label{lemma:block_divergence_2connected}  
  If~$G=(V, E)$ is a simple $2$-connected $3$-regular planar graph, then, with
  the family~$\BB$ of blocks described above and~$k=2$, we
  have
  \[ \#\{ B\in\BB \;\mid\; v\in B\}\hspace{0.2cm} - \sum_{B\in\BB\;\mid\; v\in
      \partial B} (E_{B, v}-1)\hspace{0.1cm}>\hspace{0.1cm} 10.32755
    \] for all $v\in V$.
\end{lemma}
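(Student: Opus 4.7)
Since each vertex $v$ lies on exactly three faces of $G$ and each incident face contributes $8$ blocks containing $v$ (either as $8$ identical type-$1_d$ copies or as $8$ distinct type-$2$ blocks), we have $\#\{B\in\BB:v\in B\}=24$, so the inequality to prove is equivalent to
\[
  \sum_{B\in\BB\,:\,v\in\partial B}(E_{B,v}-1)\;<\;13.67245.
\]
Write $u_1,u_2,u_3$ for the neighbors of $v$ and let $F_1,F_2,F_3$ be the three faces incident to $v$, with $F_j$ bounded by the edges $vu_j$ and $vu_{j+1}$. For each neighbor $u_i$ let $F_i^\star$ denote the unique face incident to $u_i$ but not to $v$. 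My plan is to enumerate the blocks with $v\in\partial B$ face by face, identify the relevant case tag in Tables~\ref{tbl:block_divergence_type1} and~\ref{tbl:block_divergence_type2}, and sum.

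I would first handle the \emph{generic configuration} in which $F_1^\star,F_2^\star,F_3^\star$ are three distinct faces, each incident to exactly one $u_i$. The blocks with $v\in\partial B$ then split into two kinds. \emph{Outside-face contributions}: for each $i$, the face $F_i^\star$ gives $8$ blocks containing $u_i$ (eight identical type-$1_d$ copies if $\deg F_i^\star\le 10$, otherwise the eight distinct type-$2$ subwindows of $\partial F_i^\star$ that contain $u_i$). In the generic case $u_i$ is the sole neighbor of $v$ in each such block, so one consults a single-adjacency row $1_d[x]$ or $2[x]$. \emph{Inner-face contributions}: for each $F_j$ with $\deg F_j>10$, since $u_j,v,u_{j+1}$ lie consecutively on $\partial F_j$, exactly two type-$2$ blocks derived from $F_j$ have $v\in\partial B$, namely the one ending at $u_j$ (case $2[8]$) and the one starting at $u_{j+1}$ (case $2[1]$); I would observe that the remaining seven vertices of such a block cannot contain any other neighbor of $v$, so these are again single-adjacency cases. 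Inner faces of degree at most $10$ contribute nothing, because their type-$1$ block then contains $v$.

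Next I would turn the enumeration into a numerical bound for $k=2$. For each outside face $F_i^\star$ I would sum $E_{B,v}-1$ over the $8$ positions of $u_i$ within the block, reading the appropriate row of Table~\ref{tbl:block_divergence_type1} (if $\deg F_i^\star\le 10$) or Table~\ref{tbl:block_divergence_type2} (if $\deg F_i^\star>10$). For each inner face $F_j$ of degree greater than $10$, the contribution is $(E_{2[8]}-1)+(E_{2[1]}-1)$. The total is a sum of six per-face contributions depending only on $\deg F_j$ and $\deg F_i^\star$ for $j,i\in\{1,2,3\}$; taking in each slot the pointwise worst entry from the tables yields a constant upper bound, and I would check that this constant is strictly below $13.67245$. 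This reduces the claim to a finite numerical verification against the tabulated data.

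The main obstacle is the \emph{non-generic configurations}: two of the $u_i$ may share an outside face, or a block may have two or three of its vertices adjacent to $v$, triggering the multi-adjacency rows of Tables~\ref{tbl:block_divergence_type1} and~\ref{tbl:block_divergence_type2} (some of which approach $E_{B,v}\approx 2.37$). Such configurations have larger per-block divergences but strictly fewer distinct blocks with $v\in\partial B$, because a shared outside face merges what would otherwise be two separate single-adjacency contributions into one multi-adjacency block. The plan is to rewrite the sum as indexed by pairs (block, adjacency pattern), and to amortize each multi-adjacency entry against the block slots that are ``saved'' at the two merged neighbors; the needed inequality then reduces case by case to comparing a multi-adjacency table entry with the corresponding sum of single-adjacency entries it replaces. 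Verifying this dominance for every multi-adjacency row in Tables~\ref{tbl:block_divergence_type1} and~\ref{tbl:block_divergence_type2} is the technical heart of the proof, and once established it shows that the generic-case bound is the worst case, yielding the claimed inequality.
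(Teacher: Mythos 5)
Your setup is right: $\#\{B\in\BB : v\in B\}=24$, the target reduces to $\sum_{B\,:\,v\in\partial B}(E_{B,v}-1) < 13.67245$, and the observation that each vertex has three incident faces and each neighbor $w_i$ has a unique ``third'' face not containing $v$ matches the paper's picture. The generic-case sketch is also consistent with what the paper does first.

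The gap is in the amortization step, which you correctly identify as the crux and then don't carry through; and as phrased it cannot work. For $k=2$, every single-adjacency entry in the tables satisfies $E_{B,v}-1 < -0.195$ (the worst is $1_{10}[1]$ at $\approx 0.80456$), so these contributions are all strictly negative, while a triple-adjacency block such as $1_{10}[1,3,7]$ has $E_{B,v}-1 \approx +1.367$. There is no way to ``compare a multi-adjacency table entry with the corresponding sum of single-adjacency entries it replaces'' and conclude dominance: a positive number cannot be bounded above by a sum of negatives. Moreover, the ``corresponding'' single-adjacency blocks live in a hypothetical generic graph, not the actual one, so the correspondence you propose to amortize against is not well defined. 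You also say non-generic configurations have ``strictly fewer distinct blocks with $v\in\partial B$''; but fewer blocks means fewer negative contributions, so the sum goes \emph{up}, not down -- the heuristic runs the wrong way.

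The paper resolves this by normalizing per adjacency rather than comparing blocks: it defines
$E^* := \max_{B\,:\,v\in\partial B} (E_{B,v}-1)/\#\{w\in B : \{v,w\}\in E\}$
and checks from the tables that $E^* < 0.455748$ (extremal case $1_{10}[1,3,7]$, namely $1.367241/3$). Then it rewrites
$\sum_{B\,:\,v\in\partial B}(E_{B,v}-1) \le E^*\cdot \sum_{i=1}^3 \#\{B\in\BB : w_i\in B,\ v\notin B\}$
and bounds the inner count by $10$ for each neighbor $w_i$ ($8$ blocks from $F_i$ and one each from the two faces $H_j$ incident to $w_i$), giving $E^*\cdot 30 < 13.67245$. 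This single-line normalization replaces all of your generic/non-generic case analysis, is insensitive to which faces coincide, and yields exactly the claimed constant. Your ``index by (block, adjacency pattern)'' instinct is the germ of this idea, but without the ratio $E^*$ and the per-neighbor count of $10$ the argument does not close.
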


\begin{proof}
  We have already observed $\#\left\{ B\in\BB\;\mid\; v\in B\right\}=24$. For
  analysing the sum on the right hand side, we have to consider all blocks
  $B\in\BB$ with~\mbox{$v\in\partial B$} and look up the corresponding cases
  in Table~\ref{tbl:block_divergence_type1} and
  Table~\ref{tbl:block_divergence_type2}. As~$v$ has degree~$3$ and~$G$ is
  $2$-connected,~$v$ is incident to exactly three pairwise distinct
  faces~$H_1$,~$H_2$ and~$H_3$. Its three neighbors~$w_1$,~$w_2$ and~$w_3$ are
  incident to three further faces~$F_1$,~$F_2$ and~$F3$, as shown in
  Figure~\ref{fig:vertex_incident_faces}.

\begin{figure}[htb]
\centering
\includegraphics{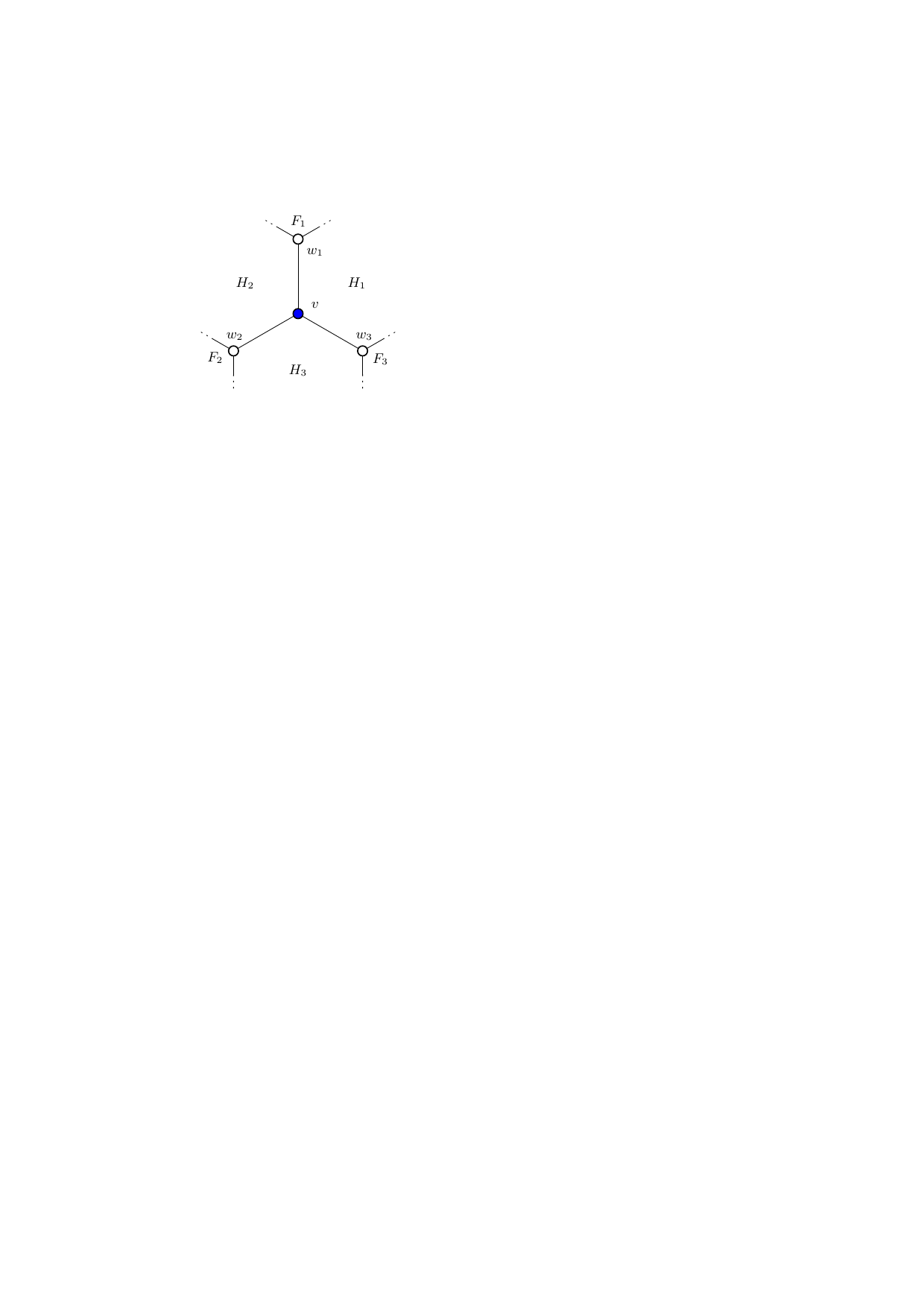}
\caption{A vertex~$v$ can only be in the boundary of a block
  induced from any of the faces $H_1$, $H_2$, $H_3$, $F_1$, $F_2$ and $F_3$.}
\label{fig:vertex_incident_faces}
\end{figure}

There are at most~$30$ blocks~$B\in\BB$ with $v\in\partial B$: At most~$8$
induced by each of the faces~$F_1$,~$F_2$ and~$F_3$, and at most~$2$ induced
by each of the faces~$H_1$,~$H_2$ and $H_3$.

In the case in which the faces $H_1$,~$H_2$,~$H_3$,~$F_1$,~$F_2$,~$F_3$ are
pairwise distinct, every block~$B\in\BB$ contains at most one vertex adjacent
to~$v$. In the corresponding cases in Table~\ref{tbl:block_divergence_type1}
and Table~\ref{tbl:block_divergence_type2}, we have $E_{B, v} < 0.80456$,
with~$1_{10}[1]$ being the extremal case. Moreover, there are at least~$24$
blocks~\mbox{$B\in\BB$} with~$v\in\partial B$, namely the~$8$ blocks induced by each of
the faces~$F_1$,~$F_2$ and~$F_3$. From this we directly get
\[
  \#\{ B\in\BB \;\mid\; v\in B\}\hspace{0.2cm} -
  \sum_{B\in\BB\;\mid\; v\in \partial B} (E_{B, v}-1)\quad
  > \quad 24 + 24\cdot 0.19544\quad > \quad 28 .
\]    
However, the faces $F_1$,~$F_2$ and~$F_3$ do not need to be pairwise distinct,
and they can be identical to a face $H_1$,~$H_2$ or~$H_3$. In such cases,
blocks may occur that contain two or three vertices adjacent to~$v$, which
requires a refined analysis.

From the data in Table~\ref{tbl:block_divergence_type1} and
Table~\ref{tbl:block_divergence_type2} we get
\[ E^* := \max_{B\in\BB\;\mid\;v\in\partial B} \hspace{0.2cm}
  \frac{E_{B, v} - 1}{\#\left\{w\in B\;:\; \{v, w\}\in E\right\}} < 0.455748,
\]
with Case $1_{10}[1,3,7]$ being the extremal case. Now we have:
\begin{align*}
    \sum_{B\in\BB\;\mid\; v\in\partial B} E_{B, v} - 1 &= \sum_{B\in\BB\;\mid\; v\in \partial B} \#\left\{w\in B : \{v, w\} \in E\right\}\frac{E_{B, v} - 1}{\#\left\{w\in B : \{v, w\} \in E\right\}}\\
        & \leq \hspace{0.1cm} E^* \cdot \sum_{i=1}^3 \#\left\{B\in\BB : w_i\in B, v\notin B \right\}\\
        & < \hspace{0.1cm} 0.455748\cdot 30 = 13.67245
\end{align*}

    In the last inequality we used that there are at most~$10$ blocks~$B\in \BB$ with~$w_i\in B$ and~$v\notin B$: At most~$8$ induced by~$F_i$ and at most one induced by each of the two faces~$H_j$ that are incident to~$w_i$. Finally, we obtain
    \begin{align*}
        &\#\{ B\in\BB \;\mid\; v\in B\}\hspace{0.2cm} - \sum_{B\in\BB\;\mid\; v\in \partial B} (E_{B, v}-1)\\
        &>\hspace{0.2cm}24 - 13.67245 \;=\; 10.32755\hspace{0.1cm}.
    \end{align*}
\end{proof}

\begin{lemma}\label{lemma:block_divergence_3connected}
    If~$G=(V, E)$ is a $3$-connected $3$-regular planar graph, then, with the family~$\BB$ of blocks constructed above and in the case~$k\in\{2,3\}$, we have \[
        \#\{ B\in\BB \;\mid\; v\in B\}\hspace{0.2cm} - \sum_{B\in\BB\;\mid\; v\in \partial B} (E_{B, v}-1)\hspace{0.1cm}>\hspace{0.1cm}\begin{cases}
            20.659512 & k = 2 \\
            2.489598 & k = 3
        \end{cases}
    \] for all $v\in V$. 
\end{lemma}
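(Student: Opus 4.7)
The strategy is to follow the structure of the proof of Lemma~\ref{lemma:block_divergence_2connected}, now exploiting a consequence of $3$-connectivity for planar graphs: any two distinct faces share at most one edge. Equivalently, if two distinct faces share two vertices, the edge between them must exist, since otherwise these two vertices would form a $2$-separator of $G$. Fixing $v$ and labelling its neighbors $w_1, w_2, w_3$, the three incident faces $H_1, H_2, H_3$ (with $H_k$ containing $w_i$ and $w_j$ for $\{i,j,k\} = \{1,2,3\}$), and the faces $F_i$ at $w_i$ not containing $v$, one obtains the following dichotomy: if $F_i = F_j$ for some $i\ne j$, then the two distinct faces $F_i$ and $H_k$ share the two vertices $w_i, w_j$, so the edge $w_iw_j$ must exist in $G$ and $H_k$ is forced to be the triangle $vw_iw_j$. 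As a corollary, $F_1 = F_2 = F_3$ occurs only when $G = K_4$, a case that can be disposed of by direct inspection of the tabulated block divergences.

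Assume $G \neq K_4$. Then every block $B \in \BB$ with $v \in \partial B$ is induced by one of the (possibly coinciding) faces $F_1, F_2, F_3, H_1, H_2, H_3$ and contains at most two neighbors of $v$; when it contains exactly two, these are consecutive on the boundary of the inducing face. This rules out, in particular, the three-neighbor case $1_{10}[1,3,7]$ that was extremal for Lemma~\ref{lemma:block_divergence_2connected}. The total weight $\sum_{B:v\in\partial B} \#\{w\in B : \{v,w\}\in E\}$ remains bounded by $30$ as before: at most $8$ from each face $F_i$ together with at most one per $w_i$ from each of the two faces $H_j$ incident to $w_i$.

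To finish, one reads off the relevant values of $E_{B,v}-1$ from Tables~\ref{tbl:block_divergence_type1} and~\ref{tbl:block_divergence_type2}, restricted to those cases that can actually arise under $3$-connectivity, and sums these contributions weighted by the number of blocks in each configuration, distinguishing whether the inducing face has degree at most or greater than $10$ (and thus contributes identical type-$1$ blocks or a family of type-$2$ blocks). Verifying the numerical inequalities $\sum(E_{B,v}-1) < 3.340488$ for $k=2$ and $\sum(E_{B,v}-1) < 21.510402$ for $k=3$ is then a matter of careful bookkeeping. The main obstacle is the sheer number of local configurations to check together with the need to keep the bound tight enough to meet the stated thresholds; this is why the $3$-connectivity-based exclusion of the three-neighbor configuration is essential, particularly for $k=2$ where the margin is narrowest.
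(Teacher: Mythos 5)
Your proposal correctly identifies the main structural consequence of $3$-connectivity --- no boundary vertex $v$ of a block $B$ can have two non-consecutive neighbors, or three neighbors, in $B$, since such configurations would exhibit a $2$-separator --- and correctly states the numerical targets. However, there is a genuine gap in how you propose to reach those targets, and the gap is fatal precisely in the case you describe as the comfortable one.

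You say the bookkeeping ``remains bounded by $30$ as before,'' i.e.\ you intend to re-use the weighted bound from the $2$-connected lemma, $\sum_{B}(E_{B,v}-1)\le E^*\cdot\sum_{B}\#\{w\in B:\{v,w\}\in E\}\le 30\,E^*$. For $k=3$, the best you can do for $E^*$ after the exclusions is $E^*<0.848707$ (Case $1_{10}[1]$), so $30\,E^*\approx 25.46>24$, and the resulting lower bound on the quantity in the lemma would be \emph{negative}. Your remark that the margin is ``narrowest for $k=2$'' has it exactly backwards: $k=2$ gives $24-20.66$ of slack while $k=3$ leaves only $24-2.49$, and the naive weighted bound outright fails there. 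What saves the argument in the paper is a structural fact you do not establish: by $3$-connectivity each $F_i$ is a \emph{different} face from every $H_j$, so the only blocks $B$ induced by some $H_i$ with $v\in\partial B$ are type-$2$ blocks of $8$ consecutive boundary vertices immediately before or after $v$ (Cases $2[1]$ and $2[8]$), and there are at most $6$ of them. These special blocks have a much smaller divergence ($E_{B,v}\approx 1.19$ for $k=3$, versus $\approx 1.85$ for a generic single-neighbor case), and the remaining $\le 24$ blocks come from $F_1,F_2,F_3$ and are charged at rate $E^*$. Your dichotomy about $F_i=F_j$ forcing $H_k$ to be a triangle (and the $K_4$ aside) is correct but tangential; the load-bearing claim is $F_i\ne H_j$, which you never prove or use. ``Distinguishing whether the inducing face has degree $\le 10$ or $>10$'' is also not the relevant split: what matters is whether the inducing face is one of the $H_j$'s (in which case the block is constrained to Cases $2[1]$, $2[8]$) or one of the $F_i$'s. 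Without the $\BB_H$/$\BB_F$ separation and the resulting refined estimate, the stated bounds --- and for $k=3$, any positive bound at all --- are out of reach.
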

\begin{proof}
    The proof is similar to the proof of Lemma~\ref{lemma:block_divergence_2connected}, but the $3$-connectivity allows us to exclude cases that would imply a $2$-separator. For instance, in Case~$2[2,5]$, depicted in Figure~\ref{fig:block_divergence_cases}, the vertices~$2$ and~$5$ form a $2$-separator, hence, it cannot occur in~$G$. Indeed, we can exclude all cases in which~$v$ is adjacent to two non-consecutive vertices on~$B$ (For example, Case~$2[2,5]$) 
    or adjacent to three vertices on~$B$ (For example, Case~$2[2,4,5]$). The remaining cases are those in which~$v$ has only one neighbor in~$B$ (For example, Case~$2[3]$) or exactly two neighbors that lie consecutively on~$B$ (For example, Case~$2[3,4]$).

    First, we treat the case~$k=3$. For upper bounding~$E^*$ as in the proof of Lemma~\ref{lemma:block_divergence_2connected}, we maximize over the cases that we did not exclude. Using the values from Table~\ref{tbl:block_divergence_type1} and Table~\ref{tbl:block_divergence_type2}, we obtain~$E^* < 0.848707$ with Case~$1_{10}[1]$ being the extremal case.
    
    Recall the notation from Figure~\ref{fig:vertex_incident_faces}. From the $3$-connectivity of~$G$ it follows that each of the faces~$F_1$,~$F_2$ and~$F_3$ is distinct to the faces $\{H_1, H_2, H_3\}$: Clearly,~$F_1$ is distinct to $H_1$ and $H_2$, since otherwise, the vertex~$w_1$ would be a separator. Further,~$F_1$ is distinct to~$H_3$, as otherwise, the vertices~$\{v, w_1\}$ would separate~$w_3$ from~$w_2$. The statement for~$F_2$ and~$F_3$ follows by symmetry. 
    
    Let~$\BB_F \subset \BB$ be the set of blocks that are induced by the faces~$\{F_1, F_2, F_3\}$, and let~$\BB_H \subset \BB$ be the set of blocks that are induced by the faces~$\{H_1, H_2, H_3\}$. Using the fact that~$\{H_1, H_2, H_3\}$ are distinct from~$\{F_1, F_2, F_3\}$, we can get the following: The only way in which~$v$ can be in the boundary of a block~$B$ induced by~$H_i$ is when~$H_i$ has degree larger than~$10$ and~$B$ is of type~$2$ consisting of the~$8$ consecutive vertices either before or behind~$v$ on the boundary of~$H_i$. These are Case~$2[8]$ and Case~$2[1]$, which lead to the same block divergence~\mbox{$E_{B, v} \approx 1.190238$}; see Table~\ref{tbl:block_divergence_type2}. 
    
    This allows us to refine the analysis that we already did in the proof of Lemma~\ref{lemma:block_divergence_2connected}:   
    \begin{align*}
        & \sum_{B\in\BB\;\mid\; v\in\partial B} E_{B, v} - 1\\
        & = \left(\sum_{B\in\BB_H\;\mid\; v\in\partial B} E_{B, v} - 1\right) + \left(\sum_{B\in\BB_F\;\mid\; v\in\partial B} E_{B, v} - 1\right) \\
        & < \hspace{0.1cm} 6\cdot (1.190239 - 1) + E^* \cdot \sum_{i=1}^3 \#\left\{B\in\BB_F : w_i\in B, v\notin B \right\}\\
        & < \hspace{0.1cm}6\cdot 0.190239 +  0.848707\cdot 24 = 21.510402
    \end{align*}
    
    From this we obtain our result for the case $k=3$:
    \begin{align*}
        &\#\{ B\in\BB \;\mid\; v\in B\}\hspace{0.2cm} - \sum_{B\in\BB\;\mid\; v\in \partial B} (E_{B, v}-1)\\
        &>\hspace{0.2cm}24 - 21.510402 \;=\; 2.489598
    \end{align*}

    The result for $k=2$ is obtained using the very same calculations. From Table~\ref{tbl:block_divergence_type1} and Table~\ref{tbl:block_divergence_type2} we get $E^* < 0.212547$ (with Case~$1_{10}[1,2]$ being the extremal case), for Case~2[8] and Case~2[1] we have $E_{B, v} \approx 0.706599$ and obtain
    \begin{align*}
        &\#\{ B\in\BB \;\mid\; v\in B\}\hspace{0.2cm} - \sum_{B\in\BB\;\mid\; v\in \partial B} (E_{B, v}-1)\\
        &>\hspace{0.2cm}24 - (6\cdot (0.706560 - 1) + 0.212547\cdot 24) \;=\; 20.659512
    \end{align*}
\end{proof}

\begin{lemma}\label{lemma:block_divergence_dual_of_4con_triang}
    If~$G=(V, E)$ is the dual graph of a $4$-connected triangulation, then, with the family~$\BB$ of blocks constructed above and in the case~$k\in\{2,3\}$, we have \[
        \#\{ B\in\BB \;\mid\; v\in B\}\hspace{0.2cm} - \sum_{B\in\BB\;\mid\; v\in \partial B} (E_{B, v}-1)\hspace{0.1cm}>\hspace{0.1cm}\begin{cases}
            30.4512 & k = 2 \\
            2.489598 & k = 3
        \end{cases}
    \] for all $v\in V$. 
\end{lemma}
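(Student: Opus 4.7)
The plan is to mirror the argument of Lemma~\ref{lemma:block_divergence_3connected}, strengthened by an additional structural property of $G$: as the dual of a $4$-connected triangulation $T$, the graph $G$ has girth at least $4$. Indeed, $4$-connectivity of $T$ forbids separating triangles, which dualize to non-facial triangles of $G$, and (assuming $\lvert V(T)\rvert\geq 5$) forces minimum degree $\geq 4$ in $T$, which dualizes to the absence of triangular faces in $G$. Hence $G$ contains no triangles at all.

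Building on this, I would refine the case distinction used in Lemma~\ref{lemma:block_divergence_3connected}. There, $3$-connectivity was already used to discard every case in which $v\in\partial B$ is adjacent to two non-consecutive vertices of $B$ or to three vertices of $B$. The girth-$4$ property additionally rules out the remaining two-neighbor cases $1_d[i,i{+}1]$ and $2[i,i{+}1]$, since consecutive vertices of $B$ are joined by a face-boundary edge of the inducing face and a further edge to $v$ would close a triangle; likewise, no block of type $1_3$ exists. After these exclusions only the single-neighbor cases $1_d[i]$ with $d\geq 4$ and $2[i]$ contribute to the block divergence sum at $v$. The same girth argument also forces the three faces $F_1,F_2,F_3$ from Figure~\ref{fig:vertex_incident_faces} to be pairwise distinct, else $w_i,w_j,v$ would form a triangle.

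For $k=3$ this restriction makes no numerical difference: the extremal configuration in the $k=3$ analysis of Lemma~\ref{lemma:block_divergence_3connected} is already Case~$1_{10}[1]$, a single-neighbor case, and Cases~$2[1]$ and~$2[8]$ remain the relevant boundary-face contributions along the $H_j$. Transporting the inequalities of that proof verbatim yields exactly the same bound $2.489598$. For $k=2$ the extremal case in Lemma~\ref{lemma:block_divergence_3connected} was Case~$1_{10}[1,2]$, which is now forbidden, so I would recompute
\[
E^* := \max \frac{E_{B,v} - 1}{\#\{w\in B : \{v,w\}\in E\}},
\]
taking the maximum only over single-neighbor entries $1_d[i]$ and $2[i]$ of Tables~\ref{tbl:block_divergence_type1} and~\ref{tbl:block_divergence_type2}, and plug the resulting strictly smaller (and possibly negative) value of $E^*$ into the same decomposition
\[
\sum_{B\in\BB\,\mid\, v\in\partial B}(E_{B,v}-1) \leq 6\cdot(0.706599 - 1) + E^*\cdot\sum_{i=1}^3\#\{B\in\BB_F : w_i\in B,\ v\notin B\}
\]
used in the proof of Lemma~\ref{lemma:block_divergence_3connected}. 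With $\sum_i \#\{B\in\BB_F: w_i\in B,\ v\notin B\}\leq 24$ exactly as before, the resulting lower bound on $24 - \sum_{B:v\in\partial B}(E_{B,v}-1)$ should match the stated $30.4512$.

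The main obstacle is arithmetical rather than conceptual: verifying that the restricted maximum of $(E_{B,v}-1)/\#\{w\in B:\{v,w\}\in E\}$ over only the single-neighbor table entries is small enough to produce the precise constant $30.4512$, and tracking exactly which symmetry classes of single-neighbor configurations survive on the boundaries of the $H_j$ versus the $F_j$. The structural inputs (girth $\geq 4$, impossibility of the two-consecutive-neighbor cases, disappearance of triangular faces, and pairwise distinctness of $F_1,F_2,F_3$) are immediate; the delicate step is a careful reading of the correct subset of tabulated values and inheriting the boundary-face bound $E_{B,v}\approx 0.706599$ for Cases~$2[1]$ and~$2[8]$ unchanged.
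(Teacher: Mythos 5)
Your argument matches the paper's: you inherit the framework and estimates of Lemma~\ref{lemma:block_divergence_3connected}, use the extra hypothesis to discard every case in which $v$ has more than one neighbor in $B$, and recompute $E^*$ from the single-neighbor table rows to reach the stated constants; this is exactly the paper's proof, and your girth-$\geq 4$ observation supplies the justification that the paper leaves implicit. Two small remarks on that justification. First, the clean route to girth $\geq 4$ is that a $3$-cycle in $G=T^*$ dualizes to a $3$-edge bond in $T$, which cannot exist because $\kappa(T)\geq 4$ forces $\lambda(T)\geq 4$; your appeal to separating triangles is misplaced, since by cycle--bond duality a separating (non-facial) triangle of $T$ corresponds to a $3$-bond of $G$, not to a non-facial $3$-cycle of $G$. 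Second, your assertion that girth $\geq 4$ makes $F_1,F_2,F_3$ pairwise distinct does not follow (it would require $w_i$ and $w_j$ to be adjacent, which is not automatic), but that side claim is not used in your estimate, so the proof as a whole is sound.
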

\begin{proof}
    We know that $G$ is $3$-connected as the dual of a $4$-connected triangulation, so we could directly refer to Lemma~\ref{lemma:block_divergence_3connected}. However, the fact that~$G$ is the dual of a $4$-connected triangulation allows to exclude further cases, namely all cases in which~$v$ has more than one neighbor in~$B$. Apart from this further restriction, the proof is exactly the same. We just mention the relevant data: From Table~\ref{tbl:block_divergence_type1} and Table~\ref{tbl:block_divergence_type2} we get for~$k=2$ the bound~\mbox{$E^* < -0.195440$} and for~$k=3$ the bound~$E^* < 0.848707$ (with Case~$1_{10}[1]$ being the extremal case for both~$k\in\{2, 3\}$). 
\end{proof}

\begin{proof}[Proof of Theorem~\ref{theorem:rapidly_mixing_regular_graph}]
    We apply Theorem~\ref{theorem:main_result}. In the case of planar $2$-heights on $2$-connected $3$-regular graphs, by Lemma~\ref{lemma:block_divergence_2connected}, we have
    \begin{align*}
        \beta_2 &:= 1-\frac{1}{2\lvert\BB\rvert}\left(\#\{ B\in\BB \;\mid\; v\in B\}\hspace{0.2cm} - \sum_{B\in\BB\;\mid\;v\in \partial B} (E_{B, v}-1)\right) \\
        & <\hspace{0.1cm} 1-\frac{5.163775}{\lvert\BB\rvert},
    \end{align*}
    and in the notation as in Theorem~\ref{theorem:main_result}, using $b=10$ and $m=24$, we obtain \[
        c_{\BB, 2} = \frac{8\cdot bmk(k+1)^b}{(1-\beta_2)\lvert\BB\rvert}< \frac{8\cdot bmk(k+1)^b}{5.163775} < 4.391132\cdot 10^{7}.
    \]

    Part (2) and (3) follow using the same calculation and Lemma~\ref{lemma:block_divergence_3connected} and Lemma~\ref{lemma:block_divergence_dual_of_4con_triang}.
\end{proof}

\section{Glauber dynamics}\label{sec:glauber_dynamics}

A \textit{spin system} consists of a graph~$G=(V, E)$, where the vertices~$V$ are
also called \textit{sites}, a finite set~$Q = \{1, \ldots, q\}$ of possible
spins and a set of \textit{feasible configurations}~$\Omega\subset Q^V$,
where~$Q^V := \{\varphi: V\to Q\}$. Examples include
proper~\mbox{$q$-colorings}, independent sets (hardcore model) and magnetic
states (Ising model);
see~\cite{blancaCaputoZongchenParisiStefankovicVigoda2022}, from which we take
the following notation.

Spin systems are inspired by physics; in a configuration~$\varphi\in Q^V$
there are interacting forces between neighbor sites depending on their spins,
making some of the configurations less likely or infeasible. This is expressed
by a \textit{weight} or "inverse energy"
function~$w: Q^V\to\mathbb{R}_{\geq 0}$,
$$
w(\varphi) = \prod_{\{v, w\}\in E} A_{vw}(\varphi(v), \varphi(w))
\hspace{0.2cm}\prod_{v\in V} B_v(\varphi(v))\;,
$$
where~$A_{vw}: Q\times Q \to\mathbb{R}_{\geq 0}$ are symmetric functions that
represent the interaction between any two neighbor sites,
and~$B_v: Q \to\mathbb{R}_{\geq 0}$ measures the influence of an "external
field" on~$v$. Then, typically, one
chooses~\mbox{$\Omega := \{ \varphi\in Q^V: w(\varphi) > 0 \}$}. The
\textit{Gibbs distribution}~$\mu$ is the probability distribution on~$\Omega$
in which the probability of an element $\varphi$ is proportional to its weight,
i.e.,~$\mu(\varphi) = w(\varphi) / Z_w$, where
$Z_w = \sum_{\phi\in\Omega} w(\phi)$.

In general, sampling from~$\mu$ is hard; in most cases, computing~$Z_w$ is
already as hard as counting~$\Omega$. This is where \textit{Glauber dynamics}
come into play. These are Markov chains~$(\varphi_t)$ that operate on~$\Omega$
an converge towards~$\mu$. In each step, they randomly select a site~$v$ (or
traverse all sites in some order) and randomly update its spin~$\varphi_t(v)$
according to a distribution~\mbox{$\nu_{\varphi_t, v}: Q\to [0, 1]$}, which is
called \textit{update rule}. A remarkable result due to Hayes and
Sinclair~\cite{hayesSinclair2007} is that for any Glauber dynamics that is
based on a \textit{local} and \textit{reversible} update rule,
$\Omega(n\log n)$ is a lower bound on the mixing time on graphs of bounded degree. Local means
that~$\nu_{\varphi, v}$ may only depend on the current spins of~$v$ and its
neighbors. An update rule is reversible if
$\mu(\varphi)\cdot \nu_{\varphi, v}(q') = \mu(\varphi')\cdot\nu_{\varphi'\kern-2pt,v}(q)$
whenever $\varphi(v)=q$ and $\varphi'$ is obtained from~$\varphi$ by changing
the spin at~$v$ to~$q'$.

Clearly, by setting~$Q = \{0, \ldots, k\}$ and with
\[
    A_{vw}(q_1, q_2) := \begin{cases}
        1 & \text{if }\lvert q_1 - q_2 \rvert \leq 1 \\
        0 & \text{otherwise}
    \end{cases}\;,\text{ \hspace{0.7cm} }B_v :\equiv 1\;,
\]
the set of feasible configurations~$\Omega$ coincides exactly with the set
of~$k$-heights, each element has the same weight, and hence, the Gibbs
distribution equals the uniform distribution~$\mathcal{U}(\Omega)$. The
up/down Markov chain~$\MM = (X_t)$ selects in each iteration uniformly a
site~$v$ and updates its spin~$X_t(v)$ depending on a coin flip and depending
on the spins of the neighbors of~$v$. Clearly, this \textit{up/down update
  rule} is local and also symmetric, hence reversible. This implies
an~$\Omega(n \log n)$ lower bound on its mixing time for all classes of graphs that we studied in Section~\ref{section:applications}, because their vertex degree is bounded.

In~\cite{blancaCaputoZongchenParisiStefankovicVigoda2022} they study
\textit{heat-bath update rules} and conditions that imply that the asymptotic
lower bound is tight, i.e., the mixing time is $\Theta(n \log n)$. An update rule 
is a heat-bath update rule, if the updated spin for~$v$ is sampled according to $\mu$
conditioned on the spins of all other vertices; more precisely,
\[
  \nu_{\varphi, v}(q) = \Prob_{\mu}[\phi(v) =
  q\;\vert\;\phi(w)=\varphi(w)\text{ for all }w\neq v]\;.
\]

It is easy to see that the heat-bath update rule coincides exactly with the
block Markov chain~$\MM_\BB$ for a family of singleton
blocks~$\BB = \left\{ \{v\}\; :\; v\in V\right\}$. In fact, the authors
in~\cite{blancaCaputoZongchenParisiStefankovicVigoda2022} also generalize
their results to so called \textit{block dynamics}, i.e., to resampling larger
blocks, just as~$\MM_\BB$ does.

\begin{theorem}[Blanca et al.,
  Theorem 1.6 in \cite{blancaCaputoZongchenParisiStefankovicVigoda2022}]
  \label{theorem:blancaEtAlOptimalMixing}
  For an arbirary spin system on a graph of maximum degree~$\Delta$, if the
  system is~$\eta$-spectrally independent and~\mbox{$b$-marginally} bounded,
  then there exists a constant~$C=C(b, \eta, \Delta)$ such that the mixing
  time of the Glauber dynamics is upper bounded
  by~\mbox{$\tau(\varepsilon) < C \cdot n \log n$},
  where~\mbox{$C =
    \left(\frac{\Delta}{b}\right)^{\mathcal{O}(1+\frac{\eta}{b})}$}.
\end{theorem}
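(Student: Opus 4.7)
Since the statement is cited from~\cite{blancaCaputoZongchenParisiStefankovicVigoda2022}, my plan is to outline the standard spectral-independence program that would establish it. The first step is to associate to the spin system the weighted simplicial complex whose top-dimensional faces are the feasible configurations, weighted by the Gibbs distribution~$\mu$, and to view the Glauber dynamics as the down-up random walk on this complex that first removes a random site's spin and then resamples it according to~$\mu$ conditioned on the remaining sites.

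Next, I would translate the $\eta$-spectral independence hypothesis into a uniform upper bound on the spectral radius of the pairwise influence matrix at every feasible pinning. Combined with $b$-marginal boundedness, which prevents conditional single-site marginals from being too close to the boundary of the probability simplex, one obtains a local-to-global argument in the spirit of Oppenheim's trickle-down theorem: a bound on the local spectral gap at every link propagates, level by level, to a bound on the spectral gap of the down-up walk at the top of the complex. This alone already yields a Poincar\'e inequality implying~$\tau(\varepsilon) = \mathrm{poly}(n)$ mixing, which however is weaker than the target~$O(n\log n)$.

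The core obstacle, and the key contribution of Blanca et al.\ beyond earlier spectral-independence results, is to upgrade the spectral gap bound to an approximate tensorization of entropy, namely an inequality of the form
\[
\mathrm{Ent}_\mu(f) \;\leq\; C'\sum_{v\in V}\mathbb{E}_\mu\!\bigl[\mathrm{Ent}_{\mu(\cdot\mid\varphi_{V\setminus\{v\}})}(f)\bigr]
\]
valid for all non-negative~$f$ on~$\Omega$. I would prove this by induction on the number of unpinned sites, using marginal boundedness at each step to compare the entropy of a conditional distribution with the expectation of the entropies obtained after revealing one further spin; each such step incurs a multiplicative loss of order~$(\Delta/b)^{O(1+\eta/b)}$, and composing $n$ local factorizations yields~$C'$ and hence a modified log-Sobolev constant of the same order.

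Finally, the classical implication MLSI $\Rightarrow$ mixing gives~$\tau(\varepsilon) \leq C\cdot(n\log n + \log(1/\varepsilon))$ after absorbing~$\log\log(1/\pi_{\min}) = O(\log n)$, where we use~$\pi_{\min} \geq b^n$ from marginal boundedness. The hardest step by far is the entropy factorization: transferring a second-moment (variance) decay to an entropy decay is a fundamentally nonlinear passage, and keeping the dependence on~$\eta$,~$b$ and~$\Delta$ sharp enough to obtain the stated form~$(\Delta/b)^{O(1+\eta/b)}$ --- rather than a cruder constant that would spoil the~$O(n\log n)$ bound --- is what distinguishes the optimal-mixing result from earlier polynomial bounds.
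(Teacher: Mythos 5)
This theorem is a black-box citation in the paper: the authors quote it verbatim from Blanca et al.\ (their Theorem~1.6) and supply no proof, so there is nothing in the paper to compare your argument against line by line. What you have written is therefore best read as a sketch of the literature rather than as a verification of something the present paper establishes.

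With that caveat, your roadmap is a fair high-level account of the spectral-independence-to-optimal-mixing paradigm, but a few aspects deserve flagging. First, the simplicial-complex/down-up-walk and trickle-down language you lead with is the vocabulary of the earlier Anari--Liu--Oveis Gharan and Chen--Liu--Vigoda lines; Blanca et al.\ deliberately recast the argument in a more elementary framework centered on \emph{block factorization of entropy}, deriving uniform block factorization from spectral independence and marginal boundedness and then specializing to single-site (Glauber) factorization, rather than arguing through links of a complex. Second, you correctly identify the crux --- upgrading a Poincar\'e-type (variance) statement to an approximate tensorization of entropy while keeping the constant $(\Delta/b)^{O(1+\eta/b)}$ --- but the sentence claiming the induction ``incurs a multiplicative loss of order $(\Delta/b)^{O(1+\eta/b)}$ at each step, and composing $n$ local factorizations yields $C'$'' would give a constant exponential in $n$; the actual bound comes from a telescoping/boosting argument where the per-level losses compound to a constant independent of $n$, which is precisely the subtle point. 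Third, the passage from a modified log-Sobolev inequality to $O(n\log n)$ mixing uses $\log\log(1/\pi_{\min}) = O(\log n)$, which you state, but you should note this requires $\pi_{\min}\geq b^n$, which indeed follows from $b$-marginal boundedness applied one vertex at a time. In short: correct skeleton, one quantitatively wrong step in the induction accounting, and a presentational mismatch with the specific (more coupling- and factorization-oriented, less simplicial) route Blanca et al.\ actually take.
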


With the next lemma we show that the spin system
of~\mbox{$k$-heights} is~\textit{$b$-marginally bounded}.
After that we comment on the other condition of
Theorem~\ref{theorem:blancaEtAlOptimalMixing}: spectral independence.

Recall the definitions of~$\Omega_B$ and~$\Omega_{B\vert X}$ in
Subsection~\ref{subsection:preliminaries_blockchain}. They naturally
generalize to spin systems other than~$k$-heights.  Let~$\mu_{B\vert X}$
denote the distribution on~$\Omega_{B\vert X}$ which is the Gibbs distribution
conditioned on the spin values of~$X$ in the set~$V\setminus B$. In the case
of \mbox{$k$-heights},~\mbox{$\mu_{B\vert X} = \mathcal{U}(\Omega_B)$}. A spin system is
called~$b$-marginally bounded, if for every~$X\in\Omega$, for
every~$B\subset V$, for every site~$v\in B$ and for any spin~$q$ for which
there exists a~$\varphi\in\Omega_{B\vert X}$ with~$\varphi(v) = q$, the
probability under~$\mu_{B\vert X}$ of seeing spin~$q$ at site~$v$ is lower
bounded by~$b$, i.e.,~$\Prob_{\mu_{B\vert X}}[\phi(v) = q] \geq b$.

In a graph $G$ we let
$B_r(v) := \{ w\in V : \operatorname{dist}(v, w) < r\}\hspace{0.1cm}$ be the
ball of radius $r$ around $v$.

\begin{lemma}
  With respect to the spin system of $k$-heights on $G$, the Gibbs
  distribution~$\mu = \mathcal{U}(\Omega)$ is $b$-marginally bounded with
  $$b := \frac{1}{(k+1)^{(\Delta - 1)^{k-1}}}\;,$$ where $\Delta$ denotes the
  maximum degree of $G$.
\end{lemma}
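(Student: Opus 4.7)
The plan is to prove the lemma by a preimage-counting argument. Fix $X \in \Omega$, $B \subseteq V$, $v \in B$, and an attainable spin $q$ at $v$; let $\psi \in \Omega_{B\vert X}$ be any witness with $\psi(v) = q$. I will construct a map $\Phi \colon \Omega_{B\vert X} \to \{\varphi \in \Omega_{B\vert X} : \varphi(v) = q\}$ each of whose fibers has size at most $(k+1)^{(\Delta-1)^{k-1}}$. Since $\mu_{B\vert X}$ is the uniform distribution on $\Omega_{B\vert X}$, this fiber bound yields at once $\Prob_{\mu_{B\vert X}}[\phi(v) = q] \geq b$.

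The map $\Phi$ will be defined by a plateau modification around $v$:
\[
\Phi(\varphi)(u) \;:=\; \begin{cases} \max\bigl(\varphi(u),\, q - \operatorname{dist}(u,v)\bigr) & \text{if } \varphi(v) < q,\\ \min\bigl(\varphi(u),\, q + \operatorname{dist}(u,v)\bigr) & \text{if } \varphi(v) > q,\\ \varphi(u) & \text{if } \varphi(v) = q. \end{cases}
\]
I would verify four things in sequence: (i) $\Phi(\varphi)(v) = q$ by direct evaluation; (ii) $\Phi(\varphi)$ takes values in $\{0,\ldots,k\}$ since $0 \leq q \leq k$; (iii) $\Phi(\varphi)$ satisfies the $1$-Lipschitz condition on edges, because it is the pointwise max (resp.\ min) of two $1$-Lipschitz functions, namely $\varphi$ and the cone $u \mapsto q \mp \operatorname{dist}(u,v)$; and (iv) $\Phi(\varphi)$ agrees with $X$ on $V \setminus B$, so that $\Phi(\varphi) \in \Omega_{B\vert X}$. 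Step (iv) is the only place where the attainability hypothesis is used: from $\psi \in \Omega_{B\vert X}$ with $\psi(v) = q$ we deduce $|X(u) - q| = |\psi(u) - \psi(v)| \leq \operatorname{dist}(u,v)$ for every $u \in V \setminus B$, so both the max and the min leave $X(u)$ untouched.

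For the fiber bound, observe that $\Phi(\varphi)$ differs from $\varphi$ only on vertices strictly inside the ball $B_k(v)$: whenever $\operatorname{dist}(u,v) \geq k$ one has $q - \operatorname{dist}(u,v) \leq 0 \leq \varphi(u)$, and symmetrically in the min case. Hence any two preimages of a given $\Phi(\varphi)$ can disagree only on $B_k(v) \cap B$, and at each such vertex there are at most $k+1$ admissible spin values, giving a fiber of size at most $(k+1)^{|B_k(v)|}$. A standard BFS count in a graph of maximum degree $\Delta$ then yields the exponent $(\Delta-1)^{k-1}$ appearing in $b$, which completes the proof.

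I expect the main obstacle to lie in verifying step (iv), since that is the only point where the attainability of $q$ is truly needed and where a careless modification would violate the fixed boundary $X$ on $V \setminus B$. The Lipschitz check in (iii) is routine via the classical fact that pointwise max and min preserve the $1$-Lipschitz property, mirroring the distributive-lattice manipulations already exploited in Lemma~\ref{lemma:downsets_and_upsets} and Proposition~\ref{prop:stoch-dom}; and the ball count is a purely combinatorial bookkeeping step.
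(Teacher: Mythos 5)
Your proof is correct and takes essentially the same approach as the paper: the case-split map $\Phi$ you define is functionally identical to the paper's truncation $f(Y)=(Y\vee U)\wedge L$ with $L(w)=\max\{q-\operatorname{dist}(v,w),0\}$ and $U(w)=\min\{q+\operatorname{dist}(v,w),k\}$, because for a $k$-height $\varphi$ with $\varphi(v)<q$ the upper truncation is automatically vacuous (and symmetrically for $\varphi(v)>q$). Your fiber-counting argument is exactly the paper's count of extensions of restrictions to $W=B\setminus B_k(v)$, merely rephrased in preimage form, and your step (iv) is precisely where the paper also invokes the witness with value $q$ at $v$ to verify compatibility with $X$ at $\partial B$.
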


\begin{proof}
Fix $X\in\Omega$, a set $B \subset V$, a site $v\in B$ and spin $q\in Q$.
Define $k$-heights $L, U\in\Omega_B$ by
\begin{align*}
        L(w) &:= \max \{ q - \operatorname{dist}(v, w), 0 \}\\
        U(w) &:= \min \{ q + \operatorname{dist}(v, w), k \}
\end{align*}
for any~$w\in B$, where~$\operatorname{dist}(v, w)=\infty$ if there 
is no $v-w$ path in $B$.
Let \[f: \Omega_B\to\Omega_B, \hspace{0.5cm}Y\mapsto (Y\lor U)\land L\; ,\] where~$\lor$
\mbox{resp.}~$\land$ denotes the pointwise minimum \mbox{resp.} maximum of
two~$k$-heights. Indeed,~$f$ is well defined, as~$\Omega_B$ is closed under
these operations. Note that~$f(Y)(v) = q$, and~$Y$ and~$f(Y)$ can only differ
on sites in the ball~$B_k(v)$.
    
We will now show that~$Y\in\Omega_{B\vert X}$
implies~$f(Y)\in\Omega_{B\vert X}$. Let $(w,w')$ be an edge
with~\mbox{$w\in B$} and $w'\not\in B$, i.e.,~$w'\in\partial B$.
The claim is that $X(w') - 1 \leq f(Y)(w)\leq X(w') + 1$.
By assumption, there exists~$\varphi\in\Omega_{B\vert X}$ with~$\varphi(v) =
q$. We know~\mbox{$\varphi(w)\geq q - \operatorname{dist}(v, w)$} by
the~$k$-height property, hence, $L(w) \leq \varphi(w) \leq X(w') + 1$.
Since~$Y\in\Omega_{B\vert X}$ we have~$Y(w)\leq X(w')+1$, and we obtain
\begin{align*}
        f(Y)(w) &= ((Y\lor U)\land L)(w) \leq (L\land Y)(w) \\
        &= \max \{L(w), Y(w)\} \leq X(w') + 1\;.
\end{align*}
A similar argument shows~$f(Y)(w)\geq X(w') - 1$: first show that $Y(w)$ and
$U(w)$ and hence also~\mbox{$(Y\lor U)(w)$} are lower bounded by $X(w') - 1$, therefore,
$X(w') - 1$ is also a lower bound for the pointwise maximum
$((Y\lor U)\land L)(w) = f(Y)(w)$. This completes the proof
of~$f(Y)\in\Omega_{B\vert X}$.

Let~$W := B\setminus B_k(v)$. For an admissible
filling~$\varphi\in\Omega_{B\vert X}$, let~$\restr{\varphi}{W}\in\Omega_W$
be the restriction on~$W$ and
let~$\Omega_W^* := \{ \phi\in\Omega_W\;\vert\;\exists \varphi\in\Omega_{B\vert
  X}: \restr{\varphi}{W} = \phi\}$. By definition, every~$\phi\in\Omega_W^*$
can be extended to an admissible filling~\mbox{$\varphi\in\Omega_{B\vert X}$},
but by applying~$f$ on any such extension, it can be even extended to an
admissible filling~$\varphi\in\Omega_{B\vert X}$ with~$\varphi(v) = q$. In
other words, out of at most
\[ (k+1)^{\lvert B \cap B_k(x) \rvert}\leq(k+1)^{\lvert B_k(x) \rvert}\leq
  (k+1)^{(\Delta - 1)^{k-1}} = b^{-1}
\] ways to extend~$\phi\in\Omega_W^*$ to an admissible filling~$\varphi\in\Omega_{B\vert X}$, there exists at least one which fulfills~$\varphi(v)=q$. For a random admissible filling $\varphi\sim\mathcal{U}(\Omega_{B\vert X})$ we conclude:
$$
\Prob[\varphi(v) = q] \geq b\;.
$$
\vskip-5mm\vbox{}
\end{proof}

The second condition of Theorem~\ref{theorem:blancaEtAlOptimalMixing} requires
the spin system to be~\textit{$\eta$-spectrally independent}. This is defined
in~\cite{blancaCaputoZongchenParisiStefankovicVigoda2022} in terms of the
\textit{ALO influence matrix}. Let~$B\subset V$, $X\in \Omega$, let~$T = V \times Q$
and let~$J_{B\vert X}\in\mathbb{R}^{\mathcal{T}\times\mathcal{T}}$ be the ALO
influence matrix defined
as\[J_{B\vert X}(v, q, v', q') := \Prob_{\mu_{B\vert X}}[\varphi(v) =
  q\;\vert\;\varphi(v') = q'] - \Prob_{\mu_{B\vert X}}[\varphi(v) = q]\;.\] Now, the spin
system is said to be~$\eta$-spectrally independent if for all~$B\subset V$
and~$X\in \Omega$ the largest
eigenvalue~$\lambda_1$ of $J_{B\vert X}$ satisfies $\lambda_1 < \eta$.

Blanca et al.~\cite{blancaCaputoZongchenParisiStefankovicVigoda2022}
relate~$\eta$-spectral independence with the existence of contracting
couplings as in
Theorem~\ref{theorem:montone_coupling_classical}. Let~$U\subset V$. For a so
called \textit{pinning}~$\tau: U\to Q$, the restricted Glauber
dynamics~$(\varphi_t^\tau)$ operates on the
set~$\Omega_{V\setminus U\vert \tau}$; in each step it updates a site
in~$V\setminus U$ using the heat bath update rule w.r.t. the
distribution~$\mu_{V\setminus U\vert \tau}$. They prove the following result:

\begin{theorem}[Blanca et al., Theorem 1.10 in
  \cite{blancaCaputoZongchenParisiStefankovicVigoda2022}]
  \label{theorem:blancaEtAlSpectralIndependence}
  If for every pinning~\mbox{$\tau: U\to Q$} there is a
  coupling~\mbox{$(X_t, Y_t)$} of the restricted Glauber
  dynamics~$(\phi_t^\tau)$ and a~$\beta < 1$ such that
  \[
    \mathbb{E}[d(X_{t+1}, Y_{t+1})] < \beta\cdot d(X_{t}, Y_{t})\;,
  \]
  where~$d(X, Y) := \#\{v\in \Omega_{V\setminus U}: X(v)\neq Y(v)\}$, then the
  spin system is~\mbox{$\eta$-spectrally} independent with
  constant~\mbox{$\eta = \frac{2}{(1-\beta)n}$}.
\end{theorem}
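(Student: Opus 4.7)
The goal is to bound the largest eigenvalue $\lambda_1$ of the ALO influence matrix $J_{B\vert X}$ for every $B\subset V$ and every $X\in\Omega$ by $\eta = 2/((1-\beta)n)$. I would start by observing that restricting to $B$ with boundary values given by $X$ is equivalent to a pinning $\tau:V\setminus B\to Q$ obtained from $X$, and the hypothesis of the theorem provides a contracting coupling for every such pinned system. Thus it suffices to bound the spectral radius of the ALO matrix for a single pinned restricted system.

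My main step would be to reinterpret the ALO entries as coupling quantities. For two initial configurations $X_0,Y_0$ that agree outside a single site $v'$, run the hypothesised coupling. A telescoping argument expresses $\sum_q \lvert J(v,q,v',q') \rvert$ (for each fixed $v',q'$) in terms of the expected number of disagreements produced by the coupling at site $v$, summed over all future times. This is a standard correlation-decay-from-coupling manoeuvre: conditioning on the spin at $v'$ is equivalent to forcing a single-site disagreement at time $0$, and the discrepancy between the two coupled chains at any other site $v$ after time $t$ quantifies the $t$-step information that $v'$ has about $v$ under $\mu$.

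Summing $\mathbb{E}[d(X_t,Y_t)]\leq\beta^t\cdot d(X_0,Y_0)\leq\beta^t$ over $t\geq 0$ gives a geometric-series total bounded by $1/(1-\beta)$. The factor $1/n$ in $\eta$ arises from the fact that Glauber dynamics updates a uniformly random site with probability $1/n$ per step (so time must be rescaled accordingly), and the factor $2$ from a symmetrisation, or equivalently from translating an $\ell^\infty$ row-sum bound on a non-symmetric nonnegative matrix into a bound on its spectral radius via Perron--Frobenius. Combining these pieces yields $\lambda_1\leq 2/((1-\beta)n)$ as required.

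The main obstacle will be the careful arithmetic linking coupling quantities to ALO entries and tracking the exact normalisation. The conceptual translation \emph{contracting coupling implies spectral independence} is by now a standard technique in the recent literature on spectral independence and high-dimensional expanders, but the precise constants depend sensitively on whether one works with continuous-time or discrete-time dynamics, on the exact definition of $J_{B\vert X}$, and on whether the coupling is applied to the heat-bath or flip dynamics; this bookkeeping is precisely where the delicate arguments of Blanca~et~al.\ \cite{blancaCaputoZongchenParisiStefankovicVigoda2022} come into play.
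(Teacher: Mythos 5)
This theorem is quoted verbatim from Blanca~et~al.\ (their Theorem~1.10) and is used in the paper only as a black-box ingredient: no proof appears in the present paper, so there is nothing internal to compare your attempt against.

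As a sketch of the cited proof, your high-level plan (contracting coupling $\Rightarrow$ bound on total influence $\Rightarrow$ bound on the spectral radius of the ALO matrix) points in the right direction and is indeed the standard route in the spectral-independence literature. Two of the specific mechanisms you invoke do not hold up as stated, though. First, the ALO entries $J_{B\vert X}(v,q,v',q')$ are single-time stationary quantities, and the naive geometric sum $\sum_{t\geq 0}\mathbb{E}[d(X_t,Y_t)]\leq\sum_{t\geq 0}\beta^t=\frac{1}{1-\beta}$ does not bound them directly: the coupled chains eventually coalesce, so the discrepancy at every site tends to zero, yet the stationary influence need not. One has to exploit stationarity of the pinned chain (so that both marginals remain the two conditional laws being compared for all $t$) and then relate the fixed-time disagreement probability at a given site to the per-step coupling drift; this is where the factor $1/n$ actually enters, and it is not mere time rescaling. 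Second, Perron--Frobenius is not the right tool for the factor $2$: the ALO matrix has negative entries, so Perron--Frobenius does not apply to it; the passage from a row-sum bound to a bound on the largest eigenvalue goes through an operator-norm estimate, and the $2$ is more naturally read as a total-variation-versus-$\ell^1$ conversion on the rows. You concede that the precise arithmetic and normalisation are the delicate part, and that is accurate: what you have written is a reasonable roadmap to the argument in Blanca~et~al., not a proof.
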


This result can be extended to block dynamics operating
on~$\Omega_{V\setminus U\vert \tau}$ using an arbitrary family of
blocks~$\BB\subset \mathcal{P}(V\setminus U)$; see Theorem~1.11
in~\cite{blancaCaputoZongchenParisiStefankovicVigoda2022}. Under the
conditions of Theorem~\ref{theorem:main_result} the block coupling that we
constructed in Algorithm~\ref{alg:block_coupling} is a contracting coupling as
required by Theorem~\ref{theorem:blancaEtAlSpectralIndependence} in the
absence of any pinning ($U=\emptyset$). However, it seems difficult for us to
establish a contracting block coupling that is compatible with the restriction
under an arbitrary pinning~\mbox{$\tau: U\to V$}.

We leave it as an open question in which cases the spin system of~$k$-heights
satisfies~\mbox{$\eta$-spectral} independence. This would imply an optimal
mixing time of~$\Theta(n \log n)$ for $\MM_\BB$ using singleton
blocks~$\BB = \left\{ \{v\}\;:\; v\in V\right\}$.

\section{Conclusion}

In this work we studied a natural up/down Markov chain~$\MM$ operating on $k$-heights, which can also be considered as valid configurations $\sigma: V\to \{0, \cdots, k\}$ of a spin systems with hard constraints~$\lvert \sigma(v) - \sigma(w)\rvert \leq 1$ for all $\{v, w\}\in E$. We established a criterion for the boosted block Markov chain~$\MM_{\BB}$ to be rapidly mixing, which implies that~$\MM$ is rapidly mixing as well, and showed several examples of graph classes to which it applies.

\begin{question}\label{question:rapidly_mixing_for_all_k}
    Is the up/down Markov chain~$\MM$ on the graph classes discussed in Theorem~\ref{theorem:rapidly_mixing_toroidal_rectangle_graph}, Theorem~\ref{theorem:rapidly_mixing_toroidal_hexagonal_graph} and Theorem~\ref{theorem:rapidly_mixing_regular_graph} rapidly mixing for all values of $k$?
\end{question}

For some smaller values of $k$, using larger blocks~$\BB$ and more computational power, one might be able to show that Theorem~\ref{theorem:main_result} can still be applied, hence,~$\MM$ mixes rapidly. A completely affirmative answer of Question~\ref{question:rapidly_mixing_for_all_k}, i.e.~for arbitrary large values of $k$, could consist of an argument which guarantees that one can always choose the blocks large enough so that the conditions of Theorem~\ref{theorem:main_result} are satisfied. 

Intuitively, for a vertex~$v\in\partial B$ and a vertex~$w\in B$ with large distance $\operatorname{dist}(v, w)$, an increase of~$X(v)$ should not greatly affect the distribution of~$X'(w)$ when~$X'\sim\mathcal{U}(\Omega_{B\vert X})$. This intuition is captured by the concept of \textit{spatial mixing}. The spin system of $k$-heights has \textit{strong spatial mixing}, if there exist constants $\beta$ and $\alpha > 0$ so that for all $B'\subset B \subset V$ and for any pair of boundary constraints $X, Y\in\partial B$ that differ only in a single vertex $v\in\partial B$ we have \[\lVert \mathcal{U}(\Omega_{B \vert X}) - \mathcal{U}(\Omega_{B \vert Y})\rVert_{B'} < \beta\cdot\lvert B'\rvert\cdot\operatorname{exp}(-\alpha\cdot\operatorname{dist}(v, B'))\;,\] where $\lVert \mu - \mu'\rVert_{B'}$ denotes the total variation distance of the projections of $\mu, \mu'$ on $B'$. Dyer, Sinclair, Vigoda and Weitz~\cite{dyerSinclairVigodaWeitz04} have shown that spatial mixing implies a mixing time of~$\mathcal{O}(n \log n)$ if the system is~\textit{monotone}. In fact, the system of $k$-heights is monotone by Proposition~\ref{prop:stoch-dom}. Hence, the following question becomes relevant:

\begin{question}
    For which classes of graphs and for which values of $k$ does the system of $k$-heights have strong spatial mixing with respect to the uniform distribution?
\end{question}

Finally, we want to mention that~$\MM$ is clearly not rapidly mixing on any graph. The easiest example is the complete graph~$K_n$ on~$n$ vertices and $k=2$. There are three disjoint classes of~$k$-heights which form a partition of $\Omega$:
\begin{align*}
    \Omega_> &:= \{ X \in \Omega\;\vert\; \exists v\in V: X(v) = 0\} & \lvert\Omega_>\rvert &= 2^n - 1\\
    \Omega_< &:= \{ X \in \Omega\;\vert\; \exists v\in V: X(v) = 2\} & \lvert\Omega_<\rvert &= 2^n - 1\\
    \Omega_{=} &:= \{ X_1 \equiv 1\} & \lvert\Omega_{=} \rvert &= 1
\end{align*}

Every sequence of transitions between~$\Omega_<$ and~$\Omega_>$ must contain~$X_1$, and~$X_1$ is incident to exactly~$n$ transitions to each of the other two classes. Clearly, this is a \textit{bottleneck} which shows that~$\MM$ is not rapidly mixing; see \cite[Theorem 7.4]{LPW17}.

\begin{question}
    Is the up/down Markov chain~$\MM$ on $k$-heights rapidly mixing on all graphs of maximum degree bounded by~$\Delta$, for some or arbitrary values~$\Delta \geq 2$?
\end{question}

\def\sc{\scshape}
\def\tt{\ttfamily}
\bibliographystyle{plainurl}
\bibliography{bibliography}

\begin{thebibliography}{10}

\bibitem{ahlswedeDaykin78}
Rudolf Ahlswede and David~E. Daykin.
\newblock An inequality for the weights of two families of sets, their unions
  and intersections.
\newblock {\em Z. Wahrscheinlichkeitstheor. Verw. Geb.}, 43:183--185, 1978.
\newblock \href {https://doi.org/10.1007/BF00536201}
  {\path{doi:10.1007/BF00536201}}.

\bibitem{alonSpencer16}
Noga Alon and Joel~H. Spencer.
\newblock {\em The probabilistic method}.
\newblock Wiley, 4th edition, 2016.

\bibitem{blancaCaputoZongchenParisiStefankovicVigoda2022}
Antonio Blanca, Pietro Caputo, Zongchen Chen, Daniel Parisi, Daniel
  {\v{S}}tefankovi{\v{c}}, and Eric Vigoda.
\newblock On mixing of {Markov} chains: coupling, spectral independence, and
  entropy factorization.
\newblock {\em Electron. J. Probab.}, 27(142):1--42, 2022.
\newblock \href {https://doi.org/10.1214/22-EJP867}
  {\path{doi:10.1214/22-EJP867}}.

\bibitem{brown98}
Kenneth~S. Brown and Persi Diaconis.
\newblock Random walks and hyperplane arrangements.
\newblock {\em Ann. Probab.}, 26:1813--1854, 1998.
\newblock \href {https://doi.org/10.1214/aop/1022855884}
  {\path{doi:10.1214/aop/1022855884}}.

\bibitem{bubleyDyer1997}
Russ Bubley and Martin~E. Dyer.
\newblock Path coupling: a technique for proving rapid mixing in {Markov}
  chains.
\newblock In {\em Proc. FOCS '97}, pages 223--231. IEEE, 1997.
\newblock \href {https://doi.org/10.1109/SFCS.1997.646111}
  {\path{doi:10.1109/SFCS.1997.646111}}.

\bibitem{BuDy99}
Russ Bubley and Martin~E. Dyer.
\newblock Faster random generation of linear extensions.
\newblock {\em Discret. Math.}, 201:81--88, 1999.
\newblock \href {https://doi.org/10.1016/S0012-365X(98)00333-1}
  {\path{doi:10.1016/S0012-365X(98)00333-1}}.

\bibitem{dobrushinShlosman85}
Roland~L. Dobrushin and Senya~B. Shlosman.
\newblock Constructive criterion for the uniqueness of {Gibbs} field.
\newblock In {\em Statistical Physics and Dynamical Systems.}, volume~10 of
  {\em Progress in Physics}, pages 347--370. Birkhäuser, 1985.

\bibitem{dyerGreenhill}
Martin Dyer and Catherine Greenhill.
\newblock A more rapidly mixing {Markov} chain for graph colorings.
\newblock {\em Random Struct. Algorithms}, 13(3-4):285--317, 1998.
\newblock \href
  {https://doi.org/10.1002/(SICI)1098-2418(199810/12)13:3/4<285::AID-RSA6>3.0.CO;2-R}
  {\path{doi:10.1002/(SICI)1098-2418(199810/12)13:3/4<285::AID-RSA6>3.0.CO;2-R}}.

\bibitem{dyerSinclairVigodaWeitz04}
Martin Dyer, Alistair Sinclair, Eric Vigoda, and Dror Weitz.
\newblock Mixing in time and space for lattice spin systems: a combinatorial
  view.
\newblock {\em Random Struct. Algorithms}, 24(4):461--479, 2004.
\newblock \href {https://doi.org/10.1002/rsa.20004}
  {\path{doi:10.1002/rsa.20004}}.

\bibitem{felsner04}
Stefan Felsner.
\newblock Lattice structures from planar graphs.
\newblock {\em Electron. J. Comb.}, 11(1, R15):24 pages, 2004.
\newblock \href {https://doi.org/10.37236/1768} {\path{doi:10.37236/1768}}.

\bibitem{felsnerWernisch97}
Stefan Felsner and Lorenz Wernisch.
\newblock Markov chains for linear extensions: the two-dimensional case.
\newblock In {\em Proc. SODA '97}, pages 239--247. SIAM and ACM, 1997.

\bibitem{friezeVigoda07}
Alan Frieze and Eric Vigoda.
\newblock A survey on the use of {Markov} chains to randomly sample colorings.
\newblock In {\em Combinatorics, Complexity, and Chance: A Tribute to Dominic
  Welsh}. Oxford Univ. Press, 2007.

\bibitem{guruswami2016}
Venkatesan Guruswami.
\newblock Rapidly mixing markov chains: {A} comparison of techniques {(A}
  survey).
\newblock {\em ArXiv}, 2016.
\newblock URL: \url{http://arxiv.org/abs/1603.01512}.

\bibitem{hayesSinclair2007}
Thomas~P. Hayes and Alistair Sinclair.
\newblock A general lower bound for mixing of single-site dynamics on graphs.
\newblock {\em Ann. Appl. Probab.}, 17, 2007.
\newblock \href {https://doi.org/10.1214/105051607000000104}
  {\path{doi:10.1214/105051607000000104}}.

\bibitem{heldt16}
Daniel Heldt.
\newblock {\em {On the mixing time of the face flip– and up/down Markov chain
  for some families of graphs}}.
\newblock PhD thesis, {Technische Universität Berlin}, 2016.
\newblock \href {https://doi.org/10.14279/depositonce-5182}
  {\path{doi:10.14279/depositonce-5182}}.

\bibitem{huber2006}
Mark Huber.
\newblock Fast perfect sampling from linear extensions.
\newblock {\em Discrete Math.}, 306(4):420--428, 2006.
\newblock \href {https://doi.org/10.1016/j.disc.2006.01.003}
  {\path{doi:10.1016/j.disc.2006.01.003}}.

\bibitem{jerrum03}
Mark Jerrum.
\newblock {\em Counting, sampling and integrating: algorithms and complexity}.
\newblock {L}ectures in {M}athematics. {ETH Zürich}. Birkh{\"a}user, 2003.

\bibitem{kannan94}
Ravi Kannan.
\newblock Markov chains and polynomial time algorithms.
\newblock In {\em Proc. FOCS '94}, pages 656--671. IEEE, 1994.
\newblock \href {https://doi.org/10.1109/SFCS.1994.365726}
  {\path{doi:10.1109/SFCS.1994.365726}}.

\bibitem{KaKa91}
Alexander Karzanov and Leonid Khachiyan.
\newblock On the conductance of order {Markov} chains.
\newblock {\em Order}, 8:7--15, 1991.
\newblock \href {https://doi.org/10.1007/BF00385809}
  {\path{doi:10.1007/BF00385809}}.

\bibitem{LPW17}
David~A. Levin, Yuval Peres, and Elizabeth~L. Wilmer.
\newblock {\em Markov chains and mixing times. {With} a chapter on ``{Coupling}
  from the past'' by {James} {G}. {Propp} and {David} {B}. {Wilson}.}
\newblock AMS, 2nd ed. edition, 2017.

\bibitem{lindvall99}
Torgny Lindvall.
\newblock On {Strassen}'s theorem on stochastic domination.
\newblock {\em Electron. Commun. Probab.}, 4:51--59, 1999.
\newblock \href {https://doi.org/10.1214/ECP.v4-1005}
  {\path{doi:10.1214/ECP.v4-1005}}.

\bibitem{martinelli99}
Fabio Martinelli.
\newblock Lectures on {Glauber} dynamics for discrete spin models.
\newblock In {\em Lectures on probability theory and statistics. Ecole d'et\'e
  de Probabilit\'es de Saint-Flour XXVII--1997}, pages 93--191. Berlin:
  Springer, 1999.

\bibitem{winkler96}
Milena Mihail and Peter Winkler.
\newblock On the number of eulerian orientations of a graph.
\newblock {\em Algorithmica}, 16:402--414, 1996.
\newblock \href {https://doi.org/10.1007/BF01940872}
  {\path{doi:10.1007/BF01940872}}.

\bibitem{randallTetali00}
Dana Randall and Prasad Tetali.
\newblock Analyzing {Glauber} dynamics by comparison of {Markov} chains.
\newblock {\em J. Math. Phys.}, 41(3):1598--1615, 2000.
\newblock \href {https://doi.org/10.1063/1.533199}
  {\path{doi:10.1063/1.533199}}.

\bibitem{supplemental_data}
Sandro~M. Roch.
\newblock {Block coupling and rapidly mixing k-heights: supplemental code}, oct
  2024.
\newblock \href {https://doi.org/10.5281/zenodo.13912818}
  {\path{doi:10.5281/zenodo.13912818}}.

\bibitem{lalley07}
{Steven Lalley}.
\newblock {Columbia university summer course: Stochastic interacting particle
  systems, problem set a: Monotone coupling}.
\newblock \url{http://galton.uchicago.edu/~lalley/Courses/Columbia/HWA.pdf},
  2007.
\newblock Accessed: 2024-09-06.

\bibitem{strassen65}
Volker Strassen.
\newblock The existence of probability measures with given marginals.
\newblock {\em Ann. Math. Stat.}, 36:423--439, 1965.
\newblock \href {https://doi.org/10.1214/aoms/1177700153}
  {\path{doi:10.1214/aoms/1177700153}}.

\bibitem{weitz05}
Dror Weitz.
\newblock Combinatorial criteria for uniqueness of {Gibbs} measures.
\newblock {\em Random Struct. Algorithms}, 27(4):445--475, 2005.
\newblock \href {https://doi.org/10.1002/rsa.20073}
  {\path{doi:10.1002/rsa.20073}}.

\bibitem{DBWi04}
David~B. Wilson.
\newblock Mixing times of lozenge tiling and card shuffling {Markov} chains.
\newblock {\em Ann. Appl. Probab.}, 14:274--325, 2004.
\newblock \href {https://doi.org/10.1214/aoap/1075828054}
  {\path{doi:10.1214/aoap/1075828054}}.

\end{thebibliography}

\appendix

\newpage
\section{Block divergence on \texorpdfstring{$3$-regular}{3-regular} graphs}
\label{appendix:block_divergence_data}

\begin{table}[h] 
\caption{Block divergence for blocks of type~$1$.}
\medskip
\label{tbl:block_divergence_type1}

\fontsize{6}{7} \selectfont

\begin{minipage}{0.495\textwidth}
\begin{tabular}{rc|lll}
\toprule
$k$ & Case & $\lvert\Omega_B\rvert$ & $\lvert\Omega_{\partial B}\rvert$ & $\max E_{B, v}$ \\
\midrule
    2 & $1_3[1]$    & 15 & 27 & $\approx$ 0.727273 \\
    2 & $1_4[1]$    & 35 & 81 & $\approx$ 0.769231 \\
    2 & $1_5[1]$    & 83 & 243 & $\approx$ 0.790323 \\
    2 & $1_6[1]$    & 199 & 729 & $\approx$ 0.798658 \\
    2 & $1_7[1]$    & 479 & 2187 & $\approx$ 0.802228 \\
    2 & $1_8[1]$    & 1155 & 6561 & $\approx$ 0.803695 \\
    2 & $1_9[1]$    & 2787 & 19683 & $\approx$ 0.804306 \\
    2 & $1_{10}[1]$   & 6727 & 59049 & $\approx$ 0.804559 \\
    2 & $1_3[1,2]$   & 15 & 9 & $\approx$ 1.327273 \\
    2 & $1_4[1,2]$   & 35 & 27 & $\approx$ 1.384615 \\
    2 & $1_4[1,3]$   & 35 & 27 & $\approx$ 1.435897 \\
    2 & $1_5[1,2]$   & 83 & 81 & $\approx$ 1.415323 \\
    2 & $1_5[1,3]$   & 83 & 81 & $\approx$ 1.540323 \\
    2 & $1_6[1,2]$   & 199 & 243 & $\approx$ 1.426863 \\
    2 & $1_6[1,3]$   & 199 & 243 & $\approx$ 1.574777 \\
    2 & $1_6[1,4]$   & 199 & 243 & $\approx$ 1.552082 \\
    2 & $1_7[1,2]$   & 479 & 729 & $\approx$ 1.431858 \\
    2 & $1_7[1,3]$   & 479 & 729 & $\approx$ 1.591048 \\
    2 & $1_7[1,4]$   & 479 & 729 & $\approx$ 1.579371 \\
    2 & $1_8[1,2]$   & 1155 & 2187 & $\approx$ 1.433892 \\
    2 & $1_8[1,3]$   & 1155 & 2187 & $\approx$ 1.597510 \\
    2 & $1_8[1,4]$   & 1155 & 2187 & $\approx$ 1.592294 \\
    2 & $1_8[1,5]$   & 1155 & 2187 & $\approx$ 1.586912 \\
    2 & $1_9[1,2]$   & 2787 & 6561 & $\approx$ 1.434741 \\
    2 & $1_9[1,3]$   & 2787 & 6561 & $\approx$ 1.600246 \\
    2 & $1_9[1,4]$   & 2787 & 6561 & $\approx$ 1.597410 \\
    2 & $1_9[1,5]$   & 2787 & 6561 & $\approx$ 1.598880 \\
    2 & $1_{10}[1,2]$  & 6727 & 19683 & $\approx$ 1.435092 \\
    2 & $1_{10}[1,3]$  & 6727 & 19683 & $\approx$ 1.601372 \\
    2 & $1_{10}[1,4]$  & 6727 & 19683 & $\approx$ 1.599577 \\
    2 & $1_{10}[1,5]$  & 6727 & 19683 & $\approx$ 1.603594 \\
    2 & $1_{10}[1,6]$  & 6727 & 19683 & $\approx$ 1.600502 \\
    2 & $1_3[1,2,3]$  & 15 & 3 & $\approx$ 1.500000 \\
    2 & $1_4[1,2,3]$  & 35 & 9 & $\approx$ 1.869231 \\
    2 & $1_5[1,2,3]$  & 83 & 27 & $\approx$ 1.905707 \\
    2 & $1_5[1,2,4]$  & 83 & 27 & $\approx$ 2.051192 \\
    2 & $1_6[1,2,3]$  & 199 & 81 & $\approx$ 1.923658 \\
    2 & $1_6[1,2,4]$  & 199 & 81 & $\approx$ 2.150510 \\
    2 & $1_6[1,3,5]$  & 199 & 81 & $\approx$ 2.150510 \\
    2 & $1_7[1,2,3]$  & 479 & 243 & $\approx$ 1.930434 \\
    2 & $1_7[1,2,4]$  & 479 & 243 & $\approx$ 2.189825 \\
    2 & $1_7[1,2,5]$  & 479 & 243 & $\approx$ 2.159796 \\
    2 & $1_7[1,3,5]$  & 479 & 243 & $\approx$ 2.235299 \\
    2 & $1_8[1,2,3]$  & 1155 & 729 & $\approx$ 1.933325 \\
    2 & $1_8[1,2,4]$  & 1155 & 729 & $\approx$ 2.206921 \\
    2 & $1_8[1,2,5]$  & 1155 & 729 & $\approx$ 2.195117 \\
    2 & $1_8[1,3,5]$  & 1155 & 729 & $\approx$ 2.264221 \\
    2 & $1_8[1,3,6]$  & 1155 & 729 & $\approx$ 2.315401 \\
    2 & $1_9[1,2,3]$  & 2787 & 2187 & $\approx$ 1.935014 \\
    2 & $1_9[1,2,4]$  & 2787 & 2187 & $\approx$ 2.213945 \\
    2 & $1_9[1,2,5]$  & 2787 & 2187 & $\approx$ 2.209698 \\
    2 & $1_9[1,2,6]$  & 2787 & 2187 & $\approx$ 2.207776 \\
    2 & $1_9[1,3,5]$  & 2787 & 2187 & $\approx$ 2.277630 \\
    2 & $1_9[1,3,6]$  & 2787 & 2187 & $\approx$ 2.342016 \\
    2 & $1_9[1,4,7]$  & 2787 & 2187 & $\approx$ 2.334910 \\
    2 & $1_{10}[1,2,3]$  & 6727 & 6561 & $\approx$ 1.936494 \\
    2 & $1_{10}[1,2,4]$ & 6727 & 6561 & $\approx$ 2.216879 \\
    2 & $1_{10}[1,2,5]$ & 6727 & 6561 & $\approx$ 2.215781 \\
    2 & $1_{10}[1,2,6]$ & 6727 & 6561 & $\approx$ 2.222741 \\
    2 & $1_{10}[1,3,5]$ & 6727 & 6561 & $\approx$ 2.282993 \\
    2 & $1_{10}[1,3,6]$ & 6727 & 6561 & $\approx$ 2.354501 \\
    2 & $1_{10}[1,3,7]$ & 6727 & 6561 & $\approx$ 2.367241 \\
    2 & $1_{10}[1,4,7]$ & 6727 & 6561 & $\approx$ 2.363205 \\
\bottomrule
\end{tabular}

\end{minipage} \hfill
\begin{minipage}{0.5\textwidth}
\begin{tabular}{rc|lll}
\toprule
$k$ & Case & $\lvert\Omega_B\rvert$ & $\lvert\Omega_{\partial B}\rvert$ & $\max E_{B, v}$  \\
\midrule
    3 & $1_3[1]$    & 22 & 64 & $\approx$ 1.600000 \\
    3 & $1_4[1]$    & 54 & 256 & $\approx$ 1.789474 \\
    3 & $1_5[1]$    & 134 & 1024 & $\approx$ 1.804348 \\
    3 & $1_6[1]$    & 340 & 4096 & $\approx$ 1.831905 \\
    3 & $1_7[1]$    & 872 & 16384 & $\approx$ 1.840096 \\
    3 & $1_8[1]$    & 2254 & 65536 & $\approx$ 1.845752 \\
    3 & $1_9[1]$    & 5854 & 262144 & $\approx$ 1.847792 \\
    3 & $1_{10}[1]$   & 15250 & 1048576 & $\approx$ 1.848706 \\
    3 & $1_3[1,2]$   & 22 & 16 & $\approx$ 2.000000 \\
    3 & $1_4[1,2]$   & 54 & 64 & $\approx$ 2.142857 \\
    3 & $1_4[1,3]$   & 54 & 64 & $\approx$ 2.000000 \\
    3 & $1_5[1,2]$   & 134 & 256 & $\approx$ 2.546154 \\
    3 & $1_5[1,3]$   & 134 & 256 & $\approx$ 2.615385 \\
    3 & $1_6[1,2]$   & 340 & 1024 & $\approx$ 2.577465 \\
    3 & $1_6[1,3]$   & 340 & 1024 & $\approx$ 2.826087 \\
    3 & $1_6[1,4]$   & 340 & 1024 & $\approx$ 3.216327 \\
    3 & $1_7[1,2]$   & 872 & 4096 & $\approx$ 2.624362 \\
    3 & $1_7[1,3]$   & 872 & 4096 & $\approx$ 2.818584 \\
    3 & $1_7[1,4]$   & 872 & 4096 & $\approx$ 3.324534 \\
    3 & $1_8[1,2]$   & 2254 & 16384 & $\approx$ 2.635011 \\
    3 & $1_8[1,3]$   & 2254 & 16384 & $\approx$ 2.856485 \\
    3 & $1_8[1,4]$   & 2254 & 16384 & $\approx$ 3.403828 \\
    3 & $1_8[1,5]$   & 2254 & 16384 & $\approx$ 3.633238 \\
    3 & $1_9[1,2]$   & 5854 & 65536 & $\approx$ 2.641197 \\
    3 & $1_9[1,3]$   & 5854 & 65536 & $\approx$ 2.863505 \\
    3 & $1_9[1,4]$   & 5854 & 65536 & $\approx$ 3.426333 \\
    3 & $1_9[1,5]$   & 5854 & 65536 & $\approx$ 3.626901 \\
    3 & $1_{10}[1,2]$  & 15250 & 262144 & $\approx$ 2.643553 \\
    3 & $1_{10}[1,3]$  & 15250 & 262144 & $\approx$ 2.868846 \\
    3 & $1_{10}[1,4]$  & 15250 & 262144 & $\approx$ 3.438075 \\
    3 & $1_{10}[1,5]$  & 15250 & 262144 & $\approx$ 3.651213 \\
    3 & $1_{10}[1,6]$  & 15250 & 262144 & $\approx$ 3.620821 \\
    3 & $1_3[1,2,3]$  & 22 & 4 & $\approx$ 3.000000 \\
    3 & $1_4[1,2,3]$  & 54 & 16 & $\approx$ 2.964286 \\
    3 & $1_5[1,2,3]$  & 134 & 64 & $\approx$ 2.966667 \\
    3 & $1_5[1,2,4]$  & 134 & 64 & $\approx$ 2.982759 \\
    3 & $1_6[1,2,3]$  & 340 & 256 & $\approx$ 3.110112 \\
    3 & $1_6[1,2,4]$  & 340 & 256 & $\approx$ 3.200000 \\
    3 & $1_6[1,3,5]$  & 340 & 256 & $\approx$ 3.000000 \\
    3 & $1_7[1,2,3]$  & 872 & 1024 & $\approx$ 3.164596 \\
    3 & $1_7[1,2,4]$  & 872 & 1024 & $\approx$ 3.525963 \\
    3 & $1_7[1,2,5]$  & 872 & 1024 & $\approx$ 3.871287 \\
    3 & $1_7[1,3,5]$  & 872 & 1024 & $\approx$ 3.617647 \\
    3 & $1_8[1,2,3]$  & 2254 & 4096 & $\approx$ 3.194189 \\
    3 & $1_8[1,2,4]$  & 2254 & 4096 & $\approx$ 3.535367 \\
    3 & $1_8[1,2,5]$  & 2254 & 4096 & $\approx$ 4.042553 \\
    3 & $1_8[1,3,5]$  & 2254 & 4096 & $\approx$ 3.831169 \\
    3 & $1_8[1,3,6]$  & 2254 & 4096 & $\approx$ 4.222115 \\
    3 & $1_9[1,2,3]$  & 5854 & 16384 & $\approx$ 3.202434 \\
    3 & $1_9[1,2,4]$  & 5854 & 16384 & $\approx$ 3.578054 \\
    3 & $1_9[1,2,5]$  & 5854 & 16384 & $\approx$ 4.114039 \\
    3 & $1_9[1,2,6]$  & 5854 & 16384 & $\approx$ 4.387895 \\
    3 & $1_9[1,3,5]$  & 5854 & 16384 & $\approx$ 3.820580 \\
    3 & $1_9[1,3,6]$  & 5854 & 16384 & $\approx$ 4.332220 \\
    3 & $1_9[1,4,7]$  & 5854 & 16384 & $\approx$ 4.846281 \\
    3 & $1_{10}[1,2,3]$ & 15250 & 65536 & $\approx$ 3.206722 \\
    3 & $1_{10}[1,2,4]$ & 15250 & 65536 & $\approx$ 3.585695 \\
    3 & $1_{10}[1,2,5]$ & 15250 & 65536 & $\approx$ 4.139175 \\
    3 & $1_{10}[1,2,6]$ & 15250 & 65536 & $\approx$ 4.390973 \\
    3 & $1_{10}[1,3,5]$ & 15250 & 65536 & $\approx$ 3.859528 \\
    3 & $1_{10}[1,3,6]$ & 15250 & 65536 & $\approx$ 4.408656 \\
    3 & $1_{10}[1,3,7]$ & 15250 & 65536 & $\approx$ 4.648191 \\
    3 & $1_{10}[1,4,7]$ & 15250 & 65536 & $\approx$ 5.051252 \\
\bottomrule
\end{tabular}
\end{minipage}
\end{table}

\begin{table} 
\caption{Block divergence for blocks of type~$2$.}
\medskip
\label{tbl:block_divergence_type2}

\fontsize{6}{7} \selectfont

\begin{minipage}{0.495\textwidth}
\begin{tabular}{rc|lll}
\toprule
$k$ & Type & $\lvert\Omega_B\rvert$ & $\lvert\Omega_{\partial B}\rvert$ & $\max E_{B, v}$ \\
\midrule
    2 & $2[1]$   & 1393 & 59049 & $\approx$ 0.706599 \\
    2 & $2[2]$   & 1393 & 59049 & $\approx$ 0.782333 \\
    2 & $2[3]$   & 1393 & 59049 & $\approx$ 0.792046 \\
    2 & $2[4]$   & 1393 & 59049 & $\approx$ 0.797591 \\
    2 & $2[1,2]$  & 1393 & 19683 & $\approx$ 1.412481 \\
    2 & $2[1,3]$  & 1393 & 19683 & $\approx$ 1.411765 \\
    2 & $2[1,4]$  & 1393 & 19683 & $\approx$ 1.477771 \\
    2 & $2[1,5]$  & 1393 & 19683 & $\approx$ 1.491765 \\
    2 & $2[1,6]$  & 1393 & 19683 & $\approx$ 1.493956 \\
    2 & $2[1,7]$  & 1393 & 19683 & $\approx$ 1.486392 \\
    2 & $2[1,8]$  & 1393 & 19683 & $\approx$ 1.412481 \\
    2 & $2[2,3]$  & 1393 & 19683 & $\approx$ 1.411765 \\
    2 & $2[2,4]$  & 1393 & 19683 & $\approx$ 1.473715 \\
    2 & $2[2,5]$  & 1393 & 19683 & $\approx$ 1.541176 \\
    2 & $2[2,6]$  & 1393 & 19683 & $\approx$ 1.557661 \\
    2 & $2[2,7]$  & 1393 & 19683 & $\approx$ 1.557944 \\
    2 & $2[3,4]$  & 1393 & 19683 & $\approx$ 1.411765 \\
    2 & $2[3,5]$  & 1393 & 19683 & $\approx$ 1.540578 \\
    2 & $2[3,6]$  & 1393 & 19683 & $\approx$ 1.544729 \\
    2 & $2[4,5]$  & 1393 & 19683 & $\approx$ 1.417639 \\
    2 & $2[1,2,3]$ & 1393 & 6561 & $\approx$ 1.911765 \\
    2 & $2[1,2,4]$ & 1393 & 6561 & $\approx$ 2.065098 \\
    2 & $2[1,2,5]$ & 1393 & 6561 & $\approx$ 2.152505 \\
    2 & $2[1,2,6]$ & 1393 & 6561 & $\approx$ 2.180995 \\
    2 & $2[1,2,7]$ & 1393 & 6561 & $\approx$ 2.185449 \\
    2 & $2[1,2,8]$ & 1393 & 6561 & $\approx$ 2.115907 \\
    2 & $2[1,3,4]$ & 1393 & 6561 & $\approx$ 2.011765 \\
    2 & $2[1,3,5]$ & 1393 & 6561 & $\approx$ 2.138765 \\
    2 & $2[1,3,6]$ & 1393 & 6561 & $\approx$ 2.161765 \\
    2 & $2[1,3,7]$ & 1393 & 6561 & $\approx$ 2.176471 \\
    2 & $2[1,3,8]$ & 1393 & 6561 & $\approx$ 2.113422 \\
    2 & $2[1,4,5]$ & 1393 & 6561 & $\approx$ 2.085655 \\
    2 & $2[1,4,6]$ & 1393 & 6561 & $\approx$ 2.212074 \\
    2 & $2[1,4,7]$ & 1393 & 6561 & $\approx$ 2.219646 \\
    2 & $2[1,4,8]$ & 1393 & 6561 & $\approx$ 2.171541 \\
    2 & $2[1,5,6]$ & 1393 & 6561 & $\approx$ 2.101420 \\
    2 & $2[1,5,7]$ & 1393 & 6561 & $\approx$ 2.164148 \\
    2 & $2[1,5,8]$ & 1393 & 6561 & $\approx$ 2.171541 \\
    2 & $2[1,6,7]$ & 1393 & 6561 & $\approx$ 2.111765 \\
    2 & $2[1,6,8]$ & 1393 & 6561 & $\approx$ 2.113422 \\
    2 & $2[1,7,8]$ & 1393 & 6561 & $\approx$ 2.115907 \\
    2 & $2[2,3,4]$ & 1393 & 6561 & $\approx$ 1.911765 \\
    2 & $2[2,3,5]$ & 1393 & 6561 & $\approx$ 2.138220 \\
    2 & $2[2,3,6]$ & 1393 & 6561 & $\approx$ 2.150153 \\
    2 & $2[2,3,7]$ & 1393 & 6561 & $\approx$ 2.171258 \\
    2 & $2[2,4,5]$ & 1393 & 6561 & $\approx$ 2.085655 \\
    2 & $2[2,4,6]$ & 1393 & 6561 & $\approx$ 2.203284 \\
    2 & $2[2,4,7]$ & 1393 & 6561 & $\approx$ 2.213514 \\
    2 & $2[2,5,6]$ & 1393 & 6561 & $\approx$ 2.139037 \\
    2 & $2[2,5,7]$ & 1393 & 6561 & $\approx$ 2.213514 \\
    2 & $2[2,6,7]$ & 1393 & 6561 & $\approx$ 2.171258 \\
    2 & $2[3,4,5]$ & 1393 & 6561 & $\approx$ 1.917639 \\
    2 & $2[3,4,6]$ & 1393 & 6561 & $\approx$ 2.148560 \\
    2 & $2[3,5,6]$ & 1393 & 6561 & $\approx$ 2.148560 \\
\bottomrule  
\end{tabular}
             
\end{minipage} \hfill
\begin{minipage}{0.5\textwidth}
\begin{tabular}{rc|lll}
\toprule     
$k$ & Type & $\lvert\Omega_B\rvert$ & $\lvert\Omega_{\partial B}\rvert$ & $\max E_{B, v}$  \\
\midrule     
    3 & $2[1]$   & 3194 & 1048576 & $\approx$ 1.190238 \\
    3 & $2[2]$   & 3194 & 1048576 & $\approx$ 1.704828 \\
    3 & $2[3]$   & 3194 & 1048576 & $\approx$ 1.814364 \\
    3 & $2[4]$   & 3194 & 1048576 & $\approx$ 1.826600 \\
    3 & $2[1,2]$  & 3194 & 262144 & $\approx$ 1.834042 \\
    3 & $2[1,3]$  & 3194 & 262144 & $\approx$ 2.169631 \\
    3 & $2[1,4]$  & 3194 & 262144 & $\approx$ 2.726797 \\
    3 & $2[1,5]$  & 3194 & 262144 & $\approx$ 2.971795 \\
    3 & $2[1,6]$  & 3194 & 262144 & $\approx$ 2.961434 \\
    3 & $2[1,7]$  & 3194 & 262144 & $\approx$ 2.881866 \\
    3 & $2[1,8]$  & 3194 & 262144 & $\approx$ 2.374306 \\
    3 & $2[2,3]$  & 3194 & 262144 & $\approx$ 2.417989 \\
    3 & $2[2,4]$  & 3194 & 262144 & $\approx$ 2.702039 \\
    3 & $2[2,5]$  & 3194 & 262144 & $\approx$ 3.341018 \\
    3 & $2[2,6]$  & 3194 & 262144 & $\approx$ 3.484726 \\
    3 & $2[2,7]$  & 3194 & 262144 & $\approx$ 3.368019 \\
    3 & $2[3,4]$  & 3194 & 262144 & $\approx$ 2.604027 \\
    3 & $2[3,5]$  & 3194 & 262144 & $\approx$ 2.822416 \\
    3 & $2[3,6]$  & 3194 & 262144 & $\approx$ 3.324534 \\
    3 & $2[4,5]$  & 3194 & 262144 & $\approx$ 2.596933 \\
    3 & $2[1,2,3]$ & 3194 & 65536 & $\approx$ 2.470549 \\
    3 & $2[1,2,4]$ & 3194 & 65536 & $\approx$ 2.825476 \\
    3 & $2[1,2,5]$ & 3194 & 65536 & $\approx$ 3.323843 \\
    3 & $2[1,2,6]$ & 3194 & 65536 & $\approx$ 3.603125 \\
    3 & $2[1,2,7]$ & 3194 & 65536 & $\approx$ 3.491267 \\
    3 & $2[1,2,8]$ & 3194 & 65536 & $\approx$ 3.008458 \\
    3 & $2[1,3,4]$ & 3194 & 65536 & $\approx$ 2.860181 \\
    3 & $2[1,3,5]$ & 3194 & 65536 & $\approx$ 3.149485 \\
    3 & $2[1,3,6]$ & 3194 & 65536 & $\approx$ 3.702780 \\
    3 & $2[1,3,7]$ & 3194 & 65536 & $\approx$ 3.832265 \\
    3 & $2[1,3,8]$ & 3194 & 65536 & $\approx$ 3.308712 \\
    3 & $2[1,4,5]$ & 3194 & 65536 & $\approx$ 3.422874 \\
    3 & $2[1,4,6]$ & 3194 & 65536 & $\approx$ 3.704301 \\
    3 & $2[1,4,7]$ & 3194 & 65536 & $\approx$ 4.225000 \\
    3 & $2[1,4,8]$ & 3194 & 65536 & $\approx$ 3.865275 \\
    3 & $2[1,5,6]$ & 3194 & 65536 & $\approx$ 3.721368 \\
    3 & $2[1,5,7]$ & 3194 & 65536 & $\approx$ 3.838751 \\
    3 & $2[1,5,8]$ & 3194 & 65536 & $\approx$ 3.865275 \\
    3 & $2[1,6,7]$ & 3194 & 65536 & $\approx$ 3.557971 \\
    3 & $2[1,6,8]$ & 3194 & 65536 & $\approx$ 3.308712 \\
    3 & $2[1,7,8]$ & 3194 & 65536 & $\approx$ 3.008458 \\
    3 & $2[2,3,4]$ & 3194 & 65536 & $\approx$ 3.040346 \\
    3 & $2[2,3,5]$ & 3194 & 65536 & $\approx$ 3.373874 \\
    3 & $2[2,3,6]$ & 3194 & 65536 & $\approx$ 3.853886 \\
    3 & $2[2,3,7]$ & 3194 & 65536 & $\approx$ 4.066123 \\
    3 & $2[2,4,5]$ & 3194 & 65536 & $\approx$ 3.396416 \\
    3 & $2[2,4,6]$ & 3194 & 65536 & $\approx$ 3.676782 \\
    3 & $2[2,4,7]$ & 3194 & 65536 & $\approx$ 4.215385 \\
    3 & $2[2,5,6]$ & 3194 & 65536 & $\approx$ 4.042553 \\
    3 & $2[2,5,7]$ & 3194 & 65536 & $\approx$ 4.215385 \\
    3 & $2[2,6,7]$ & 3194 & 65536 & $\approx$ 4.066123 \\
    3 & $2[3,4,5]$ & 3194 & 65536 & $\approx$ 3.164613 \\
    3 & $2[3,4,6]$ & 3194 & 65536 & $\approx$ 3.535367 \\
    3 & $2[3,5,6]$ & 3194 & 65536 & $\approx$ 3.535367 \\
\bottomrule  
\end{tabular}
             
\end{minipage}
\end{table}  
             
\end{document}